\newtheorem{lem}{Lemma}[section]
\newtheorem{thm}[lem]{Theorem}
\newtheorem{prop}[lem]{Proposition}
\newtheorem{cor}[lem]{Corollary} 
\theoremstyle{definition}
\newtheorem{defn}[lem]{Definition}
\newtheorem{example}[lem]{Example}
\newtheorem{rem}[lem]{Remark}
\newcommand{\CC}{{\mathbb C}}
\newcommand{\FF}{{\mathbb F}}
\newcommand{\HH}{{\mathbb H}}
\newcommand{\RR}{{\mathbb R}}
\newcommand{\Fcal}{\mathcal{F}}
\def\benm{\begin{enumerate}}            % Begin enumerate command
\def\eenm{\end{enumerate}}              % End enumerate command
\newcommand{\norm}[1]{\left\Vert #1\right\Vert}         % Norm command (1 argument)
\newcommand{\inner}[2]{\langle {#1,#2}\rangle}    % Inner Product (2 arguments)
\title{Reactive sensing and multiplicative frame super-resolution}
\date{\today}
\subjclass[2010]{42C15}
\keywords{Radiative, dominant, and harmonious
sensing scenarios; multiplicative frames; DFT; super-resolution; dimension reduction}
\author{John J. Benedetto}
\address{Norbert Wiener Center\\
         Department of Mathematics \\
         University of Maryland \\
         College Park, MD 20742 \\
         USA}
\email{jjb@umd.edu}
\urladdr{http://www.math.umd.edu/\textasciitilde jjb}
\author{Michael R. Dellomo}
\address{Norbert Wiener Center\\
         Department of Mathematics \\
         University of Maryland \\
         College Park, MD 20742 \\
         USA}
\email{mdellomo@umd.edu}
\begin{document}

\newcounter{bean}

%\begin{abstract}
%The problem is to evaluate the behavior of an object 
%when primary sources of
%information about the object become unavailable, so that any information must be obtained from
%the intelligent use of available secondary sources. This evaluative process is
%{\it reactive sensing}. Reactive sensing is initially viewed in terms of spatial {\it super-resolution}.
% in the
%following way. The secondary sources are analogous to multiple low resolution (LR) images
%that combine to evaluate the behavior of the object, just as actual multiple LR images
%can sometimes combine to provide a desired high resolution (HR) image.

%The theory of reactive sensing is based on two 
%ultimately 
%equivalent ideas,
%one physical and one mathematical. In the example of engine health, the available
%physical components are inherent engine volume and the sensitivity of sensors to such volume.
%This leads to a natural factorization of these components in a given sensing scenario. In 
%order to solve such sensing problems, especially in the context of noise and disabled sensors,
%the mathematical theory of multiplicative frames is developed. The physical notions
%of radiativity and dominance in a sensing scenario are essential for the mathematical
%construction of multiplicative frames.

%With this setting, as well as accompanying and necessary dimension reduction
%technology, 
%Both theory and algorithm construction are developed for effective
%resolution of reactive sensing problems.
% in a host of applications.

%\end{abstract}

\begin{abstract}
The problem is to evaluate the behavior of an object 
when primary sources of
information about the object become unavailable, so that any information must be obtained from
the intelligent use of available secondary sources. This evaluative process is
{\it reactive sensing}. Reactive sensing is initially viewed in terms of spatial {\it super-resolution}.
% in the
%following way. The secondary sources are analogous to multiple low resolution (LR) images
%that combine to evaluate the behavior of the object, just as actual multiple LR images
%can sometimes combine to provide a desired high resolution (HR) image.
The theory of reactive sensing is based on two 
%ultimately 
equivalent ideas,
one physical and one mathematical. 
The {\it physical idea} models volume, e.g., engine volume
in the case of analyzing engine health, 
%In the example of engine health, the available
%physical components are inherent engine volume 
and the sensitivity of sensors to such volume.
The {\it mathematical idea} of multiplicative frames
provides the factorization theory to compare quantitatively
such volume and sensitivity. This equivalence is the
foundation for reactive sensing theory and its implementation.

%This leads to a natural factorization of these components in a given sensing scenario. In 
%order to solve such sensing problems, especially in the context of noise and disabled sensors,
%the mathematical theory of multiplicative frames is developed. The physical notions
%of radiativity and dominance in a sensing scenario are essential for the mathematical
%construction of multiplicative frames.

%With this setting, as well as accompanying and necessary dimension reduction
%technology, 
%Both theory and algorithm construction are developed for effective
%resolution of reactive sensing problems.
% in a host of applications.

%\vspace{.2in}
%\noindent Keywords: Radiative, harmonious sensing scenarios; multiplicative frames; DFT; super-resolution; dimension reduction.

\end{abstract}

\maketitle

%\newcounter{bean}

%%%%%%%%%%%%%%%%%%%%%%%%%%%%%%%%%%%%%%%%%%
%%%%%%%%%%%%%%%%%%%%%%%%%%%%%%%%%%%%%%%%%%
%%%%%%%%%%%%%%%%%%%%%%%%%%%%%%%%%%%%%%%%%

\section{Introduction}
\label{sec:intro}

%%%%%%%%%%%%%%%%%%%%%%%%%%%%%%%%%%%%%%%%%%%
%%%%%%%%%%%%%%%%%%%%%%%%%%%%%%%%%%%%%%%%%%%

\subsection{Background and modeling}
\label{sec:back}

Sensing problems such as those
dealing with RADAR, SONAR, and general engine health, in the context of disabled primary
sensors and in noisy environments, led us to the theory of {\it reactive sensing}. Our 
formulation of this theory necessitates the understanding of secondary sensors in evaluating
primary objects. With this point of view, we see that reactive sensing can be thought of in terms
of spatial super-resolution, e.g., \cite{dono1992}, \cite{chau2001}, \cite{ParParKan2003}, 
\cite{PusKne2005}, \cite{gree2009}, \cite{lind2012}. (This is in contrast to
recent advances in spectral super-resolution, e.g., \cite{CanFer2013}, \cite{CanFer2014}.\cite{BenLi-2018}.) 
In fact, the secondary 
sensors can be 
considered analogous to
the role of obtaining a high resolution (HR) image from observed multiple low-resolution (LR) images. 
In this case, the multiple LR images represent different ``snapshots" of the same scene, and can be combined
to give the desired HR image, see \cite{chau2001}, Chapter 4. In our case, the LR images correspond to secondary
sensors and the HR image corresponds to the primary object, that could be disabled or whose primary sensor
is not functioning.

Notwithstanding this significant connection between our formulation of reactive sensing
and spatial super-resolution, we do not use any of the usual super-resolution 
methodology in developing our theory.
For example, in the case of engine health, the
understanding of secondary sensors is based on a {\it physical modeling} of 
inherent engine volume or vibration and the sensitivity of sensors to such volume,
%WHICH SUBSECTION(S), REMARKS, ETC. SHOULD I REFER TO HRE?  
%Our fundamental
%mathematical tool for this modeling is called a {\it multiplicative frame}. 
%The 
%aforementioned evaluation must be effected in a computationally
%viable manner. 
We show that this physical modeling is equivalent to a {\it mathematical modeling}
formulated in terms of what we call multiplicative frames.
This also leads us to the concept of
{\it sensing scenario health space} determined by data dependent dimension
reduction. 
%The dimension reduction method herein is essentially 
%equivalent to PCA.

%Our general theory accommodates non-linear dimension reduction
%technology, and some of the many examples we cite will require more sophisticated 
%dimension reduction methods than PCA or even robust PCA. The 
%non-linear dimension reduction technology we have in mind is associated with remote sensing, and includes new methods 
%formulated in terms of Schr{\"o}dinger operators with barrier potentials
%and wavelet packet spatial-spectral methods. As such, we have included a short Appendix 
%indicating what we have in mind; also, see  Remark \ref{rem:health} and Subsection \ref{sec:detection}.

%%%%%%%%%%%%%%%%%%%%%%%%%%%%%%%%%%%%%%%%%%
%%%%%%%%%%%%%%%%%%%%%%%%%%%%%%%%%%%%%%%%%%

\subsection{Idea and techniques}
\label{sec:idea}
Our main objective 
in reactive sensing is to evaluate the behavior of an object when primary sources of
information about the object become unavailable so that any information must be obtained from
the intelligent use of available secondary sources. 
%As noted in Subsection \ref{sec:back}, we
 %refer to this evaluative process as
%{\it reactive sensing}. 
For example, if our object is an engine evaluated by a sensor and the sensor
is disabled, we wish to quantify to what extent neighboring sensors, that have primary tasks of
their own, can evaluate the behavior of the engine.
%see the description of super-resolution
%in Subsection \ref{sec:back}.

%There are several means to model this situation for objects in a large class of scenarios, but
%effective resolution has pointed us to a mathematical direction with the promise of
%optimal, 
%. In fact, t
The idea and techniques
we shall introduce to understand reactive sensing, at the level of 
obtaining quantifiable and computationally useful results,
%solving basic
%problems, 
involve an interleaving of the following:

\begin{itemize}
\item The theory of frames, see Section \ref{sec:frames} and Subsection \ref{sec:bmfm}; and
%\item Dimension reduction, see Section \ref{sec:dimred}; and
\item Sensing scenario health space, see Section \ref{sec:mathmodel}.
\end{itemize}

The {\it theory of frames} has a long history, based on the work of Paley-Wiener, Beurling,
and Henry Landau, and was explicitly developed by Duffin and Schaeffer (1952). In recent decades
there has been an explosion of activity related to the emergence of Gabor (time-frequency) and
wavelet theories, as well as to applicability in the context of noise reduction, robust signal
decomposition, and numerical stability. See \cite{daub1992},
\cite{BenFra1994}, Chapters 3 and 7, \cite{chri2016}, and \cite{KovChe2007a}, 
\cite{KovChe2007b}.
%The theory of frames provides a generalization of basis representation of signals in the case
%that the harmonics form an over-complete system. This has proven to be an effective
%mathematical modeling theory when dealing with noise reduction, as well as stable and
%robust signal representation, 
The interaction of frame theory with the
data sets we
%we construct/access andwe
analyze 
%associated with
%sensor network scenarios 
has led us to develop a {\it theory of multiplicative frames}.
%In fact, our initial, main observation was noting the centrality of 
%combining the {\it volume} of  a scene and the {\it sensitivity} of sensors
%evaluating that volume. This led to the idea of factoring sensor data, and then
%to model this factorization in terms of multiplicative frames.

%{\it Dimension reduction} methods allow us
%to deal with large amounts
%of sensor data.
%In fact, dimension reduction technology led to our formulation of {\it health space}
%as a viable, controllable, informative space in which to achieve frame reconstruction in order to
%assess the status of a given sensing scenario.

{\it Health space} can be described functionally as a space where decision
making algorithms operate to provide required data analysis. Its
dimension will be given by the number of parameters required to
perform the analysis. 
%For example, if a machine is considered to be
%operating correctly provided that the frequency of the dominant
%spectral line is in a certain range, then health space would be $\mathbb R,$ and
%a simple detector can operate by testing that the value of the one
%parameter (the frequency) is in the proper range. 
The name health
space is used because of questions about machine health, but the same notion can
be applied in other detection and 
classification contexts, see
Definition \ref{defn:sep} and Remark \ref{rem:health}
for specifics on our point of view.
%and how all of this is inextricably 
%related to role of dimension reduction technology.

%JB: THE FOLLOWING PARAGRAPH SHOULD BE CLARIFIED OR SHORTENED.

%The {\it transference} from the data-motivated physical
%modeling in terms of multiplicative {\it frames} to {\it health space} by means of {\it dimension reduction} mappings
%is only meaningful since signal reconstruction, e.g., evaluation of engine health, is possible
%using such mappings originating in data dependent weighted graphs. There is a play-off between the
%desired robustness inherent in frame modeling and the computational efficiency afforded by
%dimension reduction. The main objective, then, is to solve sensing problems
%in terms of the data driven mathematical theory of multiplicative frames and data dependent
%non-linear dimension reduction.

%The aforementioned ideas and techniques  led to our {\it theme} and {\it setting}.

%%%%%%%%%%%%%%%%%%%%%%
%%%%%%%%%%%%%%%%%%%%%

\subsection{Theme and implementation}
\label{sec:theme}

A natural strategy inherent in implementing the idea and technique
of Subsection \ref{sec:idea}
is to choose an efficient method of finding
relevant data and discarding redundant information and noise.
While this
paradigm may work well when all of the data streams are available and all of
the sensors are operating, problems can arise when some
of the data streams are unavailable because of various
possible failures.  The {\it theme} of {\it reactive
sensing} is to construct {\it mappings} of the data streams that are
robust under sensor failure.
%and then to formulate the trade off between the
%efficiency of the dimension reduction {\it mappings} and their robustness
%with regard to quantifying the health of the entire system. In fact,
%our key step in dimension reduction is a {\it mapping} into health
%space, see Section \ref{sec:mathmodel}. 
%as well as our 
%aforementioned description.
A critical aspect of our mathematical
approach
for this construction, necessitating the use 
of multiplicative frames, is that
an individual sensor may be capable of
reporting on parameters which are not its primary 
responsibility. 

Our theme leads to a genuine implementation in Section \ref{sec:datasim} by the 
following process. Once we have shown the equivalence of our physical modeling 
with multiplicative frames
(Sections \ref{sec:mathmodel} and \ref{sec:multframe}), we define
{\it frame mappings} (Definition \ref{defn:BFmappings} b) which allow us to
prove fundamental theorems on the existence of multiplicative frames in Section \ref{sec:rst}.
The definition is technical but motivated by the inherent over-completeness 
of frames. The definition itself and these theorems are essential for the quantitative results
in Section \ref{sec:datasim}, as well as in a host of other applications.

%Our {\it goals} are to quantify the theme of Subsection \ref{sec:theme}
%by establishing a significant mathematical theory in which the
%corresponding mappings are grounded, giving
%way to a viable, natural, implementable vehicle for the data analysis associated with the setting
%of Subsection \ref{sec:theme}. In fact, we view this interaction between theory and
%implementation as a feedback loop since our theoretical formulation thus far
%has gone hand-in-hand with our construction of a data base routed in public domain
%Boeing, GE, and Pratt-Whitney jet engine specifications, see \cite{geav}, \cite{yoon2001},
%\cite{henc2012}.

%%%%%%%%%%%%%%%%%%%%
%%%%%%%%%%%%%%%%%%%%%

%%%%%%%%%%%%%%%%%%%%%%%%%%%%%%%%%%%%%%
%%%%%%%%%%%%%%%%%%%%%%%%%%%%%%%%%%%%%%%%
\subsection{Outline}
\label{sec:outline}

We begin 
in Section \ref{sec:ex} with some relevant examples.
These examples provide the backdrop for the mathematical model we 
formulate in Section \ref{sec:mathmodel} for the physical setting
of a reactive sensing scenario.

It is
in Section \ref{sec:mathmodel} that we introduce the notions of a separable sensing
scenario and health space, that are essential for a useful theory of
reactive sensing. Further, the physical notions of radiativity and dominance are quantified
in terms of the notions of volume and sensitivity used in the
definition of the model. The examples in Section
\ref{sec:ex} and the consequent model of Section \ref{sec:mathmodel}
lead to the theory of frames
as a natural tool for effective reactive sensing. 

Subsection \ref{sec:frame} gives a
background on frames, with a comparison to bases, in the context of our setting.
It is in Subsection \ref{sec:multframe} that we introduce multiplicative frames. This is a
mathematical concept directly formulated because of the physical implications of the
model of Section \ref{sec:mathmodel}, see especially 
the paragraph before Definition \ref{defn:sep}, Definition \ref{defn:sep} itself,
Example \ref{ex:dft}, Remark \ref{rem:physalpha}, 
Remark \ref{rem:rad}, and Remark \ref{rem:primrespnonop}. In fact, our 
mathematical constructions of multiplicative frames in Theorem \ref{prop:mf1}, 
Corollary \ref{prop:mf2},
and Theorem \ref{thm:frame}, that 
of themselves are
independent of reactive sensing, {\it depend essentially} on the reactive
sensing ideas of radiativity and dominance 
defined in Definition \ref{defn:raddom}.

Section \ref{sec:rst} presents the theory of reactive sensing in which multiplicative frames 
play an essential role. It is here that we define and analyze basis and frame mappings.
In the context of super-resolution, individual LR sensor outputs that are combined
by means of basis mappings do not improve resolution, i.e., do not lead to reconstructing 
the primary HR object, described in Subsection \ref{sec:back}. On the other hand, frame
mappings, by their over-complete nature, piece together LR sensor outputs to optimize the chances 
of such HR reconstruction.

The DFT plays a fundamental role in our exposition and for our explanations.
As such, we 
give DFT examples throughout as various notions are introduced.
In particular, we give a simplified version of the turbine assembly example,
Example \ref{ex:turbine}, in terms of the DFT.

In Subsection \ref{sec:data} we construct a data base, and in 
Subsection \ref{sec:sim}, we give DFT turbine simulations 
that give proof of concept of our reactive sensing theory 
using this data base. 
%These results are integrated to provide
%the conclusions derived in Section \ref{sec:datasim}.
%and of the
%data base constructed in Subsection \ref{sec:data}. We use this data base to provide a 
%straightforward and illustrative quantitative example of our theory of
%reactive sensing. 
%Applying our theory of 
%reactive sensing and multiplicative frames to a simple data base 
%(Subsection \ref{sec:data}) gives
%rise to our  $DFT$ turbine simulations in Subsection \ref{sec:sim}.
Using the signal-to-noise (SNR) as a metric, we
verify improvement in fault detection in the case of sensor failure. We also note 
that this improvement comes
at the expense of lowering 
SNR in a controlled fashion when all sensors are working.
This is the content of Subsection \ref{sec:snr}.
 
%MD: PROVIDE REFERENCES ABOUT SNR IN PREVIOUS PARAGRAPH.

The epilogue, Section \ref{sec:epilogue}, gives a summary of the salient features of 
reactive sensing with remarks on future tasks. 
%The appendix is a brief
%description of dimension reduction going beyond what we do herein, and it includes what
%we think will be the methodology used in more in-depth applications of reactive sensing theory.
%Also, see  Remark \ref{rem:health} and Subsection \ref{sec:detection}.

%%%%%%%%%%%%%%%%%%%%%%%%%%%%%%%%%%%%%%%%
%%%%%%%%%%%%%%%%%%%%%%%%%%%%%%%%%%%%%%%%
%%%%%%%%%%%%%%%%%%%%%%%%%%%%%%%%%%%%%%%%

\section{Examples}
\label{sec:ex}

\begin{example}[RADAR]
\label{ex:radar} Consider a collection of RADAR sites which is responsible for
detecting incoming targets in a given sensing scenario, see Figure~\ref{fig:radar}.
Doppler and range
returns for a target might be at the limit of detectability for
RADARs  $A$ and $C$ but be within the main analysis area for RADAR
$B.$ However, if RADAR $B$ is disabled, the data from RADARs $A$ and $C$ might
be combined to give a noisier but adequate return from the target.
Indeed, for the situation described in Figure~\ref{fig:radar}, a target
at $X$ can be detected by both RADARs $A$ and $B.$ While the return from
RADAR $A$ is noisier than that from $B,$ it may still be sufficient to
provide adequate detectability, see Example
\ref{ex:sonradharm} dealing with non-harmonious scenarios.
 In fact, this may allow RADAR $B$ to be
temporarily repurposed, repaired, or relocated as required, while still 
maintaining an acceptable level of coverage, see \cite{skol1980}, 
\cite{leva1988}, \cite{cole1992}, \cite{stim1998}, \cite{LevMoz2004}, \cite{HolRicSch2010}.
\end{example}
%%%%%%%%%%%%%%%%
\begin{figure}[htbp]
\centering
\includegraphics[height=0.5\paperheight]{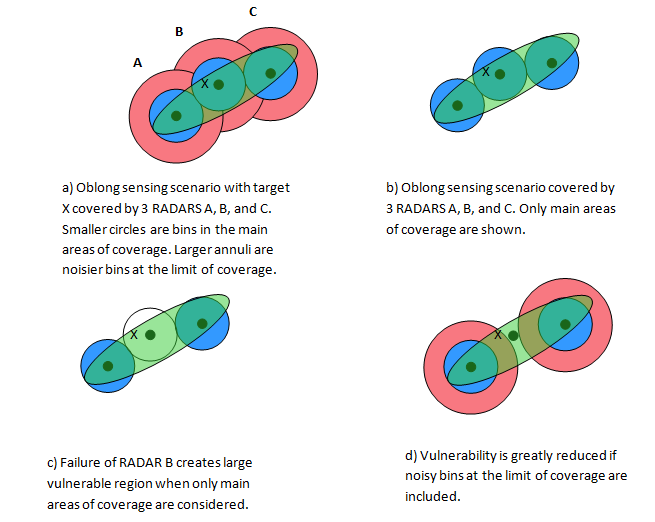}
\caption{RADAR scenario}
\label{fig:radar}
\end{figure}
%%%%%%%%%%%%%%%%%
\begin{example}[SONAR]
\label{ex:sonar} Similarly, consider a SONAR sensing scenario, where a section of
coastal waters is observed by two SONAR arrays, $A$ and $B,$ see Figure~\ref{fig:sonar}.
Each array has
main beams, where the signal strength is high compared to the noise, as
well as beams near endfire, where there is
significantly more noise, see, e.g., \cite{vant2002}.
These beams close to endfire are illustrated in 
Figure~\ref{fig:sonar} by the boundary "petals" in each collection of beams.
If we consider both of these arrays as sensors,
then array $A$ would be primarily responsible for reporting a threat in
its main beams that might be near the endfire beams of array $B$. However, in
the event of a failure in array $A,$ array $B$ could also sense the target,
although with some degradation due to the additional noise, see
Example \ref{ex:sonradharm} dealing with harmonious scenarios.
\end{example}

\begin{figure}[htbp]
\centering
\includegraphics[height=0.2\paperheight]{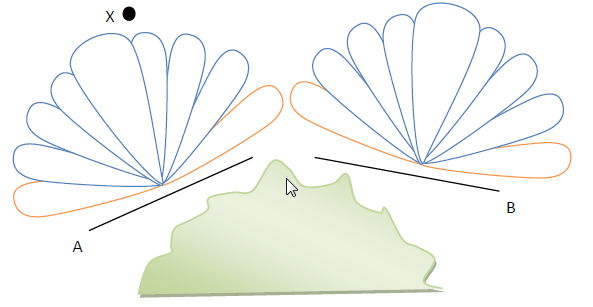}
\caption{Target $X$ as sensed by SONAR arrays $A$ and $B$}
\label{fig:sonar}
\end{figure}
%%%%%%%%%%%%%%%%%
\begin{example}[The DFT and a multi-sensor scenario]
\label{ex:turbine}
A third example concerns the mechanical health of a complex
machine. Consider a machine with several rotating turbines, each
attached to several gears and additional rotating shafts to form a
collection of turbine assemblies, e.g., \cite{dell1992}, \cite{dell1999}. Assume there are
vibration sensors
attached to each turbine assembly and that each assembly has unique
spectral characteristics. A concrete example might be a multi-engine
airplane where each of the engine's turbines is associated
with a given sensor, see Section \ref{sec:datasim}.

The sensor attached to a given turbine assembly will have primary
responsibility for the frequencies associated with that
assembly. This notion of primary responsibility will be 
quantified more precisely in Subsection \ref{sec:sepsensscen},
and takes into account that 
each sensor will also pick up vibrations from other
nearby assemblies at a substantially lower volume. If a sensor fails,
it may be possible to use data from the remaining sensors to report
characteristics of the failed sensor's turbine assembly. 

It is this point of view that led us to 
introduce the theory of frames. In fact, this type
of data
reconstruction is only possible if the fault detection scheme operates
with a frame theoretic representation of the vibration parameters. With a basis
representation, the loss of a sensor causes a loss of all data associated
with that turbine assembly. Indeed, for an airplane, it is important
to distinguish between an impact which causes the loss of a sensor and
an impact which causes the loss of an engine.

The DFT provides a natural technical tool with which to analyze this particular
multi-sensor scenario,
%Further, as noted in Section \ref{sec:outline}, we shall use the $DFT$ throughout
%as a means to illustrate our theory. 
see the DFT Example \ref{ex:dft} and Section \ref{sec:datasim}.
\end{example}

%%%%%%%%%%%%%%%%%%%%%%%%%%%%%%
%%%%%%%%%%%%%%%%%%%%%%%%%%%%%%%
%%%%%%%%%%%%%%%%%%%%%%%%%%%%%%%%

\section{ A mathematical model for reactive sensing}
\label{sec:mathmodel}
%%%%%%%%%%%%%%%%%%%%%%%%%

\subsection{Separable sensing scenario}
\label{sec:sepsensscen}
We wish to analyze the sensing problems described in 
Subsection \ref{sec:theme} and Section \ref{sec:ex} with a mathematical model. 
To this end, we let $S$ be a set, that we call a {\it set of parameters}. 
To $S$ we associate $N$ {\it sensors},
$s_j,\  j=1,...,N,$ that map subsets of $S$ to their values at
a fixed time $k = 1,...,K$. Each sensor $s_j$ is defined on some
subset $T_j$ of
$S$. To evaluate completely the
impact of the parameters in $S$, we require that
$\bigcup_{j=1}^{N}T_j = S$, i.e., the $T_j, \, j = 1, \ldots, N$, form 
a {\it covering} of $S$.

Each sensor $s_j$ will bear {\it primary
responsibility} for reporting values on some subset $S_j \subseteq
T_j$ in the sense that $S_j \bigcap S_{\ell} = \emptyset,$ where
$1\leq j,\ell \leq N$ and $j \neq \ell,$ and the values of $s_j$ on
$T_j \setminus S_j$ reflect information gathered by $s_j$ for parameters
disjoint from $S_j.$ 
%For example, in the case of engine health
%in a 2-engine system, with engines, $E_1$ and $E_2,$ with corresponding sensors,
%$s_1$ and $s_2,$ measuring vibrations, the value of $s_1$ on $S_1$ measures
%vibrational activity in $E_1,$ whereas its values on $T_1\setminus S_1$ also give information
%about  vibrational activity in $E_2$. 
We {\it assume} that the primary responsibility for each parameter is
given to one sensor, and that 
%Further, if $S_0$ is the complement
%of $\bigcup_{j=1}^{N}S_j$ in $S,$ then 
the $S_j, j=1,\ldots N,$ 
form a {\it partition} of $S$, i.e., $\bigcup_{j=1}^{N}S_j = S$ and $S_j \bigcap S_{\ell} = \emptyset,$
where
$0\leq j,\ell \leq N$ and $j \neq \ell.$

The cardinality of any set $X$ is denoted by ${\rm card}\,(X),$ 
and $M$ will denote ${\rm card}\,(S).$ 

In this formulation of a {\it sensor} $s_j$ as a mapping, we have to be precise 
about the role of {\it time} $k.$ As such, each $s_j$
assigns values to the
elements of $T_j$ at a given time index $k$, i.e., $s_j:T_j\times\{1,...,K\}
\rightarrow {\mathbb C}^M$, where we write $s_j(f,k) = v_{j,k}(f).$ Further,
if $f \notin T_j,$ then we define $s_j(f,k) = 0.$
Thus, $v_{j,k} \in {\mathbb C}^M.$ We use the notation, $f \in \{1, \ldots M\},$ 
to think of the 
frequency domain of the DFT, even though our theory is far more general.

Given a set $S$ with partition $\{S_j\}$,
covering $\{T_j\}$, and
mappings $s_j.$  We refer to $S$ as a {\it sensing scenario},
see Figure \ref{fig:sepscen}.

%%%%%%%%%%%%%%%
\begin{figure}[htbp]
\centering
\includegraphics[height=0.22\paperheight]{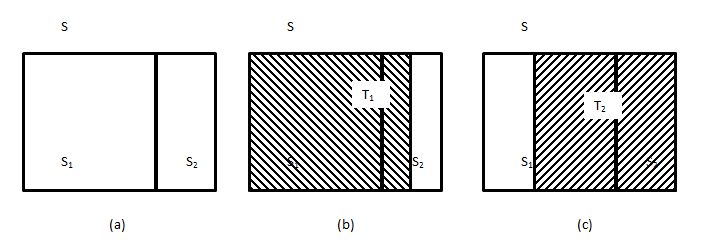}
\caption{Sensing scenario $S:$ In (a), $S$ is the rectangle
with two disjoint subsets, $S_1$ and $S_2;$ in (b), $T_1$ is the
hashed area; and in (c), $T_2$ is the hashed area.}
\label{fig:sepscen}
\end{figure}
%%%%%%%%%%%%%%%%

If a parameter $f$ is in $T_j$ then, at each time $k,$ sensor $s_j$ assigns a value to it.
We can think of this value as the response that sensor is reporting for the parameter,
$f,$ at time, $k.$ This process is usually the natural result of two effects: the
inherent {\it intensity} (or {\it volume} or {\it loudness}) of the scenario at which the parameter is being generated, 
and the {\it sensitivity} of the sensor
to that parameter. In this regard see Example \ref{ex:dft} and Remark \ref{rem:rad}. As such, we make
the following definition (Definition \ref{defn:sep}). The factorization in this definition and its physical motivation
with regard to volume and sensitivity
are the rationale for our notion of {\it multiplicative frames},
defined in Section \ref{sec:multframe}. The factorization in Definition
\ref{defn:sep}, part {\it b}, ``without
the hats'' is what occurs in {\it a health space}, ${\mathbb C}^n.$ 

\begin{defn}[Separable sensing scenario and health space]
\label{defn:sep}
Let $S$ be a sensing scenario.

 {\it a.} $S$ is {\it pre-separable} if for each $1 \leq f \leq M$, $1 \leq j \leq N$,
and $1 \leq k \leq K$, we have the factorization,
$$
v_{j,k}(f) = {\widehat \gamma}_j(f)\, {\widehat \alpha}_k(f).
$$

{\it b.} A pre-separable 
sensing scenario $S$ is {\it separable} if there are a positive integer $n$ and a mapping,
$$
     H : {\mathbb C}^M \longrightarrow {\mathbb C}^n,
$$
such that,
for each $i = 1,\ldots, n$, $j = 1,\ldots, N$,
and $k = 1,\ldots, K$, the image, $H(v_{j,k}) \in {\mathbb C}^n$, factors
as 
$$
      H(v_{j,k})(i) = {\gamma}_j(i)\, {\alpha}_k(i), 
$$
where ${\alpha}_k, {\gamma}_j \in {\mathbb C}^n$, and such that
$H(0) = 0 \in \CC^n$, see Definition \ref{defn:BFmappings}.

$H$ is a {\it health space mapping}, and ${\mathbb C}^n$ is 
{\it a health space}.
\end{defn}

%%%%%%%%%%%%%%%%%%%
%% MAKE REMARK SIGNIFICANTLY SHORTER AND PUT
% REST IN APPENDIX ALONG WITH FORMER EXAMPLE 3.2 ON CAMERAS

\begin{rem}[Rationale for health space] 
\label{rem:health}

{\it a. Motivation.}
The motivation behind formulating the notion of health space is to construct
a lower dimensional, space, ${\mathbb C}^n$, in which we can effectively analyze 
a separable sensing scenario $S$.
The dimension $n$ 
%of 
%a data dependent health space 
%required for implementation  is inextricably 
is related to the 
general theory of dimension reduction, although
our implementation in Section \ref{sec:datasim} 
only uses a type of PCA.
%As such,
%ole of 
%dimension reduction explained in Section \ref{sec:dimred}. 
%we also refer to $H$ as a {\it dimension reduction mapping}, even though
%we only implement it in the simplest setting of PCA in Section \ref{sec:datasim}.
The health space ${\mathbb C}^n$ and health space mapping
$H$ are part and parcel of the concept of a sensing scenario. 
Indeed, if the number of sensors and parameters is small enough to be analyzed without 
dimension reduction, there would be no need to develop the theory further. Hence, we shall always  
consider a {\it sensing scenario  with health space } 
${\mathbb C}^n$ {\it and health space mapping} $ H : {\mathbb C}^M \longrightarrow {\mathbb C}^n$.

{\it b. The size of $n$.}
The dimension $n$ should be taken as small as possible, and we must have $n < NM$
or there would be no point in applying reactive sensing theory. Usually, the size of
$n$ is determined by the analysis algorithms that will be applied. 

For example,
suppose we have a scenario where the frequency domain output from a group of
vibration sensors is used to diagnose the mechanical health of a device. There may
be many components with complex spectral characteristics that need to be
evaluated to detect or classify a fault. In this case $n$ may be comparatively large
while the number of sensors, $N$, might not be. This is precisely the scenario in
Section 6 where $n = 28$ and $N= 4$. 

On the other hand, consider a collection of inexpensive
sensors each of which is detecting an energy spike. This might be
the case if the sensors were sprinkled across a roadway with the intent
of monitoring the amount of traffic. Here $n$ could be small (maybe even 1), while
$N$ could be large. 

\end{rem}

%%%%%%%%%%%%%%%%%%
%%%%%%%%%%%%%%%%%%

\subsection{Examples of separable sensing scenaios}
\label{sec:sss}

\begin{example}[A DFT separable sensing scenario]
\label{ex:dft}
{\it a.} A useful example,
of which Example \ref{ex:turbine} is a prototype,
results from considering sensors $s_j$ that
perform a DFT on blocks of $2^{r}$ points. At a time, $k,$ for $k=1,\ldots,
K,$ each sensor, $s_j$ , will report $2^{r}$ values for the DFT
frequency bins. 
Thus, we can define $S$ to be the set of $2^r$ parameters, $f,$
and so $M = {\rm card}(S) = 2^r.$ The sets, 
$T_j$ and $S_j,$ will be a covering and partition of $S,$ respectively.
For example, if we have acoustic or vibration sensors, it may be that
$T_j = S$ since all sensors hear every frequency, but the sets $S_j$
consist of only those frequencies for which sensor $s_j$ is the best value,
e.g., the loudest or highest SNR value. It is for this reason that we 
introduced the notion of {\it primary responsibility}.

{\it b.} In the case of
sensors attached to the turbine assemblies of Example \ref{ex:turbine}, we shall provide
numerical simulations in Section \ref{sec:datasim} quantifying our formulation of reactive sensing.
%see Example \ref{ex:turbine}
In fact, we consider 
each $S_j$ in the sensing scenario $S$ as the set of frequencies generated by
the $j$-th engine. 
Thus, each $s_j$ is associated with the same engine and maps the frequencies to their 
values recorded by the sensor. The factor ${\widehat \gamma}_{j}(f) \in {\mathbb C}$
represents the ability of $s_j$ to "hear" the frequency $f$ of $S_j;$ and the factor 
${\widehat \alpha}_{k}(f) \in {\mathbb C}$
represents the "loudness" of the scenario $S$ at frequency $f$ and time $k,$ 
see Remark \ref{rem:rad} for a fuller treatment of this mathematical and physical modeling.
\end{example}

%THE FOLLOWING VERSION OF THE PROPOSITION IS SO BADLY MANGLED I'M COMMENTING IT OUT. 
%FEEL FREE TO DELETE THIS MRD 160923
%\begin{prop}[Linear separable sensing scenario]
%\label{prop:sep1}
%Let $S$ be a pre-separable sensing scenario. Linear dimension reduction mappings, $H:
%{\mathbb C}^M \rightarrow {\mathbb C}^n,$ defined by
%$$
%     v_{j,k} \mapsto \sum_{j=1}^N \,a_j \, v_{j,k},
%$$ 
%give rise to 
%separable sensing scenarios, i.e., $H$ is a health space mapping in the sense
%of Definition \ref{defn:sep}.
%\end{prop}
%\begin{proof}
%Let $L=\{\ell\}$ be a (possibly $j-$dependent) subset of indices for which 
%$v_{\ell,k}(f)$ is defined. We must show that the mapping,
%$$
%     v_{j,k}(f) \mapsto \sum_{\ell \in L} (a_\ell v_{\ell,k}(f)),
%$$  gives rise to a separable scenario. We have
%$$
%\sum_{\ell \in L} (a_{\ell} v_{\ell,k}(f)) = \sum_{\ell \in L} (a_{\ell}{\hat \gamma}_{\ell} (f) {\hat \alpha}_{k }(f) ) 
%= (\sum_{\ell \in L} a_\ell {\hat \gamma}_{\ell} (f) ) {\hat \alpha}_{k} (f).
%$$
%Let $\alpha_k(i) = {\hat \alpha}_k(f)$ and $\gamma_j(i) = 
%\sum_{\ell \in L} (a_\ell {\hat \gamma}_\ell (f)),$ and 
%this completes the proof.
%\end{proof}

%NEW STUFF FROM HERE TO END OF 3.1 MRD 160926

\begin{prop}[Linear separable sensing scenario -- constant ${\widehat \alpha}_k (f)$]
\label{prop:sep1}
Let $S$ be a pre-separable sensing scenario with the property that,
for each $1 \leq k \leq K$,
\begin{equation}
\label{eq:constcond}
    \exists\,{\widehat \alpha}_k \in {\mathbb C} \; {\rm such \, that} \; \forall f,f' \in \{1,...,M\}, 
    \quad {\widehat \alpha}_k (f) = 
    {\widehat \alpha}_k (f') = {\widehat \alpha}_k. 
\end{equation}
Let $H$ be a linear mapping defined by
\[
        H: {\mathbb C}^M \rightarrow {\mathbb C}^n,
\]
\[
          H(x) = Ax, 
\]
where A is an $n\times M$ matrix of complex numbers. 
Then, $S$ is a separable sensing scenario and $H$ is a health space mapping in the sense
of Definition \ref{defn:sep}.
\end{prop}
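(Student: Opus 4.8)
The plan is to unwind the definitions and exhibit the factorization of Definition \ref{defn:sep}, part {\it b}, directly from the hypothesis. First I would observe that, since $S$ is pre-separable, each data vector satisfies $v_{j,k}(f) = {\widehat \gamma}_j(f)\,{\widehat \alpha}_k(f)$ for all $f$, and under the constant-$\widehat\alpha_k$ condition \eqref{eq:constcond} this collapses to $v_{j,k}(f) = {\widehat \alpha}_k\,{\widehat \gamma}_j(f)$, so that $v_{j,k} = {\widehat \alpha}_k\, {\widehat \gamma}_j$ as vectors in $\CC^M$, where ${\widehat \gamma}_j = ({\widehat \gamma}_j(1), \ldots, {\widehat \gamma}_j(M))$.

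Next I would apply the linear map $H(x) = Ax$. By linearity, $H(v_{j,k}) = A v_{j,k} = {\widehat \alpha}_k \, A{\widehat \gamma}_j$, and reading this off coordinatewise gives $H(v_{j,k})(i) = {\widehat \alpha}_k\,(A{\widehat \gamma}_j)(i)$ for each $i = 1, \ldots, n$. The natural choice is therefore to set $\gamma_j := A{\widehat \gamma}_j \in \CC^n$ and $\alpha_k := ({\widehat \alpha}_k, {\widehat \alpha}_k, \ldots, {\widehat \alpha}_k) \in \CC^n$, i.e. the constant vector with entry ${\widehat \alpha}_k$; then $H(v_{j,k})(i) = \gamma_j(i)\,\alpha_k(i)$ as required. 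One also checks $H(0) = A\cdot 0 = 0 \in \CC^n$, which is the remaining condition in Definition \ref{defn:sep}, part {\it b}. This establishes that $S$ is separable with health space $\CC^n$ and that $H$ is a health space mapping.

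The steps here are essentially all bookkeeping, so there is no serious obstacle; the only point requiring a moment's care is the order of the factors — making sure that the scalar ${\widehat \alpha}_k$ is repackaged as the {\it $\alpha$-factor} (a vector in $\CC^n$) and that $A{\widehat \gamma}_j$ becomes the {\it $\gamma$-factor}, matching the roles of $\alpha_k$ and $\gamma_j$ in the definition rather than swapping them. It is also worth noting explicitly that no hypothesis on $A$ (such as full rank or being a frame analysis operator) is needed for this particular statement: the conclusion is purely about the existence of the pointwise factorization, and linearity alone suffices. If desired, one can remark that the interesting content — whether the resulting ${\gamma}_j$ and ${\alpha}_k$ assemble into a genuine multiplicative frame — is taken up later in Section \ref{sec:multframe}, and that the constancy condition \eqref{eq:constcond} is exactly what makes the factorization survive an arbitrary linear dimension reduction.
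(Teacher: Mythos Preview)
Your proof is correct and follows essentially the same route as the paper's own argument: use pre-separability and the constancy condition to write $v_{j,k} = {\widehat \alpha}_k\,{\widehat \gamma}_j$, apply the linear map $A$, and set $\gamma_j = A{\widehat \gamma}_j$, $\alpha_k(i) = {\widehat \alpha}_k$. Your version is in fact slightly more thorough, since you explicitly note $H(0)=0$, which Definition~\ref{defn:sep} requires but the paper's proof leaves implicit.
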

\begin{proof}
Let $A = [ a_{i,f}]$ where $i \in \{1,...,n\}$ and $f \in \{1,...,M\}$. Since $S$ is pre-separable 
and by (\ref{eq:constcond}), we have
$v_{j,k}(f) = {\widehat \gamma}_j(f){\widehat \alpha}_k(f) = {\widehat \gamma}_j(f){\widehat \alpha}_k.$
Therefore, for each $j = 1, \ldots, N$ and $k = 1, \ldots, K$, we can apply $H$ as follows:
\[ 
H(v_{j,k})(i) = \sum_{f=1}^M a_{i,f} v_{j,k}(f) =  
\sum_{f=1}^M a_{i,f}{\widehat \gamma}_j(f){\widehat \alpha}_k 
=  \big( \sum_{f=1}^M a_{i,f}{\widehat \gamma}_j(f) \big) {\widehat \alpha}_k 
= \gamma_j(i) \alpha_k(i),
\]
where we set $\gamma_j = A{\widehat \gamma}_j \in {\mathbb C}^n$ and 
$\alpha_k(i) = {\widehat \alpha}_k.$
This completes the proof.
\end{proof}

\begin{rem}[Physical systems with and without constant ${\widehat \alpha}_k (f)$]
\label{rem:physalpha}
Condition \eqref{eq:constcond} requiring ${\widehat \alpha}_k$ to be constant is reasonable 
in some physical systems. 
For example, linear amplifiers will have this property over their intended operational bandwidth,
see \cite{whit2002}.
However, some physical processes do not have this property. In particular, signal strength loss 
from free space propagation in both the radio frequency and acoustic regimes is highly 
dependent on frequency, e.g., \cite{rapp2009}. High frequencies are more attenuated than low frequencies,
and this means that the ${\widehat \alpha}_k(f), f \in \{1, \ldots, M\}$, values may vary.
Hence, the DFT example in Section \ref{sec:datasim} will not satisfy 
condition \eqref{eq:constcond}.
%not have the property of constant ${\widehat \alpha}_k$. 
On the other hand, if the matrix $A$ of Proposition \ref{prop:sep1} is sufficiently simple, then
condition \eqref{eq:constcond} is not necessary to ensure that
$S$ is separable, as the following result shows.
\end{rem}

\begin{prop}[Linear separable sensing scenario -- matrix constraint]
\label{prop:sep2}
Let $S$ be a pre-separable sensing scenario.
Let $H$ be a linear mapping defined by
\[
       H: {\mathbb C}^M \rightarrow {\mathbb C}^n,
\]
\[
       H(x) = Ax,
\]
where A is an $n\times M$ matrix of complex numbers. Suppose the rows of $A$ are 
constant multiples of  rows taken from the rows of the $M \times M $ identity matrix. 
Then, $S$ is a separable sensing scenario and $H$ is a health space mapping in the sense
of Definition \ref{defn:sep}.
\end{prop}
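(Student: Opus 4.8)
The plan is to exhibit the health space factorization explicitly by exploiting the very rigid structure imposed on $A$. Write $A = [a_{i,f}]$ with $i \in \{1,\ldots,n\}$ and $f \in \{1,\ldots,M\}$. The hypothesis says that for each row index $i$ there is a single column index $f(i) \in \{1,\ldots,M\}$ and a scalar $c_i \in \CC$ such that the $i$-th row of $A$ equals $c_i$ times the $f(i)$-th standard basis (row) vector of $\CC^M$; equivalently, $a_{i,f} = c_i \delta_{f,f(i)}$. The key point is that such a matrix acts on a vector $x \in \CC^M$ by simply selecting and rescaling coordinates: $(Ax)(i) = c_i\, x(f(i))$.

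Next I would apply this to $x = v_{j,k}$. Since $S$ is pre-separable, Definition~\ref{defn:sep}~{\it a} gives $v_{j,k}(f) = {\widehat \gamma}_j(f)\,{\widehat \alpha}_k(f)$ for all $f$, $j$, $k$. Substituting $f = f(i)$ and using the selection property of $A$, we get, for each $i = 1,\ldots,n$, $j = 1,\ldots,N$, and $k = 1,\ldots,K$,
\[
   H(v_{j,k})(i) = (A v_{j,k})(i) = c_i\, v_{j,k}(f(i)) = c_i\, {\widehat \gamma}_j(f(i))\, {\widehat \alpha}_k(f(i)).
\]
Now I would simply define $\gamma_j \in \CC^n$ and $\alpha_k \in \CC^n$ by $\gamma_j(i) = c_i\, {\widehat \gamma}_j(f(i))$ and $\alpha_k(i) = {\widehat \alpha}_k(f(i))$. (Absorbing $c_i$ into $\gamma_j$ is one natural choice; splitting it as, say, $\sqrt{c_i}$ into each factor would work equally well.) Then $H(v_{j,k})(i) = \gamma_j(i)\,\alpha_k(i)$, which is exactly the factorization required in Definition~\ref{defn:sep}~{\it b}. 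Finally, since $H$ is linear, $H(0) = A 0 = 0 \in \CC^n$, so the last condition of Definition~\ref{defn:sep}~{\it b} holds. Hence $S$ is separable and $H$ is a health space mapping.

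This argument is essentially routine; there is no serious obstacle, in contrast to Proposition~\ref{prop:sep1} where one needed the constancy hypothesis \eqref{eq:constcond} precisely because a general row of $A$ mixes many frequencies $f$ whose ${\widehat \alpha}_k(f)$ values differ and therefore cannot be pulled out of the sum. The only point requiring a little care is the bookkeeping translating the phrase ``constant multiples of rows taken from the rows of the identity matrix'' into the formula $a_{i,f} = c_i \delta_{f,f(i)}$, and noting that the map $i \mapsto f(i)$ need not be injective (several rows of $A$ may pick out the same coordinate) and need not be surjective (some frequencies may be dropped entirely) — neither of these causes any difficulty since the factorization is verified coordinate by coordinate in $\CC^n$. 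I would present the proof in three short steps: (i) record the selection formula for $A$; (ii) combine it with pre-separability to compute $H(v_{j,k})(i)$; (iii) read off $\gamma_j$, $\alpha_k$ and check $H(0)=0$.
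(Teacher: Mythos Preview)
Your proof is correct and follows essentially the same approach as the paper: both identify the single selected column $f(i)$ (the paper writes $f_i$) and scalar $c_i$ (the paper writes $a_{i,f_i}$) in each row, compute $H(v_{j,k})(i) = c_i\,\widehat\gamma_j(f(i))\,\widehat\alpha_k(f(i))$, and set $\gamma_j(i) = c_i\,\widehat\gamma_j(f(i))$, $\alpha_k(i) = \widehat\alpha_k(f(i))$. Your added remarks on $H(0)=0$ and on the non-injectivity/non-surjectivity of $i\mapsto f(i)$ are fine but not needed for the argument.
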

\begin{proof}
The $i$-th row of $A$ consists of $M-1$ zeros and one non-zero entry at a position we 
shall call $f_i$, that is, $a_{i,f}=0$ for all  $f \ne f_i$ and $a_{i,f_i} \ne 0$. Therefore, 
for each $j = 1, \ldots, N$ and $k = 1, \ldots, K$, we can apply $H$ as follows:
\[
     H(v_{j,k})(i) = \sum_{f=1}^M a_{i,f} v_{j,k}(f) = a_{i,f_i} v_{j,k}(f_i) = 
     a_{i,f_i}{\widehat \gamma}_j(f_i) {\widehat \alpha}_k(f_i) = \gamma_j(i) \alpha_k(i),
\] 
where we set $\gamma_j(i)=a_{i,f_i}{\widehat \gamma}_j(f_i)$ 
and $\alpha_k(i)={\widehat \alpha}_k(f_i)$. This completes the proof.
\end{proof}

\begin{rem}[Connection with DFT example]
\label{rem:sep2dft}
Propositions \ref{prop:sep1} and \ref{prop:sep2}
are relevant for showing that the 
mathematical techniques developed here will be applicable to the DFT 
scenario of Example \ref{ex:dft} 
and
Section \ref{sec:datasim}. 
In particular, Proposition \ref{prop:sep2} shows that the projection mapping
described in Section \ref{sec:datasim} 
is a health space mapping for a separable sensing scenario. Further, the proof of 
Proposition \ref{prop:sep1} shows that certain practicalities for dealing with real systems will not cause 
problems. For example, spectral lines that fall between DFT bins might require weighted averaging of 
adjacent values of the DFT output. Since these values correspond to adjacent frequencies, we can 
assume ${\widehat \alpha}_k(f) \approx {\widehat \alpha}_k(f+1),$ and the proof of 
Proposition \ref{prop:sep1} assures separability.
\end{rem}

%%%%%%%%%%%%%%%%%%
%%%%%%%%%%%%%%%%%%%

\subsection{Radiative, dominant, and harmonious separable sensing scenario}
\label{sec:raddomharm}

Definition \ref{defn:sep} allows us to quantify certain natural properties of the sensing
scenario. In particular, it allows us to distinguish between a failed sensor and a failed
component. It also allows us to quantify the notion of a sensor bearing primary responsibility
for a particular parameter and for a parameter to be heard by more than one sensor.
In order to justify these claims, we make the following definitions.

\begin{defn}[Radiative and dominant separable sensing scenario]
\label{defn:raddom}
Let $S$ be a separable sensing scenario, and let $i \in \{1, \ldots, n\}.$

{\it a.} $S$ is {\it i-radiative} if
\[
 \exists \,k_i \in \{1, \ldots, K\} \; \text{ such that } \; {\alpha}_{k_i}(i) \neq 0.
\]

{\it b.} $S$  is {\it i-dominant} if
\[
       \exists \,j_i \in \{1, \ldots, N\} \; \text{ such that } \; {\gamma}_{j_i}(i) \neq 0;
\]
and it is {\it strongly i-dominant} if
\[
       \exists \,j_i \in \{1, \ldots, N\} \text{ such that } \forall \ell \neq j_i, \hskip.2in 
       |{\gamma}_{j_i}(i)| > (N - 1)|{\gamma}_{\ell}(i)|,
\]
where $N$ is the number of sensors.
\end{defn}

\begin{rem}[Rationale for radiativity and dominance]
\label{rem:rad}
{\it a.} We have chosen the word, {\it radiative}, from the notion that in order for a
parameter to be sensed, e.g., for a spectral line of a DFT to register at a given
sensor, the object must radiate some sort of energy, viz., the $\widehat{{\alpha}}s$. Similarly, 
we have chosen the word, {\it dominant}, since the sensor which bears primary responsibility 
for reporting a parameter
 must be able to {\it see/hear} it loudly, viz., the $\widehat{{\gamma}}s.$ Naturally, if the
object stops sending out energy, no sensor could detect it. Thus,
in the case of engines (see Section \ref{sec:datasim}),
radiativity has to do with producing noise while dominance has to do with 
hearing the noise at the sensor.

{\it b.} Under a mild condition about the existence of non-zero elements, we have that
{\it if a sensor bears primary responsibility for a parameter, then the given
separable sensing scenario $S$ is $i$-dominant for some $i$.}
To see this, suppose we are given a sensor, $s_j$,  
that bears primary responsibility 
for a parameter, $f \in \{1,\ldots,M\}$, i.e., $f \in S_j$. Then,
consider the $1$-dimensional subspace ${\mathbb C}_{f}
=  {\mathbb C} \subseteq {\mathbb C}^M$ 
generated by the canonical basis vector for ${\mathbb C}^M$ with a $1$
in the $f$ position.
Since $H : {\mathbb C}^M \longrightarrow {\mathbb C}^n$ 
and ${\mathbb C}_{f}
\subseteq {\mathbb C}^M$, we have that
$H({\mathbb C}_{f})$ is a subset of ${\mathbb C}^n$.
Therefore, if we {\it assume} further that $H({\mathbb C}_{f})$
has a non-zero element $H(v_{j,k}) \in {\mathbb C}^n,$ then 
\[
     \exists i \in \{1,\ldots,n\} \;  \text{such that} \;  H(v_{j,k})(i) \ne 0;
\] 
and so
\[
     \gamma_j(i) \ne 0,
\] 
by the definition of the health space mapping $H$.
Consequently, with $j=j_i$, we see from Definition \ref{defn:raddom} that
$S$ is {\it i-dominant} for this $i$.

{\it c.} Of course, as the name implies, we would usually like to choose the $j_i$ in the definition of $i$-dominance 
so that $\gamma_{j_i} (i)$ is maximal in some sense. 
However, there are instances, particularly when there are noise considerations in the sensor output, 
when this may not be the case. For example,  the largest $\gamma_{j_i}$ may also be significantly 
noisier, and possibly have a lower SNR, than another choice. 
\end{rem}

\begin{figure}[htbp]
\begin{center}
\includegraphics[height=2.5in]{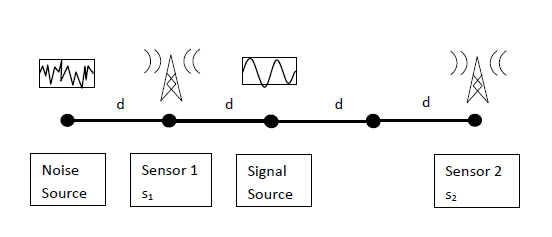}
\caption{Dominance vs SNR}
\label{fig:domsnr}
\end{center}
\end{figure}

\begin{example}[Dominance vs SNR]
\label{ex:snrdominance}
Recall that SNR is defined as the ratio of the power of the desired
signal to the background noise power, generally measured on a logarithmic
scale in terms of  decibels (dB), i.e.,
\[
      {\rm SNR} = {\rm P}_{{\rm signal}}/{\rm P}_{{\rm noise}} = {\rm P}_{{\rm signal},{\rm dB}} - {\rm P}_{{\rm noise},{\rm dB}},
\]
where ${\rm P}_{{\rm signal}, {\rm dB}} =10 \, {\rm log}_{10}({\rm P}_{{\rm signal}})$ and  ${\rm P}_{{\rm noise}, {\rm dB}} =
10 \, {\rm log}({\rm P}_{\rm noise})$ for signal and noise power, respectively.

As an example of the effect of SNR considerations on dominance, consider a situation, where there are 
2 sensors, a signal source and a noise source, arranged in a line as in Figure \ref{fig:domsnr}. The first sensor, 
$s_1,$ is located at a distance $d$ from the noise source and the signal source. The second sensor, 
$s_2$, is located at a distance $2d$ from the signal but $4d$ from the noise. This setup works equally 
well for both acoustic and RF sensing
modeling, as long as we assume that the signal and noise propagate 
according to free space propagation, see \cite{rapp2009}.
According to such propagation, if the distance between 
the source and the
sensor doubles, the received power level will decline by a factor of 4. Thus, if the signal strength at $s_1$ 
is ${\bf s}$, then the signal strength of $s_2$ will be ${\bf s}/4$. Hence, if signal strength were the only determining factor, 
then
$\gamma_1$ would be maximal, and consequently it would be the obvious choice for a dominant value. 
However, 
if the signal strength of the noise at $s_1$ is ${\bf N}$, then, according to free space propagation, the noise 
power at $s_2$ will be ${\bf N}/16$. Therefore,
if we compute SNR values, then the SNR at $s_1$ is ${\bf s}/{\bf N},$ while at $s_2$ it is
$({\bf s}/4)/({\bf N}/16) = 4{\bf s}/{\bf N}$. Thus, if SNR is a consideration, it is not unreasonable to choose 
$\gamma_2$ 
as the dominant value.
\end{example}

\begin{example}[Primary responsibility]
\label{ex:dftdominance}
The canonical example of {\it primary responsibility}
is given by the DFT scenario described in Sections \ref{sec:data} and 
\ref{sec:sim}. There, the health space mapping, {\it H}, simply picks out the {\it n} significant 
frequencies from the {\it M}-point DFT. Thus, {\it H} is a projection mapping.
In this case of the DFT,
the parameters for which a sensor bears primary responsibility are the 
frequencies  in $\{1, \ldots, M\}$ 
for which that sensor gives significant information.
These frequencies 
will be mapped by the projection mapping onto the non-zero elements of ${\mathbb C}^n$. In fact, this 
defines an explicit mapping of {\it only} the significant frequencies, $f \in \{1,...,M\},$ 
mapping into ${\mathbb C}^n.$
\end{example}

We want to quantify the idea described in Example \ref{ex:dftdominance}. 
In particular,
given a separable sensing scenario with health space mapping {\it H} and health space 
${\mathbb C}^n$, we wish to allow ${\mathbb C}^n$ to inherit the primary responsibility attributes 
available in ${\mathbb C}^M$. To this end, in Definition \ref{defn:subsets} we shall define
notation for various parameters and sets of indices that are needed to quantify dominance,
see Definition \ref{defn:op}, 
%and Remark \ref{rem:nonop}, 
and then to define harmonious scenarios, see Definition 
\ref{defn:disjharm}.

\begin{defn}[Subsets of $\{1, \ldots, n\}$ for a separable sensing scenario]
\label{defn:subsets}

Let $S$ be a separable sensing scenario.  

{\it a.} Consider the vectors, $H(v_{j,k}) \in {\mathbb C}^n,$ 
where $1 \leq j \leq N$ and $1 \leq k \leq K$. For each $j = 1, \ldots, N,$
construct the set
\[
J_j = \{ i \in \{1, \ldots, n\} : \exists k = k(i) \in \{1, \ldots, K\}      \;\text{such that} \, H(v_{j,k})(i) \ne 0 \},
\]
see Figure \ref{fig:Jsets}.

A subset $J_j$ could be the empty set. Without loss of
generality,
we can assume that each $i \in \{1, \ldots, n\}$ can be found in at least one such
set of indices. If not, then for that $i$,
\[
\forall j, k, \quad H(v_{j,k})(i)=0,
\]
in which case the
health space mapping could be improved by projecting down to ${\mathbb C}^{n-1}$,
eliminating the index $i$ altogether.

Further, we could have $J_j \cap J_{j'} \neq \emptyset$ for some $j \neq j'.$
We note here that the $J_j$ sets form a covering of the set $\{1,...,n\}$ in much the same way as
the sets $T_j$ form a covering of $S$, see the beginning of Subsection \ref{sec:sepsensscen}. 

{\it b.} We now wish
to construct sets, $I_j$, of indices  that correspond to the partition $\{S_j\}$. $I_j$ will have cardinality
${\rm card}\,(I_j)=n_j$, and we would like $\sum_{j=1}^{N} n_j =n$.
As such, for each $j = 1, \ldots, N,$ we
choose sets $I_j = \{ i_{\ell,j}: \ell =1,..., n_{j} \} \subseteq J_j, \,
I_j \subseteq J_j,$ with the properties that
$\{I_j\}$ is a disjoint collection and $\bigcup_{j=1}^N \{ I_{j} \} = \{1,...,n\},$ 
see Figure \ref{fig:Isets}. We note that some
of the $I_j$ can be the empty set even when $J_j \neq \emptyset$; in this case $n_j = 0$.

Summarizing, we have that
\begin{equation}
\label{eq:ijell}
 \forall i \in \{ 1,\ldots,n \}, \; \exists j=j(i) \in \{ 1,\ldots, N\} \, \text{and} \, \ell \in \{1, \ldots, n_j\}
  \, \text{such that} \, i \in I_j \, {\rm and} \,
   i = i_{\ell,j},
 \end{equation}
and
\begin{equation}
\label{eq:ik}
   \forall i = i_{\ell,j} \in I_j, \, \exists k = k(i) \in \{1,\ldots, K\} \, \text{such that} \, H(v_{j,k})(i_{\ell,j}) \neq 0.
\end{equation}

\end{defn}

\begin{figure}[htbp]
\centering
\includegraphics[height=0.22\paperheight]{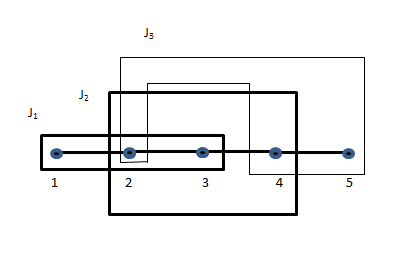}
\caption{Sets $J_1$, $J_2$, $J_3$}
\label{fig:Jsets}
\end{figure}

\begin{figure}[htbp]
\centering
\includegraphics[height=0.22\paperheight]{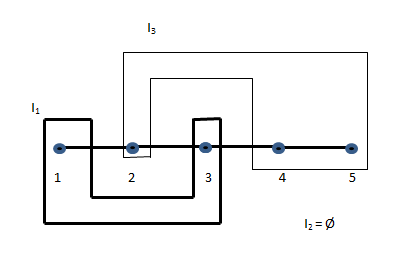}
\caption{Sets $I_1$, $I_2$, $I_3$}
\label{fig:Isets}
\end{figure}

\begin{example}[Subsets of $\{1, \ldots, n\}$ for a separable sensing scenario]
\label{ex:JI}
We illustrate the relationship between the sets $I_j$ and $J_j$, see Figures
\ref{fig:Jsets} and \ref{fig:Isets}.
Consider a case where $n=5$ and $N=3$ We could have sets $J_1=\{1,2,3\}$,
$J_2=\{2,3,4\}$, and $J_3=\{2,4,5\}$.
We may then choose $I_1=\{1,3\}$, $I_2= \emptyset$, and $I_3=\{2,4,5\}$, in
which case we have $n_1=2$,
$n_2=0$, and $n_3=3$, so that $n_1+n_2+n_3=5=n$.
\end{example}

\begin{defn}[Operational sensors]
\label{defn:op}
%Normally, a sensing scenario is designed to sense some sort of change in the environment. These 
%changes will affect the quantities we have defined. For example, the approach of a plane or a ship will 
%affect values of the $\alpha$s in the RADAR and SONAR examples, while, for the turbine example, a gear 
%fault, e.g., loss of a gear tooth, would change the magnitude of some DFT components, 
%thereby changing the $\alpha$s
% in the DFT example. On the other hand, it is possible
%that a sensor may also fail. In this case it is the $\gamma$s that will be affected. Because of the way 
%we have defined $J_j$ and $I_j$, 
Given the set-up of Definition \ref{defn:subsets},
we have that $H(v_{j,k})(i) \ne 0$ for $i \in I_j$ and so 
$\gamma_j(i) \ne 0$ as well. In this case we shall say that the sensor $s_j$ is {\it operational}. 
If something goes wrong with the sensor, some or all of the $\gamma$s will be affected. 
We say that $s_j$ is {\it non-operational} if

\[
\exists \ell \in \{1,\ldots,n_j\} \neq \emptyset \quad \text{such that} \quad
\gamma_j(i_{\ell,j}) = 0.
\]
\end{defn}

\begin{rem}
\label{rem:primrespnonop}
{\it a.} [$I_j$ and primary responsibility]
There is a connection between the $I_j$ 
of Definition \ref{defn:subsets} and the notion of primary responsibility. We 
noted in Remark \ref{rem:rad}b that if a sensor $s_j$ bears primary responsibility for a parameter,
$f$, then the scenario is $i$-dominant for some $i =i(j,f)$. In particular, we showed the existence 
of an $i$ for which $H(v_{j,k})(i) \ne 0$, where $k$ arises in Remark \ref{rem:rad}b.
We would like this $i$ to be
an element of  $I_j$. 
However, this may not be possible since there may be
another parameter, $f'$, for which a sensor $s_{j'}$ 
bears primary responsibility and for which $f'$ gives rise to the same value of $i$.
%In particular, this 
%can happen if the sensors we are using report more information than the health 
%space is capable of distinguishing. For example, consider the DFT 
%simulation of Section \ref{sec:datasim}. 
%We could map all of the spectral lines to a single value that gives the gross broadband 
%acoustic power sensed by the sensors. The required health space would be 
%$\mathbb C = {\mathbb C}^1$, i.e., $n = 1$, and each spectral component would 
%map to the same index, $i=1$. 
%This example might be useful in practice to determined if the system is functioning, 
%but has limited ability to distinguish anything beyond that. 

%\end{rem}
   %  J_j = \{ i \in \{1, \ldots, n\} : \exists k \, \text{such that} \, H(v_{j,k})(i) \ne 0 \}. 
%\]
%A subset $J_j$ could be the empty set. Without loss of 
%generality,
%we can assume that each $i \in \{1, \ldots, n\}$ can be found in at least one such 
%set of indices. If not, then for that $i$,  
%\[
  %       \forall j, k, \quad   H(v_{j,k})(i)=0,  
%\]        
%in which case the 
%dimension reduction map could be improved by projecting down to ${\mathbb C}^{n-1}$, 
%eliminating the index $i$ altogether. 

%At this point, we could have $J_j \cap J_{j'} \neq \emptyset$ for some $j, j'.$
%As such, for each $j = 1, \ldots, N,$ we sequentially
%choose sets $I_j = \{ i_{\ell,j}: \ell =1,..., n_{j} \} \subseteq J_j, \,
%I_j \subseteq J_j,$ with the properties that
%$\{I_j\}$ is  a disjoint collection and $\bigcup_{j=1}^N \{ I_{j} \} = \{1,...,n\}.$ Some of the $I_j$
%can be the empty set even when $J_j \neq \emptyset.$
 
%\begin{rem}
%\label{rem:nonop}
{\it b.} [Non-operational sensors]
Clearly, the failure of a sensor, i.e., a sensor becoming non-operational, could have a dramatic affect on the 
sensing scenario. In particular, if sensor $s_j$ becomes non-operational, then $\gamma_j(i_{\ell,j}) = 0$ implies 
that the scenario may no longer be $i$-dominant for $i=i_{\ell,j}$. Indeed, this will precisely be the case if there 
is no $j'\ne j$ for which $i_{\ell,j} \in J_{j'}$. Alternatively, however, it may be that there does exist such a $j'$, 
see Example \ref{ex:radar} and Figure \ref{fig:radar}. In this example we can consider RADAR B to 
play the role of sensor $s_j$. We think of the area directly around B as corresponding to indices which are not 
in any other $J_{j'}$ while the area around the X corresponds to indices which are also in the $J$ set associated
 with RADAR A.

We note that since $H$ operates on the $v_{j,k}$, should we find that
$H(v_{j,k})(i)=\alpha_k(i)\gamma_j(i)= 0$, then we have no way of telling if it is because $\alpha_k(i)=0$ or 
because $\gamma_j(i)=0$. This can be a crucial difference. In a RADAR/SONAR example, 
$\alpha_k=0$ may simply 
correspond to a lack of targets in the area, and may not be an unusual event, while a sensor becoming 
non-operational may lead to a disastrous security breach. On the other hand, for the turbine/DFT example, 
the opposite is true: a sensor failure would not affect proper operation of the machine while the failure of a 
turbine might have much more serious consequences. Section \ref{sec:datasim} explores this 
possibility in detail.

\end{rem}

\begin{defn}[Harmonious separable sensing scenario]
\label{defn:disjharm}
Let $S$ be a separable sensing scenario, and given the partition of indices described above. $S$ is {\it j-disjoint} if
\[
\exists \ell \in \{1,...,n_j \} \quad \text{ such that } \quad \forall j'\ne j, \quad {\gamma}_{j'}(i_{\ell,j})=0.
\]
If $S$ is not $j$-disjoint then we say $S$ is {\it j-harmonious}, and, in this case,
\[
     \forall \ell \in \{1,...,n_j \}, \, \exists\, j' \ne j \quad \text{ such that }  \quad \gamma_{j'}(i_{\ell,j})\,\neq \, 0.
\]
$S$ is
{\it harmonious} if it is not $j$-disjoint for any $j,$ that is, it is $j$-harmonious for each $j.$
\end{defn}

\begin{rem}[Rationale for harmony]
\label{rmk:disj}
We have chosen the words, {\it harmonious} and {\it disjoint}, for the following
reasons. Multiple sensors can 
often sense the same parameters, hence the term, harmonious. 
On the other hand, there may be a parameter that can only be sensed
by the dominant sensor, and so we use the term, disjoint, in this case. 

We wish to exploit
the property of being harmonious in the following way. In the event of sensor failure, a harmonious scenario 
may be able to recover some information about parameters even if its primary sensor 
is the one that failed. On the other hand, a scenario that is $j$-disjoint for all $j$ 
will not be able to recover any information about a parameter if its dominant sensor fails.
\end{rem}

\begin{example}[RF radio: radiativity, dominance, and harmony]
\label{ex:specsenscen}
Consider a pair of RF radios receiving signals from two transmitters at different frequencies. 
Let the distance from the first transmitter to the first receiver be $d_1$ and the distance to the 
second receiver be $4d_1$. Let the distance from the second transmitter to the second receiver 
be 
$d_2$ and the distance to the first receiver be $4d_2$. If we assume free space propagation in a 
non-fading environment, then each receiver should hear two signals, one $12 dB$ down from the 
other. See Figure \ref{fig:trans-rec}, where, for simplicity and concreteness, we have arranged the sensors
and transmitters linearly and have taken 
$d = d_1 = d_2$.

\begin{figure}[htbp]
\centering
\includegraphics[height=0.22\paperheight]{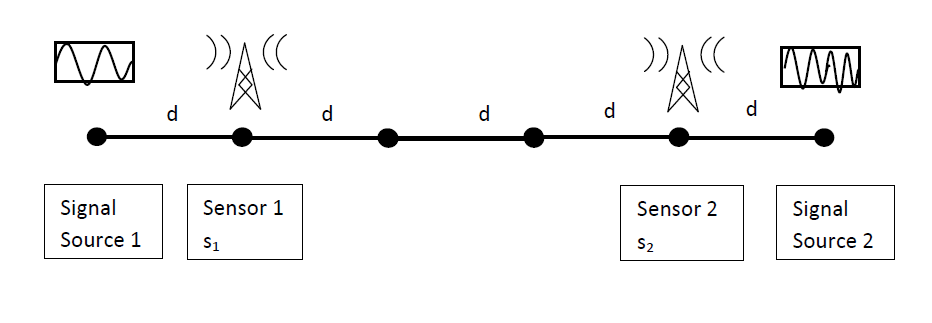}
\caption{Transmitters and receivers}
\label{fig:trans-rec}
\end{figure}

This situation describes a {\it separable sensing scenario} since the received signal level on each 
frequency at each receiver is given as the transmitted power, $\alpha_1$ and $\alpha_2,$
respectively, multiplied by 
the free space losses, $\gamma_1(1),$ $\gamma_1(2),$ $\gamma_2(1),$ and $\gamma_2(2),$ at each receiver. 
The scenario is {\it radiative} if both transmitters are operating. If one of the transmitters is turned off, 
it ceases to be radiative. It is $j$-{\it dominant} for $j = 1,2$; in fact, since $N=2,$ it is 
{\it strongly} $j$-{\it dominant}. It is also
{\it harmonious} since both receivers hear both transmitters.
 
Now consider adding a third transmitter-receiver pair at some great distance and with substantial 
blocking, e.g., on the other side of a mountain or, for an extreme case, the other side of a 
continent. 
Transmitter 3 cannot be received by either of the first two receivers nor can the third receiver 
hear 
either of the first two transmitters.  The scenario is still separable, although several of the 
$\gamma_j$ 
will be zero; and still radiative provided all transmitters are turned on. 
However, it is no longer harmonious 
since it is $3$-disjoint. 
\end{example}
%\begin{rem}[Harmonious and non-harmonious scenarios]
%\label{rem:harmnonharm}F

\begin{example}[SONAR and RADAR: harmony]
\label{ex:sonradharm}
The SONAR sensing scenario described in Example \ref{ex:sonar} can be considered harmonious, 
assuming the area of concern is to the north. Much of the area covered by sensor A is 
also covered by sensor B, albeit with greater noise problems since some areas will only be 
covered by endfire beams. We note that this scenario is not harmonious if the area
of concern is to the east or west.

The RADAR scenario of Example \ref{ex:radar} will not be harmonious over the oblong sensing scenario described in 
Figure \ref{fig:radar}. We note, however, that partial recovery of information is possible using noisy bins near the 
edge of coverage. 
%Thus, it should be possible to extend some of the work here to analyze 
%such scenarios in the future. 
%Similarly, the camera sensing scenario of Example \ref{ex:camera}  is not 
%harmonious but allows for partial recovery of data should a camera fail.
%\end{rem}
\end{example}

%%%%%%%%%%%%%%%%%%%%%%%
%%%%%%%%%%%%%%%%%%%%%%%
%%%%%%%%%%%%%%%%%%%%%%
%%%%%%%%%%%%%%%%%%%%%%%%%%%%%%%%%%%%%%%

\section{Bases, frames, and multiplicative frames}
\label{sec:frames}

 %%%%%%%%%%%%%%%%%%%%%%%%%
 \subsection{Frames}
 \label{sec:frame}
 
 There is an intimate connection between reactive sensing theory and the theory of frames.
 In this subsection we shall define frames, and state some of their relevant properties.
 The theory of frames 
 will be used in Section \ref{sec:rst} 
 as a natural tool for the analysis of reactive sensing problems. 
 Such problems led to our concept of multiplicative frames 
 defined in Subsection \ref{sec:multframe}.

A {\it frame} for ${\mathbb C}^d$ is a sequence, $X = \{x_h\}_{h=1}^D \subseteq
{\mathbb C}^d,$ that spans ${\mathbb C}^d,$ i.e.,
\begin{equation}
\label{eq:span}
    \forall \,x \in {\mathbb C}^d, \; \exists\,c_1,\dots, c_D\in {\mathbb C}^d
    \; {\rm such \; that} \quad x = \sum_{h=1}^D\, c_h\,x_h.
\end{equation}
This innocent and elementary property is the basis (sic) for the
power of frames, and it belies the
power of finite frames in dealing with numerical stability, robust signal representation,
and noise reduction problems,
see, e.g., \cite{daub1992}, \cite{BenFra1994} Chapters 3 and 7,
\cite{chri2016}, \cite{KovChe2007a}, and \cite{KovChe2007b}.
%as well as Example \ref{ex:frame} below. 
The following definition 
for Hilbert spaces is equivalent to the definition of frames for ${\mathbb C}^d,$
but is formulated in terms of bounds that are often useful in computation
and coding.

\begin{defn}[Frames]
\label{defn:frame}
{\it a.} Let $\HH$ be a separable Hilbert space over the field $\FF$,
where $\FF = \RR$ or $\FF = \CC,$ 
e.g., 
${\mathbb H} = L^2({\mathbb R}^d), {\mathbb R}^d, {\mathbb C}^d.$
A finite or countably infinite sequence, 
$X = \{x_h\}_{h \in J},$ of elements of $\HH$ 
is a {\it frame} for $\HH$ if 
\begin{equation}
\label{eq:frame}
      \exists A, B > 0  \; \text{such that}  \;
      \forall x \in \HH, \quad A\norm{x}^2 \leq \sum_{h \in J} |\langle {x},{x_h}\rangle|^2 \leq B\norm{x}^2.
\end{equation}
The optimal constants, viz., the supremum over all such $A$ and infimum over all such $B$, are 
called the {\it lower} and {\it upper frame bounds} respectively. When we refer to {\it frame bounds} 
$A$ and $B$, we shall mean these optimal constants. Otherwise, we use the terminology,
{\it a lower frame bound} or {\it an upper frame bound}.

{\it b.} A frame $X$ for ${\mathbb H}$ is a {\it tight frame} if $A = B.$ If a tight frame has the further property 
that $A = B = 1,$ then the frame is a {\it Parseval frame} for ${\mathbb H}.$  

{\it c.}  
%A frame $X$ for ${\mathbb H}$ is {\it equal-norm} 
%if each of the elements of $X$ has the same norm. Further, a 
A tight frame $X$ for ${\mathbb H}$ is a {\it unit norm tight frame} 
if each of the elements 
of $X$ has norm $1.$  
Finite unit norm tight frames for finite dimensional ${\mathbb H}$
are designated as FUNTFs.

{\it d.} A sequence of elements of ${\mathbb H}$,
not necessarily a frame, satisfying an upper frame bound,
such as $B\norm{x}^2$ in (\ref{eq:frame}), is a {\it Bessel sequence}.

{\it e.} Let ${\mathbb V}$ be a vector space over $\mathbb C.$ A sequence, $X = \{x_h\}_{h=1}^D 
\subseteq {\mathbb V}$,
is a {\it basis} for ${\mathbb V}$ if it spans ${\mathbb V}$ as in Equation \eqref{eq:span} and if 
$\{x_h\}_{h=1}^D$ is a linearly independent set, in which case $D$ is the {\it dimension}
of ${\mathbb V}.$ ${\mathbb V}$ is {\it infinite dimensional} if it contains an infinite linearly independent set.
Clearly, if ${\mathbb V}$ is a separable Hilbert space, then every basis for ${\mathbb V}$ is a 
frame for ${\mathbb V}$.
\end{defn}

%We shall deal with separable Hilbert spaces, and these have orthonormal bases, see
%\cite{GohGol1981} and \cite{BenCza2009}.
%Frames are a generalization of orthonormal bases where we relax Parseval's identity to allow for overcompleteness. 
%Frames were first introduced in 1952 by Duffin and Schaeffer \cite{DufSch1952} and have become 
%the subject of intense study since the 1980s. e.g., see \cite{daub1992}, \cite{BenWal1994}, \cite{bene1994}, 
%\cite{chri2016}, \cite{CasKut2013}.
%(In fact, Paley and Wiener gave the technical 
%definition of a frame in \cite{PalWie1934}, but 
%they only developed the completeness properties.)

%\begin{rem}
%The series in (\ref{eq:frame}) is an absolutely convergent series of positive numbers; and so, any reordering of
% the sequence of frame elements will remain a frame. We allow for 
%repetitions of vectors in the definition of 
% a frame. We shall 
% index frames by an arbitrary sequence such as $J$ in the definition, or by specific sequences such as the set 
 %${\mathbb N}$ of positive 
%integers 
%or the set ${\mathbb Z}^d, d \geq 2,$ of multi-integers when it is natural to do so.
%\end{rem}

Let $X = \{x_h\}_{h \in J}$ be a frame for ${\mathbb H}$. We define 
the following operators associated with every frame; they 
are crucial to frame theory. The {\it analysis operator} $L : {\mathbb H} \rightarrow \ell^2(J)$ is defined by
\[
       \forall x \in {\mathbb H},  \quad Lx = \{\inner{x}{x_h} \}_{h \in J}.
\]
%Inequality $\eqref{eq:frame}$ ensures that the analysis operator $L$ is bounded.
%If ${\mathbb H}_1$ and ${\mathbb H}_2$ are separable Hilbert spaces and if 
%$T : {\mathbb H}_1 \rightarrow {\mathbb H}_2$ is a 
%linear operator, then the {\it operator norm} $\norm{T}_{op}$ of $T$ is
%\[
%   \norm{T}_{op} = \sup\limits_{\norm{x}_{{\mathbb H}_1} \leq 1} \norm{T(x)}_{{\mathbb H}_2}.
%\]
%Clearly, we have $\norm{L}_{op} \leq \sqrt{B}$.
The adjoint of the analysis operator is the {\it synthesis operator} 
$L^\ast : \ell^2(J) \rightarrow {\mathbb H}$, and it is defined by
\[
        \forall a \in \ell^2(J),  \quad L^\ast a= \sum_{h \in J} a_h x_h.
\]
%From Hilbert space theory, we know that any bounded linear operator $T: {\mathbb H} \rightarrow {\mathbb H}$ 
%satisfies $\norm{T}_{op} = \norm{T^\ast}_{op}.$  Therefore, the synthesis operator $L^\ast$ is bounded 
%and $\norm{L^\ast}_{op} \leq \sqrt{B}$.
The {\it frame operator} is the mapping $\Fcal : {\mathbb H} \rightarrow {\mathbb H}$ defined as 
$\Fcal = L^\ast L$, i.e., 
\[
\forall x \in {\mathbb H}, \quad \Fcal(x) = \sum_{h \in J} \inner{x}{x_h}  x_h.
\]
%We shall  describe $\Fcal$ more fully. First, we have that
%\[
%\forall x \in {\mathbb H}, \quad \inner{\Fcal x}{x}= \sum_{h \in J} \left| \inner{x}{x_h}\right|^2.
%\]
%Thus, $\Fcal$ is a positive and self-adjoint operator, and $\eqref{eq:frame}$ can be rewritten as
%\begin{equation*}
 %       \forall x \in {\mathbb H}, \quad A\norm{x}^2 \leq \inner{\Fcal x}{x} \leq B\norm{x}^2,
%end{equation*}
%or, more compactly, as
%\begin{equation*}
%AI \leq \Fcal \leq BI,
%\end{equation*}
%where $I: {\mathbb H} \longrightarrow {\mathbb H}$ is the identity mapping.

The following is a fundamental theorem.

\begin{thm}[Frame reconstruction formula]
\label{thm:framerecon}
Let ${\mathbb H}$ be a separable Hilbert space, and let $X = \{x_h\}_{h \in J} \subseteq {\mathbb H}$. 

{\it a.} $X$ is
 a frame for ${\mathbb H}$ with frame bounds $A$ and $B$ if and only if 
 $\Fcal : {\mathbb H} \rightarrow {\mathbb H}$ is a topological isomorphism with norm bounds
 $\norm{\Fcal}_{op} \leq B$ and $\norm{\Fcal}_{op}^{-1} \leq A^{-1}$.  
 
 {\it b.} In the case of either condition of part {\it a}, we have the following:
 \begin{equation}\label{eq:frameopinverse}
B^{-1}I \leq \Fcal^{-1} \leq A^{-1} I,
\end{equation}
$\{ \Fcal^{-1} x_h \}$ is a frame for $H$ with frame bounds $B^{-1}$ and $A^{-1}$, and 
\begin{equation}
\label{eqn:recon}
        \forall x \in {\mathbb H}, \quad x = \sum_{h \in J} \inner{x}{x_h}\Fcal^{-1} x_h = \sum_{h \in J} \inner{x}{\Fcal^{-1}x_h} x_h
        = \sum_{h \in J} \inner{x}{\Fcal^{-1/2}x_h}\Fcal^{-1/2}x_h.
\end{equation}
%see \cite{daub1992}, \cite{BenFra1994} Chapters 3 and 7,
%and \cite{chri2016}.

%\begin{proof}
%The proof part {\it b} consists of invoking part {\it a} and the following 
% three computations. From $I = \Fcal^{-1}\Fcal$, we have
%\[
%\forall x \in {\mathbb H}, \quad x = \Fcal^{-1}\Fcal x = \Fcal^{-1} \sum_{h \in J} \inner{x}{x_h}x_h = \sum_{h \in J} \inner{x}{x_h}\Fcal^{-1} x_h;
%\]
%from $I = \Fcal \Fcal^{-1}$, we have
%\[
%\forall x \in {\mathbb H}, \quad x = \Fcal \Fcal^{-1}x = \sum_{h \in J} \inner{\Fcal^{-1}x}{x_h}x_h 
%= \sum_{h \in J} \inner{x}{\Fcal^{-1}x_h}x_h;
%\]
%and from $I = \Fcal^{-1/2}\Fcal \Fcal^{-1/2}$, it follows that
%\[
 %      \forall x \in {\mathbb H}, \, x = \Fcal^{-1/2}\Fcal \Fcal^{-1/2}x 
%       = \Fcal^{-1/2} \sum_{h \in J} \inner{\Fcal^{-1/2}x}{x_h}x_h 
%       = \sum_{h \in J} \inner{x}{\Fcal^{-1/2}x_h} {\Fcal}^{-1/2}x_h.
%       \qedhere
%\]
%\end{proof}
\end{thm}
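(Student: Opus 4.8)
The plan is to prove Theorem \ref{thm:framerecon} by establishing part \emph{a} first, since it identifies the frame condition with an operator-norm statement about $\Fcal$, and then deriving the reconstruction formulas in part \emph{b} as corollaries of the invertibility of $\Fcal$.

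\textbf{Part a.} First I would observe the key identity: for every $x \in \HH$,
\[
   \inner{\Fcal x}{x} = \inner{L^\ast L x}{x} = \inner{Lx}{Lx} = \sum_{h \in J} |\inner{x}{x_h}|^2,
\]
so that the frame inequalities \eqref{eq:frame} are \emph{exactly} the statement $A \norm{x}^2 \leq \inner{\Fcal x}{x} \leq B\norm{x}^2$, i.e., $A I \leq \Fcal \leq B I$ in the sense of self-adjoint positive operators. ($\Fcal$ is manifestly self-adjoint since $\Fcal^\ast = (L^\ast L)^\ast = L^\ast L = \Fcal$, and positive from the identity above.) Given the upper bound, $L$ is bounded, hence $\Fcal = L^\ast L$ is bounded with $\norm{\Fcal}_{op} \leq B$. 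For the reverse direction and the isomorphism claim, I would invoke the standard fact that a bounded self-adjoint operator satisfying $A I \leq \Fcal \leq B I$ with $A > 0$ is invertible with bounded inverse: one can see this by noting $\norm{\Fcal x} \geq \inner{\Fcal x}{x}/\norm{x} \geq A\norm{x}$, which gives injectivity and closed range, while self-adjointness forces the range to be dense, hence all of $\HH$; the open mapping theorem (or the same estimate applied to $\Fcal^{-1}$) yields $\norm{\Fcal^{-1}}_{op} \leq A^{-1}$. Conversely, if $\Fcal$ is a topological isomorphism with those norm bounds, then reading the chain of inequalities backwards recovers \eqref{eq:frame}, so $X$ is a frame with those bounds. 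The only subtlety worth care is confirming that the \emph{optimal} constants match the operator bounds, which follows because $\inner{\Fcal x}{x}/\norm{x}^2$ ranges over $[\,\inf \mathrm{spec}\,\Fcal,\ \sup\mathrm{spec}\,\Fcal\,]$ by self-adjointness.

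\textbf{Part b.} Assuming the equivalent conditions of part \emph{a}, the inequality \eqref{eq:frameopinverse}, $B^{-1} I \leq \Fcal^{-1} \leq A^{-1} I$, follows from $A I \leq \Fcal \leq B I$ by the operator-monotonicity of inversion on positive operators (equivalently, apply the spectral mapping $\lambda \mapsto \lambda^{-1}$). To get the reconstruction formula I would write, for any $x \in \HH$,
\[
   x = \Fcal^{-1} \Fcal x = \Fcal^{-1} \sum_{h \in J} \inner{x}{x_h} x_h = \sum_{h \in J} \inner{x}{x_h}\, \Fcal^{-1} x_h,
\]
using continuity of $\Fcal^{-1}$ to pass it through the (norm-convergent) sum. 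The second equality in \eqref{eqn:recon} comes from $x = \Fcal \Fcal^{-1} x = \sum_h \inner{\Fcal^{-1} x}{x_h} x_h$ together with self-adjointness of $\Fcal^{-1}$, and the third from the analogous computation $x = \Fcal^{-1/2}\Fcal\,\Fcal^{-1/2} x = \sum_h \inner{\Fcal^{-1/2}x}{x_h} \Fcal^{-1/2} x_h$, where $\Fcal^{-1/2}$ is the positive square root of $\Fcal^{-1}$ (which exists and is bounded and self-adjoint by the spectral theorem, since $\Fcal^{-1}$ is positive). Finally, to see that $\{\Fcal^{-1} x_h\}$ is a frame with bounds $B^{-1}, A^{-1}$, I would check that its frame operator is $\Fcal^{-1}\Fcal\Fcal^{-1} = \Fcal^{-1}$: indeed $\sum_h \inner{x}{\Fcal^{-1}x_h}\Fcal^{-1}x_h = \Fcal^{-1}\big(\sum_h \inner{\Fcal^{-1}x}{x_h} x_h\big) = \Fcal^{-1}\Fcal\Fcal^{-1}x = \Fcal^{-1}x$, and then apply part \emph{a} in reverse with $\Fcal$ replaced by $\Fcal^{-1}$ and $(A,B)$ replaced by $(B^{-1}, A^{-1})$.

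\textbf{Main obstacle.} The genuine analytic content — and the only place requiring care in the infinite-dimensional setting — is the first direction of part \emph{a}: showing that the two-sided bound $A I \leq \Fcal \leq B I$ with $A>0$ forces $\Fcal$ to be a topological isomorphism with the claimed inverse-norm bound. Everything after that (the inequalities for $\Fcal^{-1}$, the existence of $\Fcal^{-1/2}$, and the three reconstruction identities) is a formal consequence of the spectral theorem for bounded self-adjoint operators and the continuity of $\Fcal^{-1}$, and in the finite-dimensional case relevant to the rest of the paper it is entirely elementary linear algebra.
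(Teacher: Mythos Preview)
Your proof is correct and follows the standard route. The paper itself does not give a self-contained argument: for part \emph{a} it simply cites \cite{BenWal1994}, and for part \emph{b} it sketches the existence of $\Fcal^{-1/2}$ via the positive-square-root theorem (Rudin) and notes that $\Fcal^{-1/2}$, being a power series in $\Fcal^{-1}$, commutes with $\Fcal$, from which the third equality and the Parseval property of $\{\Fcal^{-1/2}x_h\}$ follow. Your argument is exactly the detailed version of that sketch, so there is nothing materially different between your approach and the paper's intended one.
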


For a proof of part {\it a.}, see \cite{BenWal1994}, pages 100--104.
 
 For part {\it b.},
let $X = \{x_h\}_{h \in J}$ be a frame for $H$. Then, the frame operator
$\Fcal$ is invertible (\cite{daub1992}, \cite{bene1994}); and $\Fcal$ is a multiple 
of the identity precisely when $X$ is a tight frame. 
Further, $\Fcal^{-1}$ is a positive self-adjoint operator and has a square root $\Fcal^{-1/2}$ 
(Theorem 12.33 in \cite{rudi1991}). This square root can be written as a power series in $\Fcal^{-1}$; 
consequently, it commutes with every operator 
that commutes with $\Fcal^{-1},$ and, in particular, with $\Fcal.$ These properties allow us to assert that 
$\{ \Fcal^{-1/2} \, x_h \}$ is a Parseval frame for $\HH$, and give the third equality of \eqref{eqn:recon}.
see \cite{chri2016}, page 155.

\begin{rem}[Frames and bases for $\CC^d$ and $\HH$]
In light of 
%Example \ref{ex:frame}a and 
the fact that orthonormal bases (ONBs) are 
frames, it is natural to ask to what extent frames can be constructed in terms of
ONBs. This is pertinent because of our frame results in Section \ref{sec:rst} and our simulation in
Section \ref{sec:datasim}. 
\begin{itemize}
%\item As pointed out at the beginning of this subsection, a sequence, $X = \{ x_h \}_{h=1}^D \subseteq \CC^d$,
% is a frame for $\CC^d$ if and only if ${\rm span}\,\{X\} = \CC^d$. This assertion is 
 %also valid when $\CC^d$ is replaced
 %by any finite dimensional vector space $V$ over a field $\FF$ and with an inner product on $V$.
%\item It is straightforward to see that any frame for $\CC^d$ contains a basis for $\CC^d$. Thus,
%every frame for $\CC^d$ contains a basis for $\CC^d$ union possibly finitely many more bases
%for lower dimensional subspaces.
\item It may  be considered surprising that any infinite dimensional $\HH$ contains a frame for $\HH$
which does not contain a basis for $\HH$. The result is due to Casazza and Christensen, see \cite{chri2016}, 
Chapter 7, for details.
\item The first result relating frames and sums of bases is due to Casazza \cite{casa1998}.
{\it 
Let $\HH$ be a separable Hilbert space over the field $\FF$, and let $X =\{ x_h \}_{h \in J}$ be a frame for $\HH$
with upper frame bound $B$. Then, for every $\epsilon > 0$, there are ONBs $\{u_h\}_{h \in J},
\{v_h\}_{h \in J}, \{w_h\}_{h \in J}$ for $\HH$ and a constant $C =B(1+\epsilon)$ such that
}
\[
        \forall h \in J,\quad x_h = C(u_h + v_h + w_h).
\]
The proof depends on an operator-theoretic argument.
\end{itemize}
\end{rem}
%%%%%%%%%%%%%%%%%%%%%%%%%%
%%%%%%%%%%%%%%%%%%%%%%%%%%%

\subsection{Multiplicative frames}
\label{sec:multframe}

Because of the formulation described in Section \ref{sec:mathmodel}, we define
the notion of a multiplicative frame.

\begin{defn}[Multiplicative frames]
\label{defn:mf}
A sequence, $X = \{x_{j,k}\} \subseteq {\mathbb C}^n, 1 \leq j \leq N, 1 \leq k \leq K$ is a
{\it multiplicative frame} for ${\mathbb C}^n$ if it is a frame for ${\mathbb C}^n$ and
if
\[
   \exists \,\{y_j : 1 \leq j \leq N\} \subseteq {\mathbb C}^n \: \text{ and } \: \exists\,\{z_k : 1 \leq k \leq K\} 
   \subseteq {\mathbb C}^n
   \quad \text{ such that }
\]
\[
\forall\, j = 1, \ldots, N, \, \forall \,k = 1, \ldots, K, \: \text{ and } \forall \, i = 1, \ldots, n, \quad
   x_{j,k}(i) = y_j(i) \, z_k(i).
\]
\end{defn}

In the following constructions of multiplicative frames, it is important to note
that they require the hypotheses of radiativity and dominance
given in Definition \ref{defn:raddom}. In fact, radiativity is manifested by 
\eqref{eq:zposmin} and dominance is manifested by \eqref{eq:yposmin}
in Theorem \ref{prop:mf1}. Besides Corollary \ref{prop:mf2}, this should also be
compared with Theorem \ref{thm:frame}.
We view this as a striking connection between the mathematical concept of
a multiplicative frame and the notions of radiativity and dominance that
arise from our physical modeling of a sensing scenario.

\begin{thm}[A construction of multiplicative frames]
\label{prop:mf1}
Given $Y = \{y_j : 1 \leq j \leq N\} \subseteq {\mathbb C}^n
\setminus{\{0\}} \: \text{ and } \: Z = 
\{z_k : 1 \leq k \leq K\} \subseteq {\mathbb C}^n\setminus{\{0\}}$, and define
 $X = \{x_{j,k} : x_{j,k}(i)=y_j(i) \, z_k(i)\} 
\subseteq {\mathbb C}^n, 1 \leq j \leq N, 1 \leq k \leq K, 1 \leq i \leq n$. 

{\it a.} Let $Y$ be a frame for ${\mathbb C}^n$ with frame constants $A_N$ and $B_N$, and 
assume
\begin{equation}
\label{eq:zposmin}
    \exists \, k \in \{1, \ldots, K\} \; {\rm such \, that} \; {\rm min}_{1\leq i \leq n}\, |z_k(i)| = m_z > 0.
\end{equation}
Then, $X$ is a multiplicative frame for ${\mathbb C}^n$; and an upper frame bound B and  
a lower frame
bound A are constructed in the proof.

{\it b.}  Let $Z$ be a frame for ${\mathbb C}^n$ with frame constants $A_K$ and $B_K$, and 
assume
\begin{equation}
\label{eq:yposmin}
    \exists \, j \in \{1, \ldots, N\} \; {\rm such \, that} \; {\rm min}_{1\leq i \leq n}\, |y_j(i)| = m_y > 0.
\end{equation}
Then, $X$ is a multiplicative frame for ${\mathbb C}^n$; and an upper frame bound B and  
a lower frame
bound A are constructed in the proof.

%K\,\left({\rm max}_{1 \leq k \leq K}\, \|z_k\|_{\infty}^2 \right)\,\|x\|_2^2.
%\]
\end{thm}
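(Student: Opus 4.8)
The plan is to observe that the multiplicativity of $X$ is built into its definition, so the only content is to verify that $X$ satisfies the two frame inequalities in \eqref{eq:frame}. The key device is, for each fixed index, to absorb one of the factors into the test vector. For part {\it a}, given $x \in \mathbb{C}^n$ and $k \in \{1,\ldots,K\}$, define $x^{(k)} \in \mathbb{C}^n$ by $x^{(k)}(i) = x(i)\,\overline{z_k(i)}$. Since $x_{j,k}(i) = y_j(i)z_k(i)$, a direct computation with the Hermitian inner product gives $\langle x, x_{j,k}\rangle = \sum_i x(i)\overline{z_k(i)}\,\overline{y_j(i)} = \langle x^{(k)}, y_j\rangle$ for every $j$. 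Hence $\sum_{j=1}^N |\langle x, x_{j,k}\rangle|^2 = \sum_{j=1}^N |\langle x^{(k)}, y_j\rangle|^2$, and applying the frame hypothesis on $Y$ with constants $A_N, B_N$ yields $A_N\|x^{(k)}\|^2 \le \sum_{j=1}^N|\langle x^{(k)}, y_j\rangle|^2 \le B_N\|x^{(k)}\|^2$, where $\|x^{(k)}\|^2 = \sum_{i=1}^n |x(i)|^2|z_k(i)|^2$.

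From here I would sum over $k$ and use the pointwise weights $|z_k(i)|^2$ to pass between $\|x^{(k)}\|$ and $\|x\|$. For the upper bound, since $N,K,n$ are all finite, $B := B_N\sum_{k=1}^K \max_{1\le i\le n}|z_k(i)|^2$ is finite, and $\sum_{j,k}|\langle x, x_{j,k}\rangle|^2 \le \sum_k B_N\|x^{(k)}\|^2 \le B\|x\|^2$. For the lower bound I would discard every term except those coming from the distinguished index $k$ supplied by the radiativity-type hypothesis \eqref{eq:zposmin}: for that $k$, $\|x^{(k)}\|^2 = \sum_i |x(i)|^2|z_k(i)|^2 \ge m_z^2\|x\|^2$, so $\sum_{j,k}|\langle x, x_{j,k}\rangle|^2 \ge \sum_{j=1}^N|\langle x^{(k)}, y_j\rangle|^2 \ge A_N m_z^2\|x\|^2$, and $A := A_N m_z^2 > 0$ is a valid lower frame bound. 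Thus $X$ is a frame for $\mathbb{C}^n$, hence a multiplicative frame.

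Part {\it b} is entirely symmetric: interchange the roles of $Y$ and $Z$ (equivalently, of the indices $j$ and $k$). One sets $x^{(j)}(i) = x(i)\,\overline{y_j(i)}$ so that $\langle x, x_{j,k}\rangle = \langle x^{(j)}, z_k\rangle$, applies the frame hypothesis on $Z$ with constants $A_K, B_K$, takes $B := B_K\sum_{j=1}^N \max_{1\le i\le n}|y_j(i)|^2$ for the upper bound, and uses the dominance-type hypothesis \eqref{eq:yposmin} to isolate the single index $j$ with $\min_i|y_j(i)| = m_y > 0$, producing the lower frame bound $A := A_K m_y^2$.

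I do not anticipate a genuine obstacle here; the argument is essentially a weighted reindexing of the defining frame inequalities. The only points needing a little care are (i) the bookkeeping identity $\langle x, x_{j,k}\rangle = \langle x^{(k)}, y_j\rangle$, which depends on the coordinatewise definition of $x_{j,k}$ and the convention for the inner product; and (ii) noting that finiteness of the upper bound requires nothing beyond $Y, Z \subseteq \mathbb{C}^n$, whereas the nonvanishing hypotheses \eqref{eq:zposmin} and \eqref{eq:yposmin} enter only, and exactly, to guarantee a strictly positive lower frame bound — which is precisely the role played by radiativity and dominance in the physical model.
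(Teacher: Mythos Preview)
Your proposal is correct and follows essentially the same approach as the paper: rewrite $\langle x, x_{j,k}\rangle$ as $\langle x\overline{z}_k, y_j\rangle$, apply the frame inequalities for $Y$, bound $\|x\overline{z}_k\|^2$ above by $\max_i|z_k(i)|^2\,\|x\|^2$ and sum over $k$, and use the distinguished index $k$ from \eqref{eq:zposmin} for the lower bound. Your constants are in fact slightly sharper than the paper's (you stop at $B_N\sum_k\max_i|z_k(i)|^2$ rather than $K B_N\max_k\|z_k\|_\infty^2$, and you correctly obtain $A_N m_z^2$ rather than the paper's $A_N m_z$), but the argument is the same.
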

\begin{proof} {\it a.} First, ${\rm card}\,\{x_{j,k}\} = NK$. By the multiplicative definition of 
each $x_{j,k}$, it is sufficient to prove that
$X$ is a frame for ${\mathbb C}^n$. 

Let $x = (x(1), x(2), \ldots, x(n)) \in {\mathbb C}^n$.
Then,
\[
   \sum_{j=1,\ldots,N, \,k=1,\dots,K}\,|\langle x, x_{j,k}\rangle |^2 
   = \sum_{k=1}^K \sum_{j=1}^N \, |\langle x, x_{j,k}\rangle |^2
   =  \sum_{k=1}^K \sum_{j=1}^N \,|\langle x, y_jz_k\rangle |^2
\]
\[
     = \sum_{k=1}^K \sum_{j=1}^N \,|\langle x \overline{z}_k, y_j\rangle |^2 
      = \sum_{k=1}^K \sum_{j=1}^N \, \left|\sum_{i=1}^n\,x(i)\overline{z_k(i)}\,\overline{y_j(i)}\right|^2.
\]
Thus, since
\[
  \sum_{k=1}^K \sum_{j=1}^N \, |\langle x, x_{j,k}\rangle |^2 
  = \sum_{k=1}^K\,\sum_{j=1}^N \,|\langle x \overline{z}_k, y_j\rangle |^2,
\]
we have
\begin{equation}
\label{eq:multframeinproof}
   \sum_ {k=1}^K\,A_N\,\| x \overline{z}_k \|_2^2 \leq 
   \sum_{k=1}^K\,\sum_{j=1}^N\,|\langle x, x_{j,k}\rangle |^2  \leq
    \sum_ {k=1}^K\,B_N\,  \| x \overline{z}_k \|_2^2.
\end{equation}

Next, we make the estimate,
\begin{equation}
\label{eq:multnormeq}
    \sum_ {k=1}^K\,\| x \overline{z}_k \|_2^2  =
   \sum_ {k=1}^K\,\left( \sum_{i=1}^n\,|x(i)|^2|\overline{z_k(i)}|^2\right)
\end{equation}
\[  
     \leq \sum_{k=1}^K\,{\rm max}_{1 \leq i  \leq n}\,|z_k(i)|^2\, \left(\sum_{i=1}^n\,|x(i)|^2\right)
\]  
\[
     = \|x\|_2^2\, \sum_{k=1}^K\, \|z_k\|_{\infty}^2 
     \leq K\,\left({\rm max}_{1 \leq k \leq K}\, \|z_k\|_{\infty}^2 \right)\,\|x\|_2^2.
\]
Therefore,
\[
     \sum_{j=1,\ldots,N, \,k=1,\dots,K}\,|\langle x, x_{j,k}\rangle |^2 
     \leq K\,B_N\, \left({\rm max}_{1 \leq k \leq K}\, \|z_k\|_{\infty}^2 \right)\,\|x\|_2^2.
\]
Consequently, and only assuming that $\{y_j\}$ is a Bessel sequence with 
Bessel bound $B_N$, we obtain that
$\{x_{j,k}\}$ is a Bessel sequence for ${\mathbb C}^n$ with Bessel bound,
and an upper frame bound,
$ K\,B_N\,\left({\rm max}_{1 \leq k \leq K}\, \|z_k\|_{\infty}^2\right)$.

Finally, to obtain a   lower frame bound, we proceed as follows.
We combine \eqref{eq:multframeinproof} and \eqref{eq:multnormeq} with the
hypothesis, \eqref{eq:zposmin}, to make the estimate,
\[
      \sum_ {k=1}^K\,A_N\,\| x \overline{z}_k \|_2^2 \geq 
      m_z\,A_N\,\|x\|_2^2.
\]
In particular, a lower frame bound is $m_z\,A_N$.

{\it b.} The proof of part {\it b} is analogous to that of part {\it a} with the
roles of $Y$ and $Z$ reversed.
  \end{proof} 

We state Corollary \ref{prop:mf2} as a corollary of Theorem \ref{prop:mf1}.
In fact, since none of the elements of $Y$ or $Z$ is the $0$-vector, conditions
\eqref{eq:zposmin} and \eqref{eq:yposmin} are automatically satisfied.

\begin{cor}[A construction of multiplicative frames for two frames]
\label{prop:mf2}
Let $Y = \{y_j : 1 \leq j \leq N\} \subseteq {\mathbb C}^n \setminus \{0 \}
\: \text{ and } \: Z = 
\{z_k : 1 \leq k \leq K\} \subseteq {\mathbb C}^n \setminus \{0\}$
be frames for ${\mathbb C}^n$ Then,  $X = \{x_{j,k} : x_{j,k}(i)=y_j(i) \, z_k(i)\} 
\subseteq {\mathbb C}^n, 1 \leq j \leq N, 1 \leq k \leq K, 1 \leq i \leq n$ is a multiplicative 
frame for ${\mathbb C}^n.$
\end{cor}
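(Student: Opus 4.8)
The plan is to derive Corollary \ref{prop:mf2} directly from Theorem \ref{prop:mf1} by checking that its hypotheses are met. The key observation is that being a frame for ${\mathbb C}^n$ forces the minima in \eqref{eq:zposmin} and \eqref{eq:yposmin} to be strictly positive, so no extra assumption is needed. First I would invoke part {\it a} of Theorem \ref{prop:mf1}, for which we need $Y$ to be a frame (which it is, by hypothesis) and we need condition \eqref{eq:zposmin}, namely that some $z_k$ has ${\rm min}_{1 \leq i \leq n} |z_k(i)| = m_z > 0$.

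The crux is verifying \eqref{eq:zposmin}. I claim this follows from $Z$ being a frame for ${\mathbb C}^n$. Indeed, suppose for contradiction that for every $k \in \{1, \ldots, K\}$ there is an index $i = i(k) \in \{1, \ldots, n\}$ with $z_k(i(k)) = 0$. This alone does not immediately contradict the frame property, so the argument has to be a little more careful; in fact the correct statement is simply that each $z_k$ being nonzero gives ${\rm min}_i |z_k(i)| > 0$ only if {\it no} coordinate of $z_k$ vanishes, which need not hold. So the actual route must be: since $Z$ is a frame, in particular $Z$ spans ${\mathbb C}^n$, and more is true — the frame operator $\Fcal_Z = L_Z^* L_Z$ is invertible by Theorem \ref{thm:framerecon}. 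The diagonal of $\Fcal_Z$ has entries $\sum_{k=1}^K |z_k(i)|^2$ for $i = 1, \ldots, n$; if this were zero for some $i$, then $\Fcal_Z$ would have a zero row, contradicting invertibility. Hence for each $i$ there is at least one $k$ with $z_k(i) \neq 0$. This, however, still does not produce a single $k$ that works for all $i$ simultaneously, which is what \eqref{eq:zposmin} literally demands.

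Given that subtlety, the cleanest route — and the one I expect the authors intend — is to appeal to Theorem \ref{prop:mf1} but first reduce to the case where some $z_k$ has all coordinates nonzero, or alternatively to observe that the statement of Corollary \ref{prop:mf2} is asserting a frame conclusion that can be obtained even when \eqref{eq:zposmin} fails, by a direct argument: since both $Y$ and $Z$ are frames, combine the lower bounds coordinatewise. Concretely, I would write $\sum_{j,k} |\langle x, x_{j,k}\rangle|^2 = \sum_k \sum_j |\langle x\overline{z}_k, y_j\rangle|^2 \geq A_N \sum_k \|x\overline{z}_k\|_2^2 = A_N \sum_{i=1}^n |x(i)|^2 \big(\sum_{k=1}^K |z_k(i)|^2\big) \geq A_N \big(\min_{1 \leq i \leq n} \sum_{k=1}^K |z_k(i)|^2\big)\|x\|_2^2$, exactly as in the proof of Theorem \ref{prop:mf1} but keeping the sum over $k$ inside. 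The quantity $\min_i \sum_k |z_k(i)|^2$ is strictly positive precisely because $\Fcal_Z$ is invertible (no zero diagonal entry), which is where the frame hypothesis on $Z$ is used; the upper bound is handled exactly as in Theorem \ref{prop:mf1}{\it a}. The main obstacle, then, is not deep — it is the bookkeeping point that \eqref{eq:zposmin} as literally stated is stronger than what a frame guarantees, so the corollary's proof should either silently strengthen the estimate of Theorem \ref{prop:mf1} in this way, or the authors have in mind the reading that $Y, Z \subseteq {\mathbb C}^n \setminus \{0\}$ together with the frame property is "morally" enough; I would make the coordinatewise-minimum argument explicit to be safe. Finally, part {\it b} is obtained symmetrically by swapping the roles of $Y$ and $Z$, so the two cases together cover the claim regardless of which of $Y$, $Z$ one chooses to treat as the "outer" frame.
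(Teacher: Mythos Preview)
Your analysis is correct, and in fact more careful than the paper's own argument. The paper's entire proof of Corollary~\ref{prop:mf2} is the sentence preceding its statement: ``since none of the elements of $Y$ or $Z$ is the $0$-vector, conditions \eqref{eq:zposmin} and \eqref{eq:yposmin} are automatically satisfied,'' and then Theorem~\ref{prop:mf1} is invoked. You correctly spotted that this reasoning is not valid as written: a nonzero vector in $\CC^n$ can certainly have a zero coordinate, so $z_k \neq 0$ does not imply ${\rm min}_{1\le i\le n}|z_k(i)|>0$. The frame hypothesis on $Z$ does not repair this either, as your own example-hunting shows (e.g.\ the standard basis of $\CC^2$ is a frame in which every vector has a zero coordinate).

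Your fix is exactly the right one. Rather than trying to force \eqref{eq:zposmin}, you keep the sum over $k$ inside and bound
\[
\sum_{j,k} |\langle x, x_{j,k}\rangle|^2 \;\geq\; A_N \sum_{i=1}^n |x(i)|^2 \sum_{k=1}^K |z_k(i)|^2
\;\geq\; A_N\,A_K\,\|x\|_2^2,
\]
where the last inequality uses $\sum_k |z_k(i)|^2 = \sum_k |\langle e_i, z_k\rangle|^2 \geq A_K$ from the frame property of $Z$. This is a genuine (if small) strengthening of the estimate in the proof of Theorem~\ref{prop:mf1}, and it is what the corollary actually requires. So your proof is correct and closes a gap that the paper's one-line justification leaves open; the two approaches are the same in spirit (reduce to the estimate in Theorem~\ref{prop:mf1}), but yours supplies the missing lower-bound step.
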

%\begin{proof}
%\end{proof}

In our forthcoming theory of multiplicative frames 
we would need to evaluate  
more refined frame bounds for results
such as Theorem \ref{prop:mf1} and  Corollary \ref{prop:mf2}. 

Our intention is to connect the notion of a separable sensing scenario to the theory of frames. 
In Section \ref{sec:bmfm} we shall start with a separable sensing scenario and use the health space 
mapping $H$ to construct basis and frame mappings from sets $\{v_{j,k}\}$ of sensor data into ${\mathbb C}^n,$
where $j$ indicates one of the $N$ sensors, $k$ designates one of the $K$ times, and each $v_{j,k}$ is evaluated
at some $f \in \{1, \ldots, M\}.$
In some cases,
these mappings will give rise to {\it multiplicative bases} for ${\mathbb C}^n,$ while in other cases they will give rise 
to {\it multiplicative frames}. The former will be called {\it basis mappings}, while the latter will be called 
{\it frame mappings}, as defined in Subsection \ref{sec:bmfm}. 

%%%%%%%%%%%%%%%%%%%%%%%%%%%%%%%%%
%%%%%%%%%%%%%%%%%%%%%%%%%%%%%%%%%
%%%%%%%%%%%%%%%%%%%%%%%%%%%%%%%%%%
\section{Reactive sensing theory}
\label{sec:rst}

%%%%%%%%%%%%%%%%%%%%%%%%%%%%%%%%%
%%%%%%%%%%%%%%%%%%%%%%%%%%%%%%%%%
\subsection{Basis and frame mappings}
\label{sec:bmfm}

%In Definition \ref{defn:BFmappings}, we shall define basis and frame mappings, $B$ and $F,$ 
%respectively, in terms of the mathematical model in Section \ref{sec:mathmodel},
%and play and essential role in reactive sensing theory.  In particular, 
%we shall relate the
%frame mapping with the definition of frames.  

Consider a separable sensing scenario $S$ with health space ${\mathbb C}^n$  and 
health space mapping $ H : {\mathbb C}^M \longrightarrow {\mathbb C}^n$. 
We wish to analyze the state of $S$. At time 
$k, 1 \leq k \leq K,$ each sensor $s_j, 1 \leq j \leq N,$ generates a vector 
$v_{j,k} \in {\mathbb C}^M$, see the beginning of Subsection \ref{sec:sepsensscen}
for notation.
We use $H$ to map the 
$v_{j,k}$ to $H(v_{j,k}) \in {\mathbb C}^n.$ Then, generally, we shall have some detection
and/or classification scheme in place in ${\mathbb C}^n,$ that allows us to say something about the 
state of $S$ given the 
information $H(v_{j,k}) \in {\mathbb C}^n$, see Subsection \ref{sec:detection}.

For example, in the DFT Example \ref{ex:dft} and later in Section \ref{sec:datasim}, it may be the case 
that if a particular coordinate $H(v_{j,k})(i)$ is greater than some known value, then a 
gear fault is indicated.

%\begin{example}
%\label{ex:detector}
%THIS NEEDS TO BE FLESHED OUT. OTHER EXAMPLES NEEDED MD 03/21/16

%\end{example}
\begin{rem}[The role of frames]
\label{rem:frames}
One issue that arises immediately is that for a fixed $i \in \{1, \ldots, n\},$ there could be many 
$H(v_{j,k})(i)$ values from 
which to choose at time $k$. In fact,  we may have one 
such value for each $j$, $1 \leq j \leq N;$ and $N$ could be large. There are 
applications where a 
%viable detection 
solution may require deploying hundreds of inexpensive 
sensors, see, e.g., \cite{LeeKriKuo2004}, \cite{Wu-LiuKin2005}, \cite{boye2009}.

It is this overdetermined nature of reactive sensing problems that led us to use the theory of 
frames. In particular, we note that frames are over-complete sets of atoms,
as opposed to bases, where the loss of just one basis element can permanently and adversely affect
accurate signal representation in terms of the remaining elements of the basis. 
In this regard, and considering
the first signal reconstruction equality of Equation (\ref{eqn:recon}) in Section \ref{sec:frames}, 
the fact that the $x_h$ 
may form an over-complete
set of atoms 
raises the possibility that the $NK$ vectors $H(v_{j,k}) \in {\mathbb C}^n$ could be a frame
for ${\mathbb C}^n,$ see Theorems \ref{thm:frame1} and \ref{thm:frame}. 
%Recall 
%that ${\mathbb C}^n$ is what we have called 
%{\it a health space} in Definition \ref{defn:sep}.
\end{rem}

 %This may mean that simply applying the dimension reduction mapping $H$ on each $v_{j,k}$ will still 
 %yield a data set too large to transmit or to process. As a result we wish to find ways to reduce
 %further 
 %the amount of data. 
We now define frame and basis mappings 
 in terms of the mathematical model of Section \ref{sec:mathmodel}. 
 %They give two contrasting 
%ways within our
%reactive sensing theory so that a detection/classification scheme in 
%${\mathbb C}^n$ can be applied to combined data from all $N$ sensors.
To set the stage, we consider simultaneously all of the sensor output values in 
${\mathbb C}^{MN}$ at a fixed time $k, \, 1 \leq k \leq K$. 
Our frame and basis mappings are the means of transferring
this data to health space, ${\mathbb C}^n$, where $n$ should typically be less than $MN$. Recall that
\[
      {\rm card}\,\{v_{j,k}\} = KN, \quad {\rm card}\,(S) = M, \quad {\rm and} \quad S= \bigcup_{j=1}^N T_j.
\]
Formally, we define the set, 
\begin{equation}
\label{eq:V}
      V=\{v_{1,k}\oplus v_{2,k}\oplus \ldots  \oplus v_{N,k} : k = 1 \leq k \leq K\}  \subseteq {\mathbb C}^{MN},
      \quad {\rm card}\,(V) =  K,
\end{equation}
consisting of the $K$ $MN-$tuples, $(v_{1,k}(f_1), \ldots, v_{1,k}(f_M), \ldots, v_{N,k}(f_1), \ldots, v_{N,k}(f_M)),$
and we define the mapping,
 \[
      {\mathcal H} : {\mathbb C}^{MN} \longrightarrow {\mathbb C}^{nN},
 \]
 by applying $H$ to each appropriate copy of ${\mathbb C}^{M}$ in ${\mathbb C}^{MN}$ so that  
\[
{\mathcal H} (v_{1,k}\oplus v_{2,k}\oplus \ldots \oplus v_{N,k})= 
H(v_{1,k})\oplus H(v_{2,k})\oplus \ldots \oplus H(v_{N,k}) \in {\mathbb C}^{nN}.
\]

As noted in Subsection \ref{sec:theme}, Definition \ref{defn:BFmappings}b
is technical, and necessarily so in order to obtain the effective quantitative results of 
Section \ref{sec:datasim}. However, it really only reflects a computational means for using 
the over-completeness inherent in frames. Definition \ref{defn:BFmappings}a is also 
technical, and is an analogue for bases in order to compare the
roles of bases and frames in reactive sensing implementations.

\begin{defn}[Basis and frame mappings]
\label{defn:BFmappings}
Let $S$ be a separable sensing scenario with health space mapping $H$.

{\it a.i.} The first way to describe the state of $S$ at a fixed time, $k,\,
k=1,\ldots, K$, is to analyze the values assigned to parameters in $S_j$ by
sensors $s_j,$ that bear {\it primary responsibility}
 for those parameters
at time $k$, see the beginning of Subsection \ref{sec:sepsensscen}. This method 
will ignore the values assigned to $T_j\diagdown S_j$ by
sensor $s_j,$ and now allows us to define a mapping,
%{\it basis mapping}, $B,$ as 
%$$ B: {\mathbb C}^M \longrightarrow {\mathbb C}^n, $$
%where  $B:
\begin{equation}
\label{eq:vu}
        v_{j,k} \mapsto u_{j,k},
\end{equation}
where $v_{j,k} \in {\mathbb C}^M$ and $u_{j,k} \in {\mathbb C}^n$,
and where we shall give precise meaning to $u_{j,k}$ for a given $v_{j,k}$.
% is a
%vector whose coordinates are only non-zero for parameters $f \in S$ for
%which $s_j$ has primary responsibility.
%Thus, each sensor $s_j$ at time $k$
%reports the vector $v_{j,k}$ whose coordinates, $v_{j,k}(f),$ each give a
%value at a parameter $f\in S_j \subseteq S$

{\it a.ii.} To this end, we begin with 
a separable sensing scenario $S$ and the sets $I_j$ defined in
Definition \ref{defn:subsets} of Subsection \ref{sec:raddomharm}.
Recall that, notationally,
\[
  \forall i \in \{ 1,\ldots,n \}, \; \exists j=j(i) \in \{ 1,\ldots, N\}, \, n_j, \, {\rm and} \,
  \ell \in \{1, \ldots, n_j\}  \, 
  {\rm such} \, {\rm that} \, i = i_{\ell,j},
\]
where $n = n_1 + \cdots + n_N$.
We define the {\it basis mapping}, 
\[
     B: {\mathbb C}^{MN} \longrightarrow {\mathbb C}^n,
\]
by the formula
\begin{equation}
\label{eq:basis}
        B( \bigoplus_{j'=1}^N c_{j'}) (i_{\ell,j}) = H(c_j)(i_{\ell,j}) \in {\mathbb C},
\end{equation}
where 
$j \in \{1, \ldots, N\}$, $H(c_j) \in {\mathbb C}^n$ by definition of the 
given health space mapping $H$ associated with $S$,
$\ell \in \{1, \ldots, n_j\}$, and each $c_{j'} \in {\mathbb C}^M$, $1 \leq j' \leq N$.
Then, if we fix $j \in \{1, \ldots, N\}$ and $k \in \{1, \ldots, K\}$, and set $c_{j'} = 0 \in 
{\mathbb C}^M$ for each $j' \neq j$, we write
\[
            B( (0, \ldots, 0, v_{j,k}, 0, \ldots, 0)) = u_{j,k} \in {\mathbb C}^n. 
\]
%for each $1 \leq k \leq K$, and  by the formula,
Thus, for each $j \in \{1, \ldots, N\}$ and $k \in \{1, \ldots, K\}$, we have
\begin{equation}
\label{eq:ujk}
   u_{j,k} = 
   \begin{cases} 
     H(v_{j,k})(i)  & \text{if} \, i = i_{\ell,j} \, \text{for some} \, \ell \in \{1,\ldots, n_j\},     \\
     0   &  \text{if} \,  i = i_{\ell, j'} \, \text{for all}\,  j' \neq j.
   \end{cases}
\end{equation}

% WE MAY WANT TO RE-INSERT THIS ILLUSTRATION.
% To illustrate this definition of $u_{j,k}$, let $j \in \{1, \ldots, N\}$ be fixed, 
% Then, at time $k=1$, we assign the values of the vector $u_{j,1}$ 
% at each of the coordinates, $i = i_{\ell,j}$ for $\ell \in \{1, \ldots, n_j\}$, to be 
% $H(v_{j,1})(i)$. Next, at time $k=2$, we assign the values of the vector $u_{j,2}$ 
% at each of the coordinates, $i = i_{\ell,j}$ for $\ell \in \{1, \ldots, n_j\}$, to be 
% $H(v_{j,2})(i)$. We continue in this way for each $k \in \{1, \ldots, K\}$.
% and parameterized by $\ell \in \{1, \ldots, n_j\}$, we have a well-defined way to provide
% the value of the vector $u_{j,k} \in {\mathbb C}^n$ on this subset at each time $k$. One could think of
% data being provided by a ``one-armed bandit" for each ``pull" $k$.

{\it a.iii.} As a simple, illustrative example, consider a scenario with two sensors, $s_1$ and $s_2$, each of which 
produces a two dimensional vector for each time $k$. Suppose $s_1$ bears primary responsibility for the 
first coordinate, while $s_2$ bears primary responsibility for the second. Let us assume $H$ is the identity mapping, 
so that ${\mathbb C}^M = {\mathbb C}^n = {\mathbb C}^2$. Suppose at time $k$, $v_{1,k}=(10,2)$
and $v_{2,k}=(-1,7)$. Then, by \eqref{eq:ujk},
we have $u_{1,k}=(10,0)$ and $u_{2,k}= (0,7)$. We also have that one element 
of $V$, defined by Equation (\ref{eq:V}), will be $(10,2) \oplus (-1,7) \in {\mathbb C}^{2 \cdot 2} =  {\mathbb C}^4$.
 Applying the mapping $B$ will pick 
out the first element of the first vector and the second element of the second to form the result: 
$B( (10,2) \oplus (-1,7))= (10,7)$.

{\it a.iv.} We note that $B$ naturally decomposes into a 
composition. In fact, each element of $ {\mathbb C}^{MN}$ is mapped into
its own copy of health space by means of the $  {\mathcal H}$ mapping,
followed by the mapping,
$B'$, that uses \eqref{eq:basis} to select the 
values from each image $H(c_j) \in \CC^n$ in order to construct the resulting vector. Thus, we write
$$ 
        B = B' \circ {\mathcal H}. 
$$

{\it b.i.} The second way to describe the state of $S$ at a fixed time $k$ is to analyze all of the 
reported data
supplied by {\it all} of the sensors. Thus, rather than only selecting data from the sensor that bears 
primary responsibility for a parameter, we combine the data from {\it all} of the sensors. 
%to map the data to ${\mathbb C}^n.$ 
This approach allows us to define
a mapping,
\begin{equation}
\label{eq:vw}
         v_{j,k} \mapsto w_{j,k},
\end{equation}
where $v_{j,k} \in {\mathbb C}^M$, and where $w_{j,k} \in {\mathbb C}^n$ 
is only necessarily zero for components corresponding to
parameters for which $s_j$ does not report any values at all. 
The result of this strategy is to define what we call
{\it frame mappings}, and we shall give precise meaning to $w_{j,k}$ for a given $v_{j,k}$
in \eqref{eq:Fw}.

%It is also possible to take into account all of the
%reported data for a given parameter, $f,$ including redundant data from all
%of the sensors, not just the one bearing primary responsibility. 

{\it b.ii.} Unlike the situation for basis mappings described in part {\it a.ii},
there are many possible frame mappings that can be formulated. For concreteness 
we shall define one, in particular, that will exhibit properties that are most useful 
for frame mappings in general. We define the {\it magnitude sum frame mapping},
\[
     F: {\mathbb C}^{MN} \longrightarrow {\mathbb C}^n,
\]  
by the formula 
\begin{equation}
\label{eq:msfm}
        F( \bigoplus_{j'=1}^N c_{j'}) (i) = \sum_{j'=1}^N |H( c_{j'})(i)| \in {\mathbb C}, \quad 1 \leq i \leq n,
\end{equation}
where
$j' \in \{1, \ldots, N\}$, $H(c_{j'}) \in {\mathbb C}^n$ by definition of the 
given health space mapping $H$ associated with $S$,
%$\ell \in \{1, \ldots, n_j\}$, 
and each $c_{j'} \in {\mathbb C}^M$, $1 \leq j' \leq N$.
Then, if we fix $j \in \{1, \ldots, N\}$ and $k \in \{1, \ldots, K\}$, and set $c_{j'} = 0 \in 
{\mathbb C}^M$ for each $j' \neq j$, we write
%as follows: for a collection
%of vectors, $c_{j'} \in {\mathbb C}^M$, $1 \leq j' \leq N,$ we set
%where the mapping \eqref{eq:vw} is written as
\begin{equation}
\label{eq:Fw}
        F( (0, \ldots, 0, v_{j,k}, 0, \ldots, 0)) = w_{j,k}, \; {\rm where} \, {\rm each} \; 0 \in {\mathbb C}^M,
\end{equation}
taking into account that $H(0) = 0 \in \CC^n$, see Definition \ref{defn:sep}.

Thus, for each $j = 1, \ldots, N, \, k= 1, \ldots, K$, and $i = 1, \dots, n$, 
and using the fact that $H(0) = 0$, we have
%For the magnitude sum frame mapping, we use the fact that $H(0) = 0$ to compute
\begin{equation}
\label{eq:Hvw}
         w_{j,k}(i) = |H(v_{j,k})(i)|.
\end{equation}
 To see this
consider the following $N$-tuple of $M$-tuples for \eqref{eq:msfm}:
\[
     c_{j'} = 0 \in \CC^M \; {\rm if} \; j' \neq j \quad {\rm and} \quad c_{j'} = v_{j,k} \in \CC^M \; {\rm if} \; j' = j.
\]
Then,
\[
     F( \bigoplus_{j'=1}^N c_{j'}) (i) =  |H(v_{j,k})(i)|,
\]
and so \eqref{eq:Hvw} is obtained by \eqref{eq:Fw}.

{\it b.iii.} For the numerical example in part {\it a.iii}, the mapping, $v_{j,k} \mapsto w_{j,k}$,
is simply the magnitude sum frame mapping
on the components, and so $w_{1,k} = (10,2)$ and $w_{2,k} = (1,7)$. Evaluating $F$, we 
compute that
$F( (10,2)\oplus (-1,7))= (10,2) + (1,7) = (11,9)$.

{\it b.iv.} We note that $F$ naturally decomposes into a 
composition. In fact, each element of $ {\mathbb C}^{MN}$ is mapped into
its own copy of health space by means of the $  {\mathcal H}$ mapping,
followed by the mapping, 
$F'$, that is {\it defined} by computing the sum of the magnitudes of the components of each 
$H(c_j) \in \CC^n$ in order to construct the resulting vector, see \eqref{eq:msfm}. 
Thus, we write
\[ 
    F = F' \circ {\mathcal H}.
\]

{\it c.}
The basis mapping $B$ is linear and the magnitude sum frame
mapping $F$ is non-linear, although it could be linear on subspaces.
In general,
linearity is neither necessarily natural nor desirable for some
realistic sensing scenarios. For example, if the signal to noise ratio
(SNR) of data reported by different sensors varies, it may make sense
to scale data non-linearly before performing the addition in the frame
mapping $F$, see Subsection \ref{sec:snr}.

\end{defn}

\begin{rem}[A non-commutative diagram]
\label{rem:noncomm}
There is the following natural non-commutative diagram, Figure \ref{fig:noncom},
associated with Definition \ref{defn:BFmappings}.

\begin{figure}[!ht]
\centering
\begin{equation*}
\begin{CD}
 V\subset \CC^{MN} @>{\mathcal H}>> \CC^{nN}\\
@VV{\mathcal H}V @VVB'V\\
\CC^{nN} @>F'>> \CC^n
\end{CD}
\end{equation*}
\caption{Non-commutative diagram}
\label{fig:noncom}
\end{figure}

In order to illustrate the two flows, basis and frame, in Figure \ref{fig:noncom} and the 
precise definition of both $B' \circ {\mathcal H}$ and $F' \circ {\mathcal H}$,
recall that $\{I_j\}$ (Definition \ref{defn:subsets}) is a {\it partition} of
$\{1, \ldots, n\}$ and that some of the $I_j$ could be empty. Thus, if $i \in \{1, \ldots, n\}$,
then $i$ is in a unique $I_j$ and there is a unique $\ell \in \{1, \ldots, n_j\}$ such that
$i = i_{\ell,j}$, see ($\ref{eq:ijell}$) and ($\ref{eq:ik}$). Also, from the definition of 
${\mathcal H}$ we have
\[
          {\mathcal H} (v_k)= 
         H(v_{1,k})\oplus H(v_{2,k})\oplus \ldots \oplus H(v_{N,k}) \in {\mathbb C}^{nN},
\]
where $v_k = v_{1,k}\oplus v_{2,k}\oplus \ldots \oplus v_{N,k} \in V$.

Thus, for the case of basis mappings and for a fixed $i \in \{1, \ldots, n\}$, we
compute for any $k$, that
\[
         B' \circ {\mathcal H}(v_k)(i) = H(v_{j,k})(i_{\ell,j}) \in \CC,
\]
where $j$ and $\ell$ are specified by the initial choice of $i$.
Similarly, for the case of magnitude sum frame mappings and for a fixed $i \in \{1, \ldots, n\}$, we
compute for any $k$, that
\[
          F' \circ {\mathcal H}(v_k)(i) = \sum_{j'=1}^N |H(v_{j,k})(i_{\ell,j})| \in \CC.
\]

For the example of Definition \ref{defn:BFmappings} {\it a.iii}, {\it b.iii}, we note that
\[
  B' \circ {\mathcal H}\,((10,2) \oplus (-1,7)  = (10,7), 
\]
and
\[
  F' \circ {\mathcal H}\,((10,2) \oplus (-1,7) = (11,9).
\]

\end{rem}
%As noted in 
%Definition \ref{defn:BFmappings}, the mappings, $B'$ and $F'$, in Figure \ref{fig:noncom} 
%are related to the mappings, $B$ and $F,$ 
%for the particular $DFT$ examples in Section \ref{sec:dftsenscen} 
%by the compositions,
%\[
%   B = B' \circ {\mathcal H} \quad {\rm and} \quad  F = F' \circ {\mathcal H}.
%\]
%They can be computed in terms of a given dimension reduction mapping $H$ 
%and the data and/or problem dependent basis and frame mappings, $B$ and $F.$

\begin{rem}[An advantage of frame mappings]
\label{rem:frameadvantage}

Specific basis and frame mappings, $B$ and $F$, can be
constructed 
depending on the application.
In fact, the complexity of the mappings, $B$ and $F$, can be expected to 
be data and/or problem dependent, and there are many possible definitions.
However, in all cases, we want to distinguish quantitatively between the
singular role of a single sensor or small group of sensors associated with a component
of a sensing scenario (the mappings $B$) and the impact of {\it all}
of the sensors (the mappings $F$) to deal with the case that
a particular sensor or 
group of sensors might be disabled.

An advantage of a frame mapping is that in the event of the loss of one or 
more sensors, some of the missing data can be recovered. For example, suppose 
$s_j$ bears primary responsibility for a parameter $f$. As formulated in 
Remarks \ref{rem:rad}b and \ref{rem:primrespnonop}a, suppose there is an $i \in I_j$
associated with $f$. If sensor $s_j$ becomes 
non-operational, then the basis mapping vectors $u_{j,k}$ are all $0$. 
Further, by the formulation of a basis mapping, all of the other basis 
mapping vectors, $u_{j',k}, \, j'\ne j$, will have the property that $u_{j',k}(i) = 0$.
Thus, all vectors in the image of the basis
mapping will have their $i$-th component equal to $0$;
%There are, now, no vectors in the image of the basis 
%map for which $u(i) \ne 0$ 
and hence vital information has been lost. However, for the 
frame mapping, there may be a vector $w_{j',k},\, j' \ne j$ for which $w_{j',k}(i) \ne 0$. 
This will be the case, in particular, when the scenario is $j$-harmonious.
We  
shall explore this phenomenon in Subsection \ref{sec:bt}.

\end{rem}

%The
%separability property is stated in terms of the frame mapping $F$ in 
%Definition \ref{defn:sep}. Under $F,$ each $v_{j,k}$
%has the image $w_{j,k}$ in ${\mathbb C}^n$. Similarly, the separability
%property of Definition \ref{defn:sep} can be stated in terms of the basis mapping $B:
%v_{j,k} \mapsto u_{j,k}.$ Generally, it is desired that 
%pre-separability implies separability for both mappings; and it is important to establish
%this implication, especially for the frame mapping, in order that our reactive sensing theory 
%be applicable. 

 %Basis and frame mappings play a central role in determining
%whether a pre-separable sensing scenario is separable. By the very nature of
%bases, the natural basis mapping is
%usually comparatively simple, e.g., a
%projection, and will therefore preserve the separability property. The frame mapping
%can be more complicated. However, for some simple frame mappings

%%%%%%%%%%%%%%%%%%%%%%%
%%%%%%%%%%%%%%%%%%%%%%%%
%MRD170622: WE HAVE TOO MANY BASIC THEORY SECTIONS. FEEL FREE TO USE OTHER WORDS
%\subsection{Basic theory}
\subsection{Fundamental theorems}
\label{sec:bt}

In this subsection we prove the main results of reactive sensing theory.
We shall show that if a  basis mapping is applied to an appropriate set, then it will 
generate a basis, and if a frame mapping is applied to the same set, 
then it will generate a frame. These facts are essential in order for our mathematical
model to distinguish between sensor failure and critical sensed events.

For the basic set-up, consider the set $V$, which contains
the $K$ vectors, $v_{1,k}\oplus \ldots \oplus v_{j,k} \oplus \ldots \oplus v_{N,k}
\in V \subseteq {\mathbb C}^{MN}, \, 1\leq k \leq K$.  Each $v_{j,k}$ 
is mapped by $H$ to $H(v_{j,k}) \in {\mathbb C}^n.$ Using the notation from Subsection \ref{sec:bmfm},
we write
${\mathcal H}:V \rightarrow {\mathbb C}^{nN}$ evaluated at  
$v_{1,k}\oplus \ldots \oplus v_{j,k} \oplus \ldots \oplus v_{N,k}$ as
\[
 {\mathcal H}(v_{1,k}\oplus \ldots \oplus v_{j,k} \oplus \ldots \oplus v_{N,k}) 
 = (H(v_{1,j}),...,H(v_{j,k}),...,H(v_{N,k})) \in {\mathbb C}^{nN},
 \] 
where each $H(v_{j,k})$ is a vector of the form,
\[
(H(v_{j,k})(1),...,H(v_{j,k})(i),...,H(v_{j,k})(n)) \in  {\mathbb C}^n.
\]

We now look at the individual components for the image vectors under this mapping. 
Specifically, we define the {\it projective} set $X$ as 
\begin{equation}
\label{eq:projframe}
       X = \{ (0,...,0,H(v_{j,k})(i),0,...,0) \in {\mathbb C}^{n} : 1\le i \le n,\  1\le j\le N,\  1\le k \le K\},
\end{equation}
where $H(v_{j,k})(i)$ is the $i$-th component of the vector $H(v_{j,k})$
in $\CC^n.$ Note that
${\rm card}\,X = nNK$.

\begin{thm}[Conditions for projective multiplicative frames]
\label{thm:frame1}
Let $S$ be a separable sensing scenario with partition $\{S_j\}_{j=0}^N,$ 
covering $\{T_j\}_{j=1}^N,$ and mappings $s_j:T_j \times \{1, \ldots ,K\} \rightarrow \CC^M$, where 
$ s_j(f,k) = v_{j,k}(f) $  for $f \in T_j$ and $ v_{j,k}(f) = 0 $ for $ f \in S \backslash T_j .$  Let
$H : {\mathbb C}^M \rightarrow {\mathbb C}^n$ be a health space mapping of the form, 
\[
      \forall\, j = 1,\ldots, N \;{\rm and}\; \forall \,k = 1,\ldots, K, \quad H(v_{j,k})(i) = 
      {\gamma}_{j}(i) \, {\alpha}_{k}(i),
\]
where $i = 1,\ldots, n.$ 
If $S$ is $i-$radiative and $i-$dominant
for each $i = 1,\ldots ,n,$ then
$X \subseteq \CC^n,$ defined by Equation \eqref{eq:projframe}, is a multiplicative frame for $ \CC^n.$
\end{thm}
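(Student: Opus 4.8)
The plan is to reduce the statement to an application of Theorem~\ref{prop:mf1} by exhibiting the set $X$ as a multiplicative frame built from two auxiliary families. First I would note that each element of $X$ has the form $(0,\ldots,0,H(v_{j,k})(i),0,\ldots,0) = \gamma_j(i)\,\alpha_k(i)\, e_i$, where $e_i$ is the $i$-th canonical basis vector of $\CC^n$. Thus $X$ is naturally indexed by the triple $(j,k,i)$, and I would like to view the index $i$ as playing the role of one of the ``frame'' indices and the pair $(j,k)$ — or rather $i$ together with one of $j,k$ — as the other. The cleanest route is to fix the decomposition so that the multiplicative structure in Definition~\ref{defn:mf} is visible with the roles of the two factor families taken by $\{\gamma_j(i)e_i\}$ and $\{\alpha_k(i)e_i\}$, suitably re-indexed over $\CC^n$; since $X$ has cardinality $nNK$ and the frame $\{e_i\}$ for $\CC^n$ has cardinality $n$, the over-completeness is coming precisely from summing over $j$ and $k$.

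Next I would establish the lower frame bound, which is where radiativity and dominance enter. Given $x = (x(1),\ldots,x(n)) \in \CC^n$, I compute
\[
   \sum_{i,j,k} |\langle x, (0,\ldots,H(v_{j,k})(i),\ldots,0)\rangle|^2
   = \sum_{i=1}^n |x(i)|^2 \sum_{j=1}^N \sum_{k=1}^K |\gamma_j(i)|^2 |\alpha_k(i)|^2
   = \sum_{i=1}^n |x(i)|^2 \Big(\sum_{j=1}^N |\gamma_j(i)|^2\Big)\Big(\sum_{k=1}^K |\alpha_k(i)|^2\Big).
\]
By $i$-dominance (Definition~\ref{defn:raddom}b) there is $j_i$ with $\gamma_{j_i}(i) \neq 0$, so $\sum_j |\gamma_j(i)|^2 \geq |\gamma_{j_i}(i)|^2 > 0$; by $i$-radiativity there is $k_i$ with $\alpha_{k_i}(i)\neq 0$, so $\sum_k |\alpha_k(i)|^2 \geq |\alpha_{k_i}(i)|^2 > 0$. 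Setting
\[
   A = \min_{1\le i \le n}\ \Big(\sum_{j=1}^N |\gamma_j(i)|^2\Big)\Big(\sum_{k=1}^K |\alpha_k(i)|^2\Big) > 0,
   \qquad
   B = \max_{1\le i \le n}\ \Big(\sum_{j=1}^N |\gamma_j(i)|^2\Big)\Big(\sum_{k=1}^K |\alpha_k(i)|^2\Big),
\]
the displayed sum lies between $A\|x\|_2^2$ and $B\|x\|_2^2$, which is exactly the frame inequality \eqref{eq:frame} for $X$; both bounds are finite and positive precisely because the two hypotheses guarantee each bracketed product is nonzero for every $i$. The multiplicative property is immediate from $H(v_{j,k})(i) = \gamma_j(i)\alpha_k(i)$, so $X$ is a multiplicative frame. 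Alternatively, one can invoke Theorem~\ref{prop:mf1} directly after discarding the zero vectors among the $\{\gamma_j(i)e_i\}$ and $\{\alpha_k(i)e_i\}$ and observing that dominance and radiativity supply exactly the positivity hypotheses \eqref{eq:yposmin} and \eqref{eq:zposmin}.

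The main obstacle I anticipate is bookkeeping rather than mathematics: one must be careful about how $X$, which lives in $\CC^n$ but is indexed by triples $(i,j,k)$, is to be matched against the two-index multiplicative structure of Definition~\ref{defn:mf}, since the index $i$ appears in \emph{both} purported factor families (it is the ambient coordinate, not a free frame index). I would resolve this by doing the frame-bound estimate directly in coordinates as above — which sidesteps any need to force $X$ literally into the shape of Definition~\ref{defn:mf} — and then remark separately that it also satisfies the multiplicative factorization with $y_{(i,j)} = \gamma_j(i) e_i$ and $z_{(i,k)} = \alpha_k(i) e_i$. A secondary point to check is that discarding zero vectors does not cost us the spanning/frame property: this is fine because for each $i$ at least one $\gamma_j(i)\alpha_k(i) e_i$ is nonzero, so the surviving vectors still span $\CC^n$.
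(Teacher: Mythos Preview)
Your argument is correct and in fact more informative than the paper's. The paper proves the theorem in three lines: for each $i$, radiativity gives $k_i$ with $\alpha_{k_i}(i)\neq 0$ and dominance gives $j_i$ with $\gamma_{j_i}(i)\neq 0$, so $H(v_{j_i,k_i})(i)\,e_i$ is a nonzero multiple of $e_i$ lying in $X$; hence $X$ contains a basis for $\CC^n$, and is therefore a frame. Your direct computation of
\[
   \sum_{i,j,k}|\langle x,\,H(v_{j,k})(i)\,e_i\rangle|^2
   = \sum_{i=1}^n |x(i)|^2\Big(\sum_j|\gamma_j(i)|^2\Big)\Big(\sum_k|\alpha_k(i)|^2\Big)
\]
reaches the same conclusion and additionally produces explicit frame bounds, which the paper does not. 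Both routes use the hypotheses in exactly the same way; yours is just the quantitative version.

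One small correction to your bookkeeping on the multiplicative structure: the factorization $y_{(i,j)}=\gamma_j(i)e_i$, $z_{(i,k)}=\alpha_k(i)e_i$ does not quite fit Definition~\ref{defn:mf}, because there the two indices run independently, and pairing $(i_1,j)$ with $(i_2,k)$ for $i_1\neq i_2$ gives the zero vector rather than an element of $X$. A cleaner choice is to take one factor family indexed by $j$ alone, $y_j=\gamma_j\in\CC^n$, and the other by pairs $(i,k)$, $z_{(i,k)}=\alpha_k(i)\,e_i\in\CC^n$; then $y_j(i')\,z_{(i,k)}(i') = \gamma_j(i')\alpha_k(i)\delta_{i,i'}$ recovers exactly $H(v_{j,k})(i)\,e_i$, so $X=\{x_{j,(i,k)}\}$ with the required product form. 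The paper itself does not spell this out either, but since you raised the point it is worth getting right.
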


\begin{proof} Take any $i \in \{1,...,n\},$ noting it is of the form $i=i_{\ell,j}$ for some 
$j \in \{1,...,N\}$ and some $\ell \in \{1,...,n_j\}$, see Definition \ref{defn:subsets}.

From radiativity, for $i=i_{\ell,j},$ there is $k_i$ such that $\alpha_{k_i}(i) \ne 0$. 
Since $i=i_{\ell,j}$ and $s_j$ is $i$-dominant, we have $\gamma_j(i_{\ell,j}) \ne 0.$ Therefore,
$H(v_{j,k_i})(i_{\ell,j}) \ne 0$. Consequently, 
\[
       (0,\dots,0, H(v_{j,k_i})(i),0,\dots,0) \in X \subseteq \CC^n
\] 
is not the zero vector. Hence, the $n$ vectors we obtain, one for each $i,$ form a basis for $\CC^n$ 
and the result is proved.
\end{proof}

\begin{thm}[Harmonious multiplicative frames]
\label{thm:frame2}
Let $S$ be a separable sensing scenario as described in Theorem \ref{thm:frame1}.
Assume $S$ is $i-$radiative and $i-$dominant
for each $i = 1,\ldots ,n$, and assume $S$ is  $j$-harmonious. If sensor $s_j$ fails 
and becomes non-operational, 
then $X \subseteq \CC^n $ is still a multiplicative frame for $ \CC^n.$
\end{thm}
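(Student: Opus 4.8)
The plan is to show that, even after $s_j$ becomes non-operational, the subcollection of $X$ consisting of the vectors built from the still-operational sensors $s_{j'}$, $j'\ne j$, already spans $\CC^n$; since a superset of a spanning set spans, $X$ itself remains a frame, and its multiplicative structure is then inherited exactly as in the proof of Theorem \ref{thm:frame1}. The key observation is that the failure of $s_j$ alters only the numbers $\gamma_j(i)$, hence only the vectors $(0,\dots,0,H(v_{j,k})(i),0,\dots,0)$ carrying that fixed index $j$: the radiativity factors $\alpha_k(i)$, which are a property of the scenario and not of any one sensor, and the dominance factors $\gamma_{j'}(i)$ for $j'\ne j$, which encode the sensitivities of the sensors that are still working, are all untouched by the failure.

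First I would fix an arbitrary $i\in\{1,\dots,n\}$ and, using that $\{I_{j'}\}$ is a partition of $\{1,\dots,n\}$ (Definition \ref{defn:subsets}), write $i=i_{\ell,j_0}$ for the unique $j_0$ with $i\in I_{j_0}$ and the corresponding $\ell\in\{1,\dots,n_{j_0}\}$. If $j_0\ne j$, then by \eqref{eq:ik} there is a time $k$ with $H(v_{j_0,k})(i)\ne 0$; since $s_{j_0}$ is unaffected by the failure of $s_j$, this persists, so $(0,\dots,0,H(v_{j_0,k})(i),0,\dots,0)\in X$ is a nonzero vector supported on coordinate $i$. If instead $j_0=j$, I would invoke $j$-harmoniousness (Definition \ref{defn:disjharm}): there is $j'\ne j$ with $\gamma_{j'}(i_{\ell,j})\ne 0$, and, by $i$-radiativity (Definition \ref{defn:raddom}), there is $k_i$ with $\alpha_{k_i}(i)\ne 0$. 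Both quantities survive the failure of $s_j$, so $H(v_{j',k_i})(i)=\gamma_{j'}(i)\,\alpha_{k_i}(i)\ne 0$, and again $(0,\dots,0,H(v_{j',k_i})(i),0,\dots,0)\in X$ is a nonzero vector supported on coordinate $i$.

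In either case I have exhibited, for each $i$, a nonzero scalar multiple of the $i$-th standard basis vector of $\CC^n$ lying in $X$; these $n$ vectors form a basis of $\CC^n$, so $X$ spans $\CC^n$ and hence is a frame. Since every element of $X$ still has the form $(0,\dots,0,H(v_{j'',k})(i),0,\dots,0)$ with $H(v_{j'',k})(i)=\gamma_{j''}(i)\,\alpha_k(i)$, the multiplicative structure of Definition \ref{defn:mf} is retained exactly as in Theorem \ref{thm:frame1}, so $X$ is a multiplicative frame for $\CC^n$. The step I expect to be the crux is the case $j_0=j$: one must be sure that $j$-harmoniousness supplies a genuinely \emph{different} operational sensor $j'\ne j$ with $\gamma_{j'}(i)\ne 0$, and that this can be paired with a radiative time for the \emph{same} coordinate $i$ — which is legitimate precisely because $\alpha_k(i)$ depends only on $(i,k)$ and not on the choice of sensor. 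One should also record the degenerate case $n_j=0$ (so $I_j=\emptyset$), in which the second case never arises and the argument reduces to the first case alone.
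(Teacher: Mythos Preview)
Your proposal is correct and follows essentially the same two-case argument as the paper's proof: for $i\in I_{j_0}$ with $j_0\ne j$ you recover a nonzero coordinate-$i$ vector exactly as in Theorem~\ref{thm:frame1}, and for $i\in I_j$ you combine $j$-harmoniousness with $i$-radiativity to produce one via a different operational sensor $j'$. Your added remarks---that failure of $s_j$ affects only $\gamma_j$ and not the $\alpha_k$ or the other $\gamma_{j'}$, and the note on the degenerate case $n_j=0$---are helpful clarifications not spelled out in the paper.
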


\begin{proof} Fix $j$. Since $s_j$ is {\it non-operational} we have that
\[
\exists \ell \in \{1,\ldots,n_j\} \neq \emptyset \quad \text{such that} \quad
\gamma_j(i_{\ell,j}) = 0.
\]
%As an aside, the basis mapping vectors $u_{j,k}$ are all $0$. 
%Further, by the formulation of a basis mapping, all of the other basis 
%mapping vectors, $u_{j',k}, \, j'\ne j$, will have the property that $u_{j',k}(i) = 0$.
%Thus, all vectors in the image of the basis
%mapping will have their $i$-th component equal to $0$;
%and hence vital information has been lost. We shall explore this further in 
%Theorem \ref{thm:basismapping} and Corollary \ref{cor:basismapping}.
Then, as before,
\[
      \forall i=i_{\ell,j'},\  \ell \in \{1,...,n_{j'}\},\  j'\ne j, \quad  
     (0,\dots,0, H(v_{j,k_i})(i),0,\dots,0) \ne 0 \in \CC^n.  
\] 
This gives $n-n_j$ non-zero basis vectors.

For $i=i_{\ell,j},\ \ell \in \{1,...,n_j\}$, we may have $\gamma_j(i_{\ell,j})=0.$ 
However, by the $j-$harmonious property,
\[
        \exists j'_{\ell,j} \quad \text{such that} \quad\gamma_{j'_{\ell,j}}(i_{\ell,j}) \ne 0.
\] 
From $i$-radiativity, 
there is $k_i\ \text{such that} \ \alpha_{k_i}(i_{\ell,j}) \ne 0$. Therefore, we have
\[
       H(v_{j'_{\ell,j},k_i})(i_{\ell,j}) = \gamma_{j'_{\ell,j}}(i_{\ell,j})\alpha_{k_i}(i_{\ell,j}) \ne 0.
\] 
Thus,
\[
     (0,...,0,H(v_{j'_{\ell,j},k_i})(i_{\ell,j}),0,...,0) \ne 0 \in \CC^n.
\]
This gives the remaining $n_j$
non-zero basis vectors, thereby creating a basis in $\CC^n,$ and this completes the proof.
 \end{proof}

We now relate Theorems \ref{thm:frame1} and \ref{thm:frame2} to the basis 
and frame mappings defined in Subsection \ref{sec:bmfm}.

As above, $ V = \{v_{1,k}\oplus v_{2,k}\oplus ... \oplus v_{N,k} : k = 1,...,K\} \subseteq {\mathbb C}^{MN},$ 
and we have $B(V)\subseteq {\mathbb C}^{n}$ and $F(V)\subseteq {\mathbb C}^{n},$ as well as
the compositions,
$ B = B' \circ {\mathcal H}$ and $F = F' \circ {\mathcal H}.$ We want to examine the component
mappings, $B'$ and $F'$. To this end we look at the coordinate components of the elements of
 ${\mathcal H}(V) \subseteq{\mathbb C}^{nN},$ 
obtained by projecting each image vector onto its $nN$ coordinates. Thus, for each vector
$v \in V, v=v_{1,k}\oplus v_{2,k}\oplus ... \oplus v_{N,k} $ for some $k$, we obtain $nN$ 
{\it projection} vectors of the form
$(0,0,0...,0,(0,0,...,0,H(v_{j,k})(i),0,...,0),0,...,0)$. Note that the 0s inside only one set of parentheses 
are the 
$0\in {\mathbb C}^n$ vectors, while the 0s in the inner set of parentheses are $0 \in {\mathbb C}$. 
Further, the
cardinality of the set of all such projection vectors 
is $nNK$. We wish to look at a subset of these vectors obtained by 
taking radiativity into account.
Assuming we are working with a radiative separable sensing scenario, then, for each $i,\ i=1,...,n$, 
we can find $k=k_i$,
such that $\alpha_{k_i}(i) \neq 0$. 
%We now define the set $Z$ to be only the above defined 
%coordinate components where
%$k=k_i$, that is, 
As such, we define
\[
        Z =\{ (0,0,..,0,(0,0,...0,H(v_{j,k_i})(i),0,...,0),0...,0) , j=1,...,N, \text{ and } i=1,...,n\} \subseteq
        \CC^{nN},
\]
where, once again, the 0s inside only one set of parentheses 
are the 
$0\in {\mathbb C}^n$ vectors, while the 0s in the inner set of parentheses are $0 \in {\mathbb C}$. 
Note that card(Z)=$nN$. We now apply the mappings $B'$ and $F'$ to the set $Z$.

%MIKE: delete this commented out crap. only keeping it now in case I need to retype later....

%%We wish to isolate the coordinates of

%%$ {\mathbb C}^{n}$ so we define projection mappings $P_i: {\mathbb C}^{n}\rightarrow 
%{\mathbb C}^{n}$ as follow: let

%%$c=(c_1, c_2, ... , c_n) \in {\mathbb C}^{n}$ then $P_i(c) = (0,0,...,0,c_i,0,....,0)$

%%where the $c_i$ appears in the $i$-th location.
%%%%%%%%%%%%%%%%%%%%%%%%%%%%%

\begin{thm}[Basis mappings and a basis for $\CC^n$]
\label{thm:basismapping}
Let $S$ be a separable sensing scenario as described in Theorem \ref{thm:frame1}.
Assume that $S$ is $i-$radiative and $i-$dominant for each $i = 1,\ldots ,n$.
The set, $B'(Z)$, is the union of a basis for $\CC^n$ and the vector,
$0\in\CC^n$.
\end{thm}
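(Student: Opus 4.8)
The plan is to mimic the structure of the proof of Theorem~\ref{thm:frame1}, tracking how the basis vectors constructed there are transported through the component mapping $B'$. Recall that $Z \subseteq \CC^{nN}$ consists of the $nN$ vectors of the form $(0,\dots,0,(0,\dots,0,H(v_{j,k_i})(i),0,\dots,0),0,\dots,0)$, one for each pair $(i,j)$ with $i \in \{1,\dots,n\}$ and $j \in \{1,\dots,N\}$, where $k_i$ is chosen (by radiativity) so that $\alpha_{k_i}(i) \neq 0$. The mapping $B'$ acts on $\CC^{nN}$ by the rule of \eqref{eq:basis}: for a fixed output coordinate $i$, there is a \emph{unique} $j = j(i)$ and $\ell$ with $i = i_{\ell,j}$, and $B'$ reads off the $i$-th entry from the $j(i)$-th block of health space. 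So the first step is to compute $B'$ on each element of $Z$ and observe that there are only two possibilities.

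First I would fix an element $z_{i,j} \in Z$ coming from the pair $(i,j)$, so $z_{i,j}$ has the value $H(v_{j,k_i})(i)$ sitting in the $i$-th slot of the $j$-th block and zeros elsewhere. Applying $B'$, the output coordinate $i'$ of $B'(z_{i,j})$ is nonzero only if $i' = i$ \emph{and} $j$ is the distinguished index $j(i)$ for which $i = i_{\ell,j(i)}$. Hence: if $j = j(i)$, then $B'(z_{i,j})$ is the vector with $H(v_{j(i),k_i})(i)$ in position $i$ and zeros elsewhere; if $j \neq j(i)$, then $B'(z_{i,j}) = 0 \in \CC^n$. This shows $B'(Z) \subseteq \{0\} \cup \{\,c_i e_i : i = 1,\dots,n\,\}$ where $e_i$ is the $i$-th standard basis vector of $\CC^n$ and $c_i = H(v_{j(i),k_i})(i)$.

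The second step is to show each $c_i$ is nonzero, which is exactly where radiativity and dominance enter, just as in Theorem~\ref{thm:frame1}. Writing $i = i_{\ell,j(i)}$, radiativity gives $\alpha_{k_i}(i) \neq 0$ by the choice of $k_i$, and $i$-dominance (with the dominant sensor being $s_{j(i)}$, since $i \in I_{j(i)} \subseteq J_{j(i)}$ forces $\gamma_{j(i)}(i) \neq 0$ --- compare Definition~\ref{defn:op}) gives $\gamma_{j(i)}(i) \neq 0$. Therefore $c_i = \gamma_{j(i)}(i)\,\alpha_{k_i}(i) \neq 0$. Since the nonzero vectors $c_1 e_1, \dots, c_n e_n$ are a rescaling of the standard basis, they form a basis for $\CC^n$; together with the zero vector (which appears as the image of the $z_{i,j}$ with $j \neq j(i)$, and such $j$ exist as soon as $N \geq 2$, or can simply be noted as the only other possible output), we conclude $B'(Z)$ is the union of a basis for $\CC^n$ and $\{0\}$.

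I do not expect a genuine obstacle here: the argument is essentially bookkeeping with the partition $\{I_j\}$ and a reuse of the radiativity/dominance nonvanishing already established in Theorem~\ref{thm:frame1}. The one point requiring a little care is the exact correspondence between ``$s_{j(i)}$ is $i$-dominant'' and ``$\gamma_{j(i)}(i) \neq 0$'': one must invoke that $i = i_{\ell,j(i)} \in I_{j(i)}$ and hence, by Definition~\ref{defn:op} (or equivalently by the definition of $J_j$ together with $H(v_{j,k})(i) = \gamma_j(i)\alpha_k(i)$), that $\gamma_{j(i)}(i) \neq 0$, so that $j(i)$ serves as the witness $j_i$ in the definition of $i$-dominance. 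The rest is the observation that $B'$ restricted to $Z$ produces, coordinate by coordinate, either a nonzero multiple of a standard basis vector or the zero vector, and that collecting these over all $i$ yields precisely a (rescaled) basis of $\CC^n$ plus the zero vector.
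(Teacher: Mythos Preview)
Your proposal is correct and follows essentially the same approach as the paper's proof: compute $B'(z_{j,i})$ for each element of $Z$, observe it vanishes unless $j$ is the unique index with $i \in I_j$, and then use radiativity together with $i = i_{\ell,j(i)} \in I_{j(i)}$ to conclude $\gamma_{j(i)}(i)\alpha_{k_i}(i) \neq 0$, yielding exactly one nonzero scalar multiple of each standard basis vector $e_i$. Your extra care in justifying $\gamma_{j(i)}(i) \neq 0$ via Definition~\ref{defn:op} (rather than simply citing ``since $i = i_{\ell,j}$'' as the paper does) and your remark that the zero vector actually occurs only when some pair $(i,j)$ has $j \neq j(i)$ are welcome clarifications, but do not change the substance of the argument.
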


\begin{proof}
Let $z_{j,i}= (0,...,0,(0,...,0,H(v_{j,k_i})(i),0,...,0),0,...,0)$. Then, from the definition of $B$ and $B',$ we have
$B'(z_{j,i}) =0 \in \CC^n $ unless $i=i_{\ell,j}$.
In this case we have $B'(z_{j,i})(i') =0 \in \CC $ except when $ i'=i=i_{\ell,j}$. $B'(z_{j,i})(i) = \gamma_j(i) \alpha_{k_i}(i) \neq 0 $
since $i=i_{\ell,j}$. This gives precisely $n$ non-zero vectors one for each $i_{\ell,j}$, 
each a different canonical basis vector.
\end{proof}

\begin{cor}[Non-operational-sensors and non-bases]
\label{cor:basismapping}
Let $S$ be a separable sensing scenario as described in Theorem \ref{thm:frame1}.
Assume $S$ is $i-$radiative and $i-$dominant for each $i = 1,\ldots ,n$. 
Let
$s_j$ be a sensor with $n_j \neq 0$. If $s_j$ is non-operational,
then $B'(Z)$ will no longer span ${\mathbb C}^n$ and, hence, will no longer contain a basis.
\end{cor}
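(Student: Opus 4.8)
The plan is to trace through the construction of $B'(Z)$ in Theorem~\ref{thm:basismapping} and see exactly which of the $n$ canonical basis vectors disappears when $s_j$ becomes non-operational. Recall from that theorem's proof that for a working scenario each element $z_{j',i} \in Z$ maps under $B'$ either to $0 \in \CC^n$ (when $i$ is not of the form $i_{\ell,j'}$) or to a nonzero multiple $\gamma_{j'}(i)\alpha_{k_i}(i)$ of the $i$-th canonical basis vector $e_i$ (when $i = i_{\ell,j'}$ for some $\ell$). So the nonzero vectors in $B'(Z)$ are precisely the $e_i$, rescaled, one for each $i \in \{1,\ldots,n\}$, indexed by the unique pair $(j',\ell)$ with $i = i_{\ell,j'}$.

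First I would invoke the hypothesis $n_j \neq 0$: this guarantees $I_j \neq \emptyset$, so there is at least one index of the form $i_{\ell,j}$ with $\ell \in \{1,\ldots,n_j\}$. Next I would use the definition of non-operational (Definition~\ref{defn:op}): there exists $\ell \in \{1,\ldots,n_j\}$ such that $\gamma_j(i_{\ell,j}) = 0$. Set $i^\ast = i_{\ell,j}$ for this $\ell$. The key observation is that $i^\ast$ lies in $I_j$ and $\{I_{j'}\}$ is a partition of $\{1,\ldots,n\}$, so $i^\ast$ is \emph{not} of the form $i_{\ell',j'}$ for any $j' \neq j$. Therefore, for every $z_{j',i^\ast} \in Z$ with $j' \neq j$, the mapping $B'$ sends it to $0 \in \CC^n$ because $i^\ast \neq i_{\ell',j'}$; and for $z_{j,i^\ast}$, we now get $B'(z_{j,i^\ast})(i^\ast) = \gamma_j(i^\ast)\alpha_{k_{i^\ast}}(i^\ast) = 0$ since $\gamma_j(i^\ast) = 0$. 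Hence the $i^\ast$-th coordinate of \emph{every} vector in $B'(Z)$ is zero.

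It follows that $B'(Z) \subseteq \{x \in \CC^n : x(i^\ast) = 0\}$, a proper subspace of $\CC^n$, so $B'(Z)$ cannot span $\CC^n$; and since any basis for $\CC^n$ spans $\CC^n$ (Definition~\ref{defn:frame}\,e), $B'(Z)$ contains no basis for $\CC^n$. This completes the proof.

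I do not anticipate a serious obstacle here — the argument is essentially bookkeeping with the partition $\{I_j\}$ and the definitions. The one point requiring care is ensuring $i^\ast$ genuinely fails to appear as an index $i_{\ell',j'}$ for other sensors $j'$: this is exactly where disjointness of the $I_{j'}$ (as opposed to the $J_{j'}$, which may overlap) is essential, and it is precisely the structural reason a basis mapping cannot recover lost data whereas a frame mapping, built from the $J_j$ covering, can — the contrast highlighted in Remark~\ref{rem:frameadvantage}.
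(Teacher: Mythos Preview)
Your proposal is correct and follows essentially the same approach as the paper: identify the index $i^\ast = i_{\ell,j}$ where $\gamma_j$ vanishes, observe that the corresponding canonical direction disappears from $B'(Z)$, and conclude $B'(Z)$ cannot span $\CC^n$. Your write-up is in fact more explicit than the paper's---you spell out why the partition property of $\{I_{j'}\}$ forces $B'(z_{j',i^\ast}) = 0$ for $j' \neq j$, whereas the paper simply counts that at most $n-1$ nonzero vectors remain.
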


\begin{proof}
$s_j$ non-operational implies there exists an $\ell, \ell \in \{1,\ldots,n_j\} \neq \emptyset$ with
$\gamma_j(i_{\ell,j}) = 0$. Then, from the proof of Theorem \ref{thm:basismapping} the set of $n$ vectors given at
the end contains $B'(z_{j,i})$ which is of the form $(0,0,0,...,H(v_{j,k})(i),0,...,0) = 0 \in {\mathbb C}^n$ since
$H(v_{j,k})(i) = \gamma_j(i) \, \alpha_k(i) =0 $. Thus, there cannot be more than $n-1$ non-zero vectors in $B'(Z)$
and it cannot span ${\mathbb C}^n$
\end{proof}

The situation for the frame mapping is more complicated since there can be many possible frame mappings.
The next result deals with the magnitude sum frame mapping, but similar results can be 
obtained for other
frame mappings.

\begin{thm}[Frame mappings and multiplicative frames for $\CC^n$]
\label{thm:framemapping}
Let $S$ be a separable sensing scenario as described in Theorem \ref{thm:frame1}.
Assume $S$ is $i-$radiative and $i-$dominant for each $i = 1,\ldots ,n$. 
Consider the magnitude sum frame mapping, $F$.The set, $F'(Z)$, 
%is the zero vector set, $\{0: 0\in\CC^n\}$,
%union with
contains a multiplicative frame for $\CC^n$.
\end{thm}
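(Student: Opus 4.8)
The plan is to imitate the proof of Theorem \ref{thm:basismapping}: evaluate $F'$ explicitly on the vectors of $Z$, and then read off both the frame property and the multiplicative factorization directly. Recall that $Z$ consists of the $nN$ vectors
\[
   z_{j,i} = (0,\ldots,0,(0,\ldots,0,H(v_{j,k_i})(i),0,\ldots,0),0,\ldots,0) \in \CC^{nN},
\]
where $i$-radiativity has been used to choose the time index $k_i$ with $\alpha_{k_i}(i) \neq 0$; here the outer zeros are the vectors $0 \in \CC^n$ and the inner zeros are the scalars $0 \in \CC$. Since the only nonzero coordinate of $z_{j,i}$ sits in the $j$-th copy of $\CC^n$ in position $i$, and since $F'$ acts coordinatewise by $F'(\bigoplus_{j'=1}^N c_{j'})(i') = \sum_{j'=1}^N |c_{j'}(i')|$ for $c_{j'} \in \CC^n$ (see \eqref{eq:msfm} and Remark \ref{rem:noncomm}), the key computation will be
\[
   F'(z_{j,i}) = |H(v_{j,k_i})(i)|\,e_i = |\gamma_j(i)\,\alpha_{k_i}(i)|\,e_i \in \CC^n,
\]
where $e_i$ denotes the $i$-th canonical basis vector of $\CC^n$.

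With this identity in hand, I would first extract a basis for $\CC^n$ from $F'(Z)$. Fixing $i \in \{1,\ldots,n\}$, $i$-dominance supplies an index $j_i$ with $\gamma_{j_i}(i) \neq 0$, and $\alpha_{k_i}(i) \neq 0$ by the choice of $k_i$, so $F'(z_{j_i,i}) = |\gamma_{j_i}(i)\,\alpha_{k_i}(i)|\,e_i$ is a nonzero scalar multiple of $e_i$. This shows that $F'(Z)$ contains the $n$ mutually orthogonal nonzero vectors $|\gamma_{j_i}(i)\,\alpha_{k_i}(i)|\,e_i$, $i = 1,\ldots,n$, which span $\CC^n$ and therefore form a frame for $\CC^n$ (a spanning sequence, see \eqref{eq:span}) -- exactly the role played by the canonical basis vectors in Theorem \ref{thm:basismapping}.

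Then I would produce the multiplicative structure, working on all of $F'(Z)$ at once. Relabel the elements of $F'(Z)$ as the family $\{x_{j,m} : 1 \le j \le N,\ 1 \le m \le n\}$ with $x_{j,m} := F'(z_{j,m}) = |\gamma_j(m)\,\alpha_{k_m}(m)|\,e_m$, so that the second index $m$ plays the role of the time index in Definition \ref{defn:mf} (with index range $n$ in place of $K$). Setting
\[
   y_j = (|\gamma_j(1)|,\ldots,|\gamma_j(n)|) \in \CC^n
   \quad\text{and}\quad
   \zeta_m = |\alpha_{k_m}(m)|\,e_m \in \CC^n,
\]
a direct check gives $x_{j,m}(i) = y_j(i)\,\zeta_m(i)$ for every $i$, since $\zeta_m(i)$ vanishes unless $i = m$ while $y_j(m) = |\gamma_j(m)|$. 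Hence $\{x_{j,m}\}$ satisfies the factorization demanded by Definition \ref{defn:mf} and, by the previous paragraph, contains a basis, so it is a multiplicative frame for $\CC^n$; in particular $F'(Z)$ contains one.

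I do not expect a genuine obstacle: the statement is essentially a direct computation married to the radiativity and dominance hypotheses, exactly parallel to Theorems \ref{thm:frame1} and \ref{thm:basismapping}. The one place where care is needed is the nested-tuple bookkeeping when evaluating $F'(z_{j,i})$ -- keeping straight which zeros are vectors in $\CC^n$ and which are scalars in $\CC$ -- and verifying that the exhibited $y_j$ and $\zeta_m$ genuinely lie in $\CC^n$ and reproduce $x_{j,m}$ on \emph{all} coordinates $i$, not just those where $z_{j,m}$ is supported.
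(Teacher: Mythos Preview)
Your proof is correct and follows essentially the same approach as the paper's: compute $F'(z_{j,i})$ explicitly as a scalar multiple of $e_i$, use $i$-radiativity and $i$-dominance to extract one nonzero vector in each coordinate direction, and conclude that $F'(Z)$ contains a basis and hence a frame. Your explicit factorization $x_{j,m}(i)=y_j(i)\,\zeta_m(i)$ with $y_j=(|\gamma_j(1)|,\ldots,|\gamma_j(n)|)$ and $\zeta_m=|\alpha_{k_m}(m)|\,e_m$ is a nice addition; the paper simply appeals to separability to assert the multiplicative structure without writing down the factors.
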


\begin{proof}
For each $i = 1, \ldots, n$ and each $j = 1, \ldots, N$, consider the vector,
\[
     z_{j,i}= (0,...,0,(0,...,0,H(v_{j,k_i})(i),0,...,0),0,...,0) \in Z \subseteq \CC^{nN},
\] 
which we can define by the $i$-radiativity.
Then, from the definitions of $F$ and $F',$ we see that
%\begin{equation}
\[
F'(z_{j,i})(i')=
    \begin{cases}
|\gamma_j(i) |\, |\alpha_{k_i}(i)|  & \text{ if $i'=i$,} \\
 0 & \text{ otherwise}.
     \end{cases}
 \]
%\end{equation}
%$F'(z_{j,i})(i') =0 \in \CC $ except when $ i=i'$. We also have $F'(z_{j,i})(i) = |\gamma_j(i) |\, |\alpha_{k_i}(i)| $
Thus, we have 
\[
 F'(z_{j,i}) = (0,...,0,|H(v_{j,k_i})(i)|,0,...,0)=(0,...,0, |\gamma_j(i)|\, | \alpha_{k_i}(i)|,0,...,0) \in \CC^n.
\]
Unlike the basis case, given $j$,  $F'(z_{j,i}) $ may be non-zero even if 
$i \notin I_j$, i.e., even if
$i \neq i_{\ell,j}$ for any $1 \leq \ell \leq n_j$ .
However, for each $i$, we do have $i=i_{\ell,j}$ for some $\ell$ and $j$, so in that case
$F'(z_{j,i})(i) = |\gamma_j(i)|\, | \alpha_{k_i}(i)| \neq 0 $. Thus, for each $i \in 1,\ldots,n$ we have at least one
non-zero vector which is only non-zero in the $i$-th position. Choosing one of these vectors for each $i$, we form a 
basis contained in $F'(Z)$ and
therefore $F'(Z)$ contains a frame. Since this is a separable sensing scenario, the frame vectors satisfy Definition \ref{defn:mf} and the frame is multiplicative.
\end{proof}

\begin{rem}[$F'(Z)$ and $B'(Z)$]
\label{rem:framemappingproofproperties}
We note that $F'(Z)$ and $B'(Z)$ are different sets. First, ${\rm card}\,B'(Z) = n+1$ since $B'(Z)$ consists 
of $n$ basis vectors plus the zero vector. The set,
$F'(Z)$, on the other hand, may or may not contain the zero vector, 
since it is possible that $\gamma_j(i) \ne 0$ for all $i \in \{1,...,n\}$ and for all $j \in \{1,...,N\}$. 
However,
$F'(Z)$ could contain several similar basis vectors depending on how many $j'$ have the property that
$\gamma_{j'}(i_{\ell,j}) \ne 0$ for a given $j$.
%We note that unlike Theorem \ref{thm:basismapping}, $F'(Z)$ is NOT a basis union the zero vector.
If $\gamma_j(i) \neq 0$ for some $i \neq i_{\ell,j}$ and any $\ell$, then there will be duplicate
vectors with non-zero $i$-th components. As the following
corollary shows, this property turns out to be useful if a sensor fails

\end{rem}

\begin{cor}[Non-operational sensors and multiplicative frames]
\label{cor:framemapping}
Let $S$ be a separable sensing scenario as described in Theorem \ref{thm:frame1}.
Assume $S$ is $i-$radiative and $i-$dominant for each $i = 1,\ldots ,n$;
and also assume that $S$ is j-harmonious as in Theorem \ref{thm:frame2}.
Consider the magnitde sum frame mapping, $F$.
Then, even if sensor $s_j$ 
is non-operational, the set, $\{F'(Z)\} $,
will still contain a multiplicative frame for $\CC^n.$
\end{cor}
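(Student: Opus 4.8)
The plan is to mirror the proof of Theorem \ref{thm:framemapping}, but now exploit $j$-harmony to supply the basis vectors that would have come from the failed sensor $s_j$. First I would recall the setup: the set $Z \subseteq \CC^{nN}$ consists of the vectors $z_{j',i} = (0,\ldots,0,(0,\ldots,0,H(v_{j',k_i})(i),0,\ldots,0),0,\ldots,0)$ for $j' = 1,\ldots,N$ and $i = 1,\ldots,n$, where each $k_i$ is chosen by $i$-radiativity so that $\alpha_{k_i}(i) \neq 0$. Applying $F'$ we get, exactly as in Theorem \ref{thm:framemapping}, that $F'(z_{j',i}) = (0,\ldots,0,|\gamma_{j'}(i)|\,|\alpha_{k_i}(i)|,0,\ldots,0) \in \CC^n$, a scalar multiple of the $i$-th canonical basis vector whose nonzero coordinate is $|\gamma_{j'}(i)|\,|\alpha_{k_i}(i)|$.

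Next I would split the indices $i \in \{1,\ldots,n\}$ according to the partition $\{I_{j'}\}$: either $i = i_{\ell,j'}$ for some $j' \neq j$, or $i = i_{\ell,j}$. In the first case, $S$ is $i$-dominant with dominant sensor $s_{j'}$ (here I use that $i \in I_{j'} \subseteq J_{j'}$, so $\gamma_{j'}(i) \neq 0$ is already guaranteed and is unaffected by the failure of $s_j$); hence $F'(z_{j',i})$ is a nonzero multiple of the $i$-th canonical basis vector. This accounts for the $n - n_j$ coordinates not assigned to $s_j$. In the second case, $i = i_{\ell,j}$ with $\ell \in \{1,\ldots,n_j\}$: since $s_j$ is non-operational we may have $\gamma_j(i_{\ell,j}) = 0$, so $F'(z_{j,i})$ could be the zero vector, but the $j$-harmonious hypothesis gives a $j'_{\ell,j} \neq j$ with $\gamma_{j'_{\ell,j}}(i_{\ell,j}) \neq 0$; combined with $\alpha_{k_i}(i_{\ell,j}) \neq 0$ from $i$-radiativity, we get $F'(z_{j'_{\ell,j},i})(i_{\ell,j}) = |\gamma_{j'_{\ell,j}}(i_{\ell,j})|\,|\alpha_{k_i}(i_{\ell,j})| \neq 0$. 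Thus for each of the remaining $n_j$ indices $i_{\ell,j}$ we have a nonzero element of $F'(Z)$ supported on the single coordinate $i_{\ell,j}$.

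Collecting one such nonzero single-coordinate vector for each $i \in \{1,\ldots,n\}$ then produces $n$ vectors in $F'(Z)$, each a nonzero scalar multiple of a distinct canonical basis vector of $\CC^n$; these are linearly independent and span $\CC^n$, so they form a basis, and in particular a frame. As in Theorem \ref{thm:framemapping}, since $S$ is a separable sensing scenario each chosen vector has its single nonzero entry of the product form $|\gamma_{\cdot}(i)|\,|\alpha_{k_i}(i)|$, so the selected frame satisfies Definition \ref{defn:mf} and is multiplicative; hence $\{F'(Z)\}$ contains a multiplicative frame for $\CC^n$ even after $s_j$ fails. The main obstacle — and really the only subtle point — is being careful that the harmony hypothesis as stated, $\forall \ell \in \{1,\ldots,n_j\},\ \exists j' \neq j$ with $\gamma_{j'}(i_{\ell,j}) \neq 0$, applies precisely to the coordinates $i_{\ell,j}$ orphaned by $s_j$'s failure, and that the $k_i$ chosen by radiativity is the same index appearing in $z_{j'_{\ell,j},i}$; once the bookkeeping between the index sets $I_j$, $J_j$ and the times $k_i$ is pinned down, the argument is a direct transcription of the proof of Theorem \ref{thm:frame2} through the frame mapping $F'$.
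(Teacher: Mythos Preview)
Your proposal is correct and follows essentially the same approach as the paper's own proof: compute $F'(z_{j',i})$ as a scalar multiple of the $i$-th canonical basis vector, split the indices $i\in\{1,\ldots,n\}$ according to whether $i\in I_{j'}$ for some $j'\neq j$ or $i\in I_j$, use the construction of $I_{j'}$ for the former and $j$-harmony for the latter to produce $n$ nonzero single-coordinate vectors, and conclude multiplicativity from separability. If anything, your bookkeeping (explicitly noting $i\in I_{j'}\subseteq J_{j'}$ to justify $\gamma_{j'}(i)\neq 0$, and tracking that the same $k_i$ is used) is slightly more careful than the paper's.
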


\begin{proof}
Following the pattern of Theorems \ref{thm:frame2} and \ref{thm:framemapping}, we let
\[
      z_{j,i}= (0,...,0,(0,..., 0, H(v_{j,k_i})(i), 0,...,0), 0,...,0)
\] for all $j.$ As in Theorem \ref{thm:framemapping} we have
$$F'(z_{j,i})(i')=\begin{cases}
|\gamma_j(i) |\, |\alpha_{k_i}(i)|  & \text{ if $i'=i$} \\
 0 & \text{ otherwise},
\end{cases}$$
and 
\[ 
F'(z_{j,i}) = (0,...,0,|H(v_{j,k_i})(i)|,0,...,0)=(0,...,0, |\gamma_j(i)|\, | \alpha_{k_i}(i)|,0,...,0) \in \CC^n.
\]
Now, fix a $j$ and assume $s_j$ is non-operational. Then,
\[
 \forall i=i_{\ell,j'},\ \ell \in \{1,...,n_{j'}\},\ j'\ne j, \ \ \ F'(z_{j',{k_i}}) \ne 0 \in \CC^n
\]
as before. This gives $n-n_j$ non-zero basis vectors.
For $i=i_{\ell,j},\ \ell \in \{1,...,n_j\}$ we may have $\gamma_j(i_{\ell,j})=0.$ However, by $j$-harmony,
\[
      \exists j'_{\ell,j} \quad {\rm such \, that} \quad \gamma_{j'}(i_{\ell,j}) \ne 0.
\] 
Therefore, we have
\[
     H(v_{j'_{\ell,j},k_i})(i_{\ell,j}) = \gamma_{j'}(i_{\ell,j})\alpha_{k_i}(i_{\ell,j}) \ne 0,
\] 
and so
\[
      F'(z_{j',i}) = (0,\ldots,0,|H(v_{j'_{\ell,j},k_i})(i_{\ell,j})|,0,\ldots,0) \neq 0 \in \CC^n.
\]
This gives the remaining $n_j$
non-zero basis vectors, one for each $i = i_{\ell,j}, \, \ell \in \{ 1, \ldots, n_j\}$.
Thus, we have constructed a basis for $\CC^n,$ in $F'(Z)$. Thus, $F'(Z)$ contains
a frame. Since this is a separable sensing scenario, the frame vectors satisfy Definition \ref{defn:mf} 
and the frame is multiplicative.
\end{proof}

%%%%%%%%%%%%%%%%%%%%%%%%
%%%%%%%%%%%%%%%%%%%%%%%%%%

\subsection{A general theorem and constructive technique for multiplicative frames}
\label{sec:furtherres}

Theorem \ref{thm:frame} below should be compared with Theorem \ref{thm:frame1}. To this end, recall
that the vectors $w_{j,k} \in {\mathbb C}^n$ are defined as
\begin{equation}
\label{eq:wjk}
       w_{j,k}= F(0,...0, v_{j,k}, 0,... 0),
\end{equation}
where, for each fixed $k = 1, \ldots, K$, $F$ is the magnitude sum frame mapping defined on 
${\mathbb C}^{MN}$. Because of this, it makes sense to construct
multiplicative frames directly from the set $V$ defined in \eqref{eq:V}
and illustrated in Figure \ref{fig:noncom}.
Also, recall that the frame $X$ of
Theorem \ref{thm:frame1} has $nNK$ elements, whereas the frame,
$\{ w_{j,k} \} \subseteq {\mathbb C}^n$, of  Theorem \ref{thm:frame}
has $NK$ elements. Note that we assume $n \leq N$ in Theorem
\ref{thm:frame}, cf. Remark \ref{rem:health}{\it e}. This assumption is needed in our
proof of Theorem \ref{thm:frame}, but there are examples of sensing scenarios
where $\{ w_{j,k} \}$forms a frame even when $N < n$.

\begin{thm}[Strong dominance and multiplicative frames]
\label{thm:frame}
Let $S$ be a separable sensing scenario with partition $\{S_j\}_{j=1}^N,$ 
covering $\{T_j\}_{j=1}^N,$
and mappings $s_j:T_j \times \{1, \ldots ,K\} \rightarrow \CC^M$, where 
$ s_j(f,k) = v_{j,k}(f) $  for $f \in T_j$ and $ v_{j,k}(f) = 0 $ for $ f \in S \backslash T_j .$ 
Let $H : {\mathbb C}^M \rightarrow {\mathbb C}^n$ be a health space mapping of the form, 
$$
      \forall\, j = 1,\ldots, N \;{\rm and}\; \forall \,k = 1,\ldots, K, \quad H(v_{j,k})(i) = 
      {\gamma}_{j}(i) \, {\alpha}_{k}(i),
$$
where $i = 1,\ldots, n.$ Assume $N >  1$ and $n \leq N$. Consider the magnitude sum frame mapping, $F$.
If $S$ is $i-$radiative and strongly $i-$dominant
for each $i = 1,\ldots ,n,$ then
$ \{w_{j,k}:j=1,\ldots ,N \,{\rm and} \, k=1, \ldots ,K\} \subseteq \CC^n $ defined by
$F$ in Equation \eqref{eq:wjk} is a multiplicative frame for $ \CC^n.$
\end{thm}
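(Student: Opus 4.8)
The plan is to prove the two halves of the conclusion separately: that $\{w_{j,k}\}$ is \emph{multiplicative}, and that it is a \emph{frame}. The first half is immediate from \eqref{eq:Hvw}: since $w_{j,k}(i)=|H(v_{j,k})(i)|=|\gamma_j(i)\alpha_k(i)|=|\gamma_j(i)|\,|\alpha_k(i)|$, putting $y_j(i):=|\gamma_j(i)|$ and $z_k(i):=|\alpha_k(i)|$ gives the factorization $w_{j,k}(i)=y_j(i)\,z_k(i)$ demanded by Definition \ref{defn:mf}. Since $\CC^n$ is finite dimensional, everything then reduces to showing that $\{w_{j,k}:1\le j\le N,\ 1\le k\le K\}$ spans $\CC^n$, i.e.\ is a frame in the sense of \eqref{eq:span}.

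To pass to spanning I would first aggregate over time so as to dispose of the $\alpha$'s. For each $j$ put $W_j:=\sum_{k=1}^{K}w_{j,k}\in\CC^n$, which lies in $\mathrm{span}\{w_{j,k}\}$; by \eqref{eq:Hvw} its $i$-th entry is $W_j(i)=|\gamma_j(i)|\,\widehat a(i)$ with $\widehat a(i):=\sum_{k=1}^{K}|\alpha_k(i)|$. The role of $i$-radiativity is exactly to make $\widehat a(i)>0$ for \emph{every} $i\in\{1,\dots,n\}$, so that $W_j=D g_j$ where $D:=\mathrm{diag}(\widehat a(1),\dots,\widehat a(n))$ is invertible and $g_j:=(|\gamma_j(1)|,\dots,|\gamma_j(n)|)$. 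It therefore suffices to show that $g_1,\dots,g_N$ span $\CC^n$, that is, that the $n\times N$ matrix $[\,|\gamma_j(i)|\,]$ has rank $n$ — for which $n\le N$ is of course necessary.

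The heart of the argument is to extract an invertible $n\times n$ submatrix of $[\,|\gamma_j(i)|\,]$ using strong dominance. For each $i$, strong $i$-dominance supplies $j_i$ with $|\gamma_{j_i}(i)|>(N-1)|\gamma_\ell(i)|$ for all $\ell\ne j_i$ (and $N>1$ forces this $j_i$ to be unique). Granting that $j_1,\dots,j_n$ can be chosen pairwise distinct, form the $n\times n$ matrix $M$ with $M_{i,i'}:=|\gamma_{j_{i'}}(i)|$. Its diagonal entry $M_{i,i}=|\gamma_{j_i}(i)|$ exceeds each off-diagonal entry $M_{i,i'}=|\gamma_{j_{i'}}(i)|$ (with $i'\ne i$, so $j_{i'}\ne j_i$) by more than a factor of $N-1$, and there are only $n-1\le N-1$ off-diagonal entries in row $i$; hence $\sum_{i'\ne i}|M_{i,i'}|<(n-1)|\gamma_{j_i}(i)|/(N-1)\le|M_{i,i}|$, so $M$ is strictly diagonally dominant and thus invertible (its eigenvalues avoid $0$ by Gershgorin's circle theorem). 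Then $g_{j_1},\dots,g_{j_n}$ are linearly independent, so $W_{j_1},\dots,W_{j_n}$ span $\CC^n$, hence so does $\{w_{j,k}\}$; together with the first paragraph this completes the proof.

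The step I expect to be the real obstacle — and the only place where $n\le N$ is used for more than bookkeeping — is the claim that the dominant indices $j_1,\dots,j_n$ may be taken pairwise distinct: with only $N$ sensors one must show the $n$ dominant choices occupy distinct slots. I would attack this through the partition $\{I_j\}$ of $\{1,\dots,n\}$ from Definition \ref{defn:subsets}, assigning to each $i$ the sensor of the block $I_j\ni i$ and using Definition \ref{defn:op} together with strong dominance to identify that sensor with $j_i$, so that distinctness of the $j_i$ reduces to disjointness of the $I_j$ once each block carries at most one index (which is precisely the regime in which $n\le N$ is forced). The remaining points — that a spanning set in $\CC^n$ is automatically a frame, and that each $W_j$ genuinely lies in $\mathrm{span}\{w_{j,k}\}$ — are routine.
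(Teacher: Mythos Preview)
Your spanning argument is essentially correct and in one respect cleaner than the paper's: where the paper selects, for each $i$, a time $k_i'$ maximizing $|\alpha_{k_i'}(i)|$ and then shows the $n$ vectors $\gamma_{j_i}\alpha_{k_i'}$ are linearly independent, you simply sum over all $k$ to obtain $W_j=Dg_j$ with $D$ invertible by radiativity. This strips out the time variable and reduces everything to the rank of the matrix $[\,|\gamma_j(i)|\,]$, avoiding the paper's side inequality $|\alpha_{k_i'}(m)|\le|\alpha_{k_m'}(m)|$. From there both proofs land in the same place: the $n\times n$ matrix with entries $|\gamma_{j_{i'}}(i)|$ is strictly row-diagonally dominant because strong $i$-dominance gives $|\gamma_{j_{i'}}(i)|<|\gamma_{j_i}(i)|/(N-1)$ for $i'\ne i$, and there are only $n-1\le N-1$ off-diagonal entries per row. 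Your appeal to Gershgorin is exactly the content of the paper's contradiction computation in its part~{\it iv}.

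You are also right that the pairwise distinctness of $j_1,\ldots,j_n$ is the crux---both your Gershgorin step and the paper's strict inequality collapse if $j_{i'}=j_i$ for some $i'\ne i$. The paper does \emph{not} resolve this either: it writes ``In particular, $j_1\ne j_i$'' as though it followed from strong $1$-dominance, which it does not. Your proposed fix via the partition $\{I_j\}$ will not work, however: nothing in Definitions~\ref{defn:subsets} or~\ref{defn:op} ties the block containing $i$ to the strongly dominant index $j_i$, and there is no reason the blocks should be singletons. In fact distinctness cannot be extracted from the stated hypotheses at all. Take $n=N=2$, $K=1$, $\alpha_1=(1,1)$, $\gamma_1=(3,3)$, $\gamma_2=(1,1)$: the scenario is $i$-radiative and strongly $i$-dominant for $i=1,2$ (with $j_1=j_2=1$), yet $w_{1,1}=(3,3)$ and $w_{2,1}=(1,1)$ are parallel and do not span $\CC^2$. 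So the statement really requires the additional hypothesis that the dominant indices $j_i$ can be chosen pairwise distinct; once that is granted, your diagonal-dominance argument is complete.
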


\begin{proof} {\it i.} We shall calculate that 
$ \{w_{j,k}:j=1,\ldots ,N \;{\rm and} \; k=1, \ldots ,K\}$ contains a basis for $\CC^n,$ 
and this proves the result. 
%In fact, this 
The calculation is contained in 
part {\it ii}, where we use $i-$radiativity, and in parts {\it iii} and {\it iv}, where we use
strong $i-$dominance to make a basic estimate.

{\it ii.}
Separability allows us to write $w_{j,k}=\gamma_j\alpha_k,$ where $\gamma_j,\alpha_k \in \CC^n$ 
for each $j=1, \ldots, N$ and $k=1, \ldots, K$. Clearly, ${\rm card}\{w_{j,k}\} = NK.$

Fix $i \in \{1, \ldots,n\}$. 
Since $S$ is $i-$radiative, there is a $ 1 \le k_i \le K$ such that $\alpha_{k_i}(i) \neq 0.$
Therefore, we can choose $k_i' \in \{1, \ldots, K\}$ such that 
\begin{equation}
\label{eqn:thma}
\forall \, k=1, \ldots, K, \quad |\alpha_{k_i'}(i)| \ge |\alpha_k(i)| \quad {\rm and} \quad |{\alpha}_{k_{i}^{'} }(i)| > 0.
\end{equation}
Equation (\ref{eqn:thma}) follows by choosing ${k_i'},$ among all possible $k_i,$ that gives the largest value 
of $|\alpha_{k_i}(i)|$.

Taking any $1 \le j \le N$, we obtain
\begin{equation}
\label{eqn:thmc}
|w_{j,k_i'}(i)| = |\gamma_j(i)||\alpha_{k_i'}(i)|,
\end{equation}
and we let $\alpha \equiv {\rm min}_{1 \le i \le n}|\alpha_{k_i}(i)|$. Clearly, $\alpha > 0;$ 
and (\ref{eqn:thmc}) allows us to assert that 
\[ 
     \forall \, j=1, \ldots, N \; {\rm and } \; \forall \, i=1, \ldots, n, \quad |w_{j,k_i'}(i)| \geq \alpha|\gamma_j(i)|.
\]

{\it iii.} We shall prove that 
$\{\gamma_{j_i}\,\alpha_{k_i'} : i = 1,\ldots, n\}$
is a basis for $\CC^n$, using the calculation from part {\it ii}.
%and without benefit of parts {\it iii. - vii.}
%we shall require that the sequence, $\{j_i : i = 1,\ldots, n\},$ consists
%of distinct integers, and, as such we require the hypothesis, $N \geq n.$
%The fact that $\{\gamma_{j_i}\,\alpha_{k_i'} : i = 1,\ldots, n\}$
%is a basis for $\CC^n$ gives the desired result.
%(and the claim of part {\it i.} would follow), 
%where $\{j_i : i = 1,\ldots,n\}$ 
%is yet to be chosen, we proceed as follows in this part and in part {\it ix.}.

%Then the right hand side of Equation (\ref{eqn:thmd}) is 
 %$\sum_{h=1}^n \, y(h)\gamma_{j'}(h)\alpha_{k_i'}(h)$
%since $c(j,i)\gamma_j(h)\alpha_{k_i'}(h)=0$ if $(j,i)\ne(j',i')$.

%Do this $\forall (j,i)$ (not just a fixed $(j',i')$). Then, forall $(j,i)$, i.e., all $nN$ of these pairs,
%$\sum_{h=1}^n(\gamma_j(h)\alpha_{k_i'}(h))y(h) =0$, i.e., 
%$\sum_{h=1}^n a_{j,i}(h)y(h)=0\quad y,a_{j,i} \in \CC^n$, i.e., 
%\begin{equation}
%\label{eqn:thmg}
%<a_{j,i},y>_{\CC^n}=0, \quad a_{j,i}\in \CC^n, \,\, j=1,...,N\,\, i=1,...,n
%\end{equation}

Take any $i \in \{1, \ldots, n\}$.  Then, from $i-$dominance, there is $j_i$ such that
$\gamma_{j_i}(i) \neq 0$; and, hence, by taking the largest such value we can assert that
\[
    \exists\, j_i \;{\rm such}\,{\rm that}\; \forall \ell \ne j_i \quad  |\gamma_{j_i}(i)| \geq |\gamma_\ell(i)|.
\]
Consider the $n$ vectors, $w_{j_i,i} = \gamma_{j_i}\alpha_{k_i'} \in \CC^n, \, i=1,...,n,$
that is,
\[
     w_{j_i,i} = \gamma_{j_i}\alpha_{k_i'}
     = (\ldots, \gamma_{j_i}(h)\alpha_{k_i'}(h), \ldots) \in \CC^n, \;  h = 1, \ldots, n.
\]
Under the stronger assumption of strong $i-$dominance,
we can now verify that 
%these
%$n$ vectors, $w_{j_i,i}, i=1,...,n,$ 
they form a basis for $\CC^n.$
%If so, then by equation(\ref{eqn:thmg}) we have $y=0$ the desired contradiction.

{\it iv.}
Let
\begin{equation}
\label{eqn:thmh}
\sum_{i=1}^n\,b_i\gamma_{j_i}\alpha_{k_i'} = (0,\ldots, 0) \in \CC^n,
\end{equation}
where the notation for the sequence, $\{j_i \, : \,i=1,\ldots,n\},$ 
is that used to define strong $i-$dominance. In
this situation, whenever Equation (\ref{eqn:thmh})
is given, we shall show that each $b_i$ is $0.$ Thus, we can conclude that
$\{\gamma_{j_i}\alpha_{k_i'}\}$ is a basis for $\CC^n.$

%\quad therefore \forall h=1,...,n \,\rm{ fixed, s.t. }
Equation (\ref{eqn:thmh}) can be written as
$$
     \forall h = 1,\ldots, n, \quad \sum_{i=1}^n\,b_i\gamma_{j_i}(h)\alpha_{k_i'}(h) =0.
$$

Thus, for example, when $h=1,$ we have
$$
     b_1\gamma_{j_1}(1)\alpha_{k_1'}(1) + \sum_{i=2}^nb_i\gamma_{j_i}(1)
     \alpha_{k_i'}(1)=0.
$$     
If $b_1\ne 0,$ then 
$$
     \gamma_{j_1}(1)\alpha_{k_1'}(1) = -\sum_{i=2}^n\, \frac{b_i}{b_1}\,\gamma_{j_i}(1)\alpha_{k_i'}(1).
$$
By $1-$strong dominance, we have 
$|\gamma_{j_1}(1)| \,>\, |\gamma_{j_i}(1)|$ for each $i \in \{2,\ldots, n\}$; 
In particular, $j_1 \neq j_i$. 
By $1-$radiativity and the definition of $k_i'$, we have
$|\alpha_{k_1'}(1)| \ge |\alpha_{k_i'}(1)|$ for each $i \in \{2,\ldots, n\}.$ 
Consequently, we have the estimate,
$$
     \left|\sum_{i=2}^n\,\frac{b_i}{b_1}\, \gamma_{j_i}(1)\alpha_{k_i'}(1)\right| \le 
     \sum_{i=2}^n\,\left|\frac{b_i}{b_1}\right|\, |\gamma_{j_i}(1)| \, |\alpha_{k_i'}(1)| 
$$
$$
     < \sum_{i=2}^n\,\left|\frac{b_i}{b_1}\right|\, |\gamma_{j_1}(1)| \, |\alpha_{k_1'}(1)|\le
 |\gamma_{j_1}(1)| \, |\alpha_{k_1'}(1)|(n-1) \max_{2\le i\le n}\,\left|\frac{b_i}{b_1}\right|.
$$

We can now return directly to our task of 
proving that if Equation (\ref{eqn:thmh}) is given, then each $b_i$ is $0.$ If this is not
true, then let $b_h\ne 0,$ and assume $b_m$ is the largest such $b_h$ in the sense that 
$|b_h| \le |b_m|$ whenever $h \ne m.$
In particular, $b_m \neq 0,$ and (\ref{eqn:thmh}) implies $\sum_{i=1}^n b_i\gamma_{j_i}(m)\alpha_{k_i'}(m)=0.$ Therefore,
$$
    |b_m\gamma_{j_m}(m)\,\alpha_{k_m'}(m)| = 
    \left|\sum_{i \ne m}\,b_i \, \gamma_{j_i}(m)
\alpha_{k_i'}(m)\right| 
$$
$$
       \le \sum_{i\ne m}|b_i|\,|\gamma_{j_i}(m)\alpha_{k_i'}(m)|
       < \frac{|b_m|}{N - 1}\,\sum_{i\ne m}\,
       |\gamma_{j_m}(m)|\,|\alpha_{k_i'}(m)|,
$$
where the last inequality is due to strong $m-$dominance. Hence, we have
$$
     |\alpha_{k_m'}(m)|
     < \frac{1}{N - 1}\,\sum_{i\ne m}\,|\alpha_{k_i'}(m)|
     \leq \frac{1}{N - 1}\,\sum_{i\ne m}\,|\alpha_{k_m'}(m)|    
      = \frac{n-1}{N-1}\,|\alpha_{k_m'}(m)|,
$$ 
a contradiction. Here, we use the assumptions that $n \leq N$ and $N> 1.$ Thus,
$b_m=0$ and so $\{ \gamma_{j_i}\alpha_{k_i'}\}$ is a basis for $\CC^n.$  
\end{proof}

%%%%%%%%%%%%%%%%%%%%%%%%%

\begin{example}[A constructive component for Theorem \ref{thm:frame}]
\label{ex:constech}
In order to gain insight into the proof of Theorem \ref{thm:frame}, and
simultaneously to introduce a linear algebra approach for computational
reasons, let us proceed to prove Theorem \ref{thm:frame} by showing that
$\{w_{j, k_i'}\} \subseteq\CC^n$ spans $\CC^n$, and, 
hence, that it is a frame for  $\CC^n.$ The approach introduces ancillary 
$N \times n$ matrices, $(c(j,i)),$ that can be used for computation.
%In particular, we shall not prove that $\{w_{j, k_i'}\} \subseteq\CC^n$
%contains a basis for $\CC^n$ as we did in Theorem

%We proceed 
%to prove the assertion by contradiction and the Hahn-Banach theorem that
 %$\{w_{j, k_i'}\}$ is a spanning set. Although qualitative in approach it still
%Theorem \ref{thm:frame} and it
%introduces ancillary $N \times n$ matrices, 
%$(c(j,i)),$ that can be used for computation. Of course, just by the definition of $\{w_{j, k_i'}\} $
%we require $i$-radiativity.

%WHAT OTHER HYPOTHESES DO WE REQUIRE? 

{\it i.} We begin, as in the proof of Theorem \ref{thm:frame} by using radiativity
to define $\{w_{j, k_i'}\}$, and 
note that ${\rm card}\{w_{j,k_i'}\} = nN \ge n$. Let $W = {\rm span}\{w_{j,k_i'}\} 
\subseteq \CC^n,$ and 
suppose $W \neq  \CC^n.$ Then, 
\[ 
   \exists \, y\in \CC^n\backslash \{0\} \; {\rm such \; that} \;\forall \, w \in W, \quad \langle y,w \rangle =0.
\]
Using the Hahn-Banach theorem our goal is to obtain a contradiction.

Notationally, let $w_{j,k_i'} \equiv w_{(j,i)} \in \CC^n.$ Hence, 
we can order the pairs $(j,i)$ from 1 to $nN.$ Instinctively, we would 
write $w_{(i,j)},$ as well as $c_{(i,j)}$ below, but the $i$ in these cases is dependent on
$k$, and this is our way of dealing with the lexicographic order $j,k$.

{\it ii.} Now, let 
$$
     y=\sum_{h=1}^n y(h)e_h \, , \quad e_h=(0,....0,1,0,...),
$$ 
where the "$1$" in the definition of 
$e_h$ appears in the $h$ position. It is {\it critical} to note that $e_h$ need not be in $W$ or the
orthogonal complement of $W$ in $\CC^n.$ 

Take any $w \in W$ so that
$$
w= 
\sum_{j=1}^N\,\sum_{i=1}^n\,c(j,i)w_{(j,i)} \in \CC^n,
$$ 
since each $w_{(j,i)} \in \CC^n.$ By hypothesis, we have
$$
   0=\langle w,y \rangle=\sum_{h=1}^n\overline{y(h)}\,\langle \sum_{j=1}^N\sum_{i=1}^n\,c(j,i)w_{(j,i)}\, , \, e_h \rangle,
$$ 
and so
$$ 
   0=\sum_{h=1}^n\overline{y(h)}\sum_{j=1}^N\sum_{i=1}^n \, c(j,i) \langle w_{(j,i)},e_h \rangle = 
   \sum_{h=1}^n\overline{y(h)}\sum_{j=1}^N\sum_{i=1}^n\,c(j,i)(\sum_{l=1}^ne_h(l)\overline{w_{(j,i)}(l)}) 
$$
$$
    = \sum_{h=1}^n\overline{y(h)}\sum_{j=1}^N\sum_{i=1}^n\,c(j,i)w_{(j,i)}(h),
$$
where
$$
   w_{(j,i)}(h)=w_{j,k_i'}(h)=\gamma_j(h)\alpha_{k_i'}(h).
$$
Combining these equations, we obtain
\begin{equation}
\label{eqn:thmd}
      0=\sum_{h=1}^n\overline{y(h)}\sum_{j=1}^N\gamma_j(h)\sum_{i=1}^nc(j,i)\alpha_{k_i'}(h),
\end{equation}
where $(y(1), \ldots, y(n))\neq(0, \ldots, 0)$ is fixed 
and $\{c(j,i) : j=1, \dots, N  \;{\rm and} \; i=1,\dots, n\}$ 
is any $nN$-tuple.

We can obtain the desired contradiction when we construct an $nN$-tuple $\{c(j,i)\}$ 
such that the right side of Equation (\ref{eqn:thmd}) is non-zero. With this contradiction 
we can then assert that $W=\CC^n$ and so $\{w_{(j,i)}\}$ is a frame for $\CC^n.$

{\it iii.} We write Equation (\ref{eqn:thmd}) as 
\begin{equation}
\label{eqn:thme}
    0=\sum_{i,j} c(j,i) \left( \sum_{h \in I} \overline{y(h)} \,{\gamma}_{j}(h)\,{\alpha}_{k_i'}(h)\right),  
\end{equation}
where $I = \{ h \in \{1,\ldots, n\} : y(h) \neq 0\}.$  Let $X = \{ (j,i) : j = 1,\ldots, N \; {\rm and}
\; i = 1,\ldots, n\}$ have the property that
$$
         \sum_{h \in I} \overline{y(h)} \,{\gamma}_{j}(h)\,{\alpha}_{k_i'}(h) \equiv d(j,i) \neq 0.
$$

{\it iv.}
If $X \neq  \varnothing,$ then set $c(j,i) = \overline{d(j,i)}$ for $(j,i) \in X$ and $c(j,i) =0$
otherwise. Then,
$$
0=\sum_{i,j} c(j,i) \left( \sum_{h \in I} \overline{y(h)} \,{\gamma}_{j}(h)\,{\alpha}_{k_i'}(h)\right) > 0, 
$$
and this contradicts Equation (\ref{eqn:thme}). Thus, $W = \CC^n.$

{\it v.}
Now assume $X = \varnothing,$ i.e., assume
$$
\forall\; i = 1,\ldots, n \; {\rm and} \; \forall\, j = 1,\dots, N, \quad
\sum_{h \in I}\overline{y(h)} \, \gamma_j(h)\, \alpha_{k_i'}(h) = 0.
$$
Then, for any $j_i \in \{1,\ldots, N\}$ corresponding to $i \in \{1,\ldots, n\},$ we have
\begin{equation}
\label{eqn:thmf}
     \forall \, i = 1,\ldots, n, \quad \sum_{h \in I}\overline{y(h)} \, \gamma_{j_i}(h)\, \alpha_{k_i'}(h) = 0.
\end{equation}
It is at this point that we 
go back to the proof of Theorem \ref{thm:frame}, using the
hypotheses of strong dominance and the properties of
$n$ and $N$, to choose $\{j_i : i = 1,\dots, n\}$ such that 
$$
       \{\gamma_{j_i}\,\alpha_{k_i'} : i = 1,\ldots, n\} \subseteq \CC^n
$$
is a basis for $\CC^n.$ Thus, we can conclude from Equation (\ref{eqn:thmf}) that 
$ y = 0 \in \CC^n,$ and this contradicts our assumption about $y.$ Therefore, $W = \CC^n.$

\end{example}

\begin{example}[An elementary example]
\label{ex:elemex}
Consider a sensing scenario $S$ with $K=1$ and with 
three sensors $s_1, \, s_2, \, s_3$, where $v_{1,1} =(3,1,1), \, v_{2,1} =(1,3,4)$,
and $v_{3,1} = (1,1,5)$. Let $H: \CC^3 \longrightarrow \CC^2$ be a health-space mapping
defined as the projection mapping onto the first two coordinates. Thus, we have
$H(v_{1,1}) = (3,1), \, H(v_{2,1}) = (1,3)$, and $H(v_{3,1}) = (1,1)$.
We define $\gamma_1(1) = 3, \,  \gamma_1(2) = 1, \, \gamma_2(1) = 1, \, \gamma_2(2) = 3, \,
\gamma_3(1) = 1$, and $ \gamma_3(2) = 1$. We also define $\alpha_1(1) 
= \alpha_1(2) = 1$. Then, we obtain
$H(v_{j,k})(i) = \gamma_j(i) \, \alpha_k(i)$, for $j = 1,\,2,\,3$ and $k=1$; 
and so $S$ is a separable sensing scenario. Further, $S$ satisfies the criterion
for radiativity, since $\alpha _1(i) \neq 0$ for $i = 1,\,2$.
Finally, we can see that $S$ is strongly $i$-dominant for $i = 1,\,2$ by defining
$j_1 = 1,  j_2 = 2$. We now apply Theorem \ref{thm:frame}, and can assert that
$\{w_{j,k} : j = 1,\,2,\,3 \, {\rm and}\, k=1  \} = \{ (3,1), \,  (1,3 ), \,  (1,1) \}$
is a multiplicative frame for $\CC^2$.
\end{example}
%%%%%%%%%%%%%%%%%%

%%%%%%%%%%%%%%%%%%%%%%%%%%%%%%%%%%%%%%%%
%%%%%%%%%%%%%%%%%%%%%%%%%%%%%%%%%%%%%%%%%
%%%%%%%%%%%%%%%%%%%%%%%%%%%%%%%%%%%%%%%%

%\begin{center}
%\begin{figure}
%\includegraphics[height=0.2\paperheight]{figM21.png}
%\label{fig:figM21}
%\caption{}
%\end{figure}
%\end{center}
%\vspace{.2in}

%%%%%%%%%%%%%%%%%%%%%%%%%%%%%%%%%%%%%%%%%

%\begin{figure}
%\begin{center}
%\includegraphics[width=6in]{TestingScenario.pdf}
%\label{fig:TestingScenario}
%\caption{Testing scenario for the $DFT$ at time $k$}
%\end{center}
%\end{figure}

%\begin{rem}\end{rem}

%%%%%%%%%%%%%%%%%%%%%%%%%%%%%%%%%%%%
%%%%%%%%%%%%%%%%%%%%%%%%%%%%%%%%%%%%
%%%%%%%%%%%%%%%%%%%%%%%%%%%%%%%%%%%%%
\section{Data base and DFT turbine simulation}
\label{sec:datasim}

%%%%%%%%%%%%%%%%%%%%%%%%%%%%%%%%
%%%%%%%%%%%%%%%%%%%%%%%%%%%%%%%%
\subsection{Detection strategies}
\label{sec:detection}

Determining the dimension $n$ of health space is critical to obtaining useful algorithms based on 
this theory. To this end, we can
divide problems into two classes depending on whether $n$ is known a priori or a posteriori. 
Both cases can arise naturally 
and the theory will be applicable in both cases, but the algorithms for implementation may be different.

{\it a.} In the first case, $n$ is known in advance. This will arise in problems where a detection 
strategy has already been developed. In fact, for a number
of engineering problems, detectors already exist for single sensor output. SONAR, RADAR, and machine 
diagnostic problems all have detectors associated with a single installation or sensor. In the DFT/turbine 
example of this section,
%Section \ref{sec:datasim} 
we assume that the existence of engine
faults can be determined by looking for certain spectral lines.
In problems of this sort, given accurate sensor data from a working sensor, the
detector can determine the status of a turbine. This detector maps the output of a 
single sensor to health space and thus
determines $n$. There is dimension reduction when the data from multiple sensors is 
combined into 
something the
detector can process and also when the detector produces output of lower dimension than 
the original data stream.
Subsections \ref{sec:data} - \ref{sec:snr} deal with this case.

{\it b.} The second case arises when a detection strategy is not known in advance. Consider a 
large collection of data taken by 
several sensors. For example, each sensor may be recording images or multispectral data, 
that must be compressed 
using a dimension reduction technique before being transmitted to a processing center. 
It is possible that no detector is 
available for the compressed 
data stream.

In this case, a machine learning algorithm may be used to determine the 
state of the situation observed by 
the sampled data, e.g., see \cite{geor2012}
and the remote sensing applications analyzed in 
\cite{CzaManMcLMur2016},\cite{BenCzaDobDosDuk2016},
\cite{CloCzaDos2017}. For example, given a set of known positive 
and negative detection images, 
an optimal Bayesian detector 
such as a matched filter or bank of matched filters may be constructed for the compressed data. 
The output of these filters 
would then determine  health space and the dimension $n$. 
%We view these images as a 
%training data set. 
%We note that during the training phase of the detector, 
%the compression and 
%dimension reduction parameters must  be determined and remain fixed for testing 
%so that the detector can do an 
%apples-to-apples comparison when testing data is presented. 
We also note 
that matched filters are only one example, but 
other machine learning techniques may also be used. 

%%%%%%%%%%%%%%%%%%%%%%%%%%%%%%%%%%%%%%
%%%%%%%%%%%%%%%%%

\subsection{Data base construction}
\label{sec:data}

%%%%%%%%%%%%%%%%
\begin{figure}[htbp]
\centering
\includegraphics[height=0.2\paperheight]{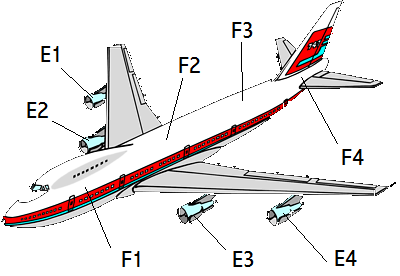}
\caption{Airplane with engine sensors $E_j$ and fuselage sensors $F_j$}
\label{fig:plane}
\end{figure}
%%%%%%%%%%%%%%%%%%%
The theorems and corollaries of Section \ref{sec:rst} provide a mathematical framework 
for the analysis of sensing problems. We now give a detailed example of how this framework 
applies to a practical simulation for  analyzing the DFTs of sensed vibration data. This is, in a 
way, our canonical situation in that we developed much of the theory of reactive sensing by analyzing
the DFT case.
 
To this end, we now construct a data base to analyze basic vibrations associated with an airplane.
We use public
domain specification data of major jet engine manufacturers, viz., Boeing, GE, and Pratt-Whitney,
see \cite{geav}, \cite{henc2012}, \cite{yoon2001}.
%Subsection \ref{sec:theme}.
The engine noise models are  simplified since our immediate purpose here is to provide an illustrative
sensing scenario, not to solve a complex avionics problem.

{\bf Airplane and vibration sensors.} In Figure \ref{fig:plane}, we have an airplane with
four engines, $E_j, \, j=1,2,3,4$, and on
each engine there is a vibration sensor
$s_j$.  We also designate sections, 
$F_j, \, j=1,2,3,4$, of the fuselage, that may also
have sensors attached. 

{\bf Turbine and gear assumptions.} We begin by assuming each engine consists of two turbines spinning
at different speeds.
Each turbine has fan blades attached to it as well as complex gear assemblies. For simplicity
we assume there are only 3 gears in each assembly, one connected to the first turbine and two
connected to the second. We assume the basic sounds generated by the engines are spectral
lines associated with each of their components. Specifically, each turbine will generate one
frequency component at the shaft rotation frequency and another for the blade frequency, which
is assumed to be equal to the shaft frequency multiplied by the number of blades. Each of the 3
gears is also assumed to generate a frequency component equal to the gear ratio multiplied by its
associated turbine frequency. This gives a total of 7 spectral lines associated with each engine.
We assume these spectral lines are unique for each engine. This is a 
somewhat unrealistic constraint, but it provides
an elementary setting in order to evaluate
our theory.

{\bf Vibration sensor output.} So far, we have described the modelling of the engine vibrations 
themselves. We must now determine what
the output of the vibration sensors will be. We assume that the sensor on each engine
is primarily responsible for reporting data for that engine, see Subsection \ref{sec:sepsensscen}
for the notion of primary responsibility. 
We also assume that each sensor detects
vibrations from the other engines but at a reduced volume. This last assumption is quite natural, but it is
significant since it forms the basis for the multiplicative structure developed mathematically in
Section \ref{sec:frames}. Finally, we assume there are noise sources, e.g., air moving across the
fuselages, that are also detected by the sensors.

{\bf Mixing matrix.} To generate the output of the sensors, we define a mixing matrix that 
determines what proportion of each
engine's vibrations are sensed at each sensor. Specifically, we define a 
mixing matrix $A = (a_{j,h})$  where each 
entry $a_{j,h}$ gives the relative volume of engine $E_h$ reported by sensor $s_j$. For example, we might
have $a_{1,1}=1,\ a_{1,2}=0.1,\ $ and $a_{1,3}= 0.01$, which would mean sensor $s_1$ would register 
engine $E_1$ at nominal volume 1, engine $E_2$ at 10 dB down from that level, and engine $E_3$ a further
10 dB down (20 dB down in all), see \cite{henc2012}.

{\bf The data base.}
%We can now build a collection of tests in the following manner. 
We choose the engine vibration outputs we wish to
combine and apply the mixing matrix to obtain the desired sensor output. For example, 
to manufacture  data for 
a properly functioning airplane, we take data that models the noise of each properly working engine, and combine
these data using the mixing matrix. To 
manufacture data for a gear fault in $E_1$, replace the properly working engine
noise data for $E_1$ with the gear fault noise and reapply the mixing matrix. In this fashion we can 
manufacture
data for both properly working and faulty engine operation, as sensed by each of the sensors.

The sensor outputs can now be used to generate the data base. 
The output of each sensor is processed by an 8192 point DFT in blocks
giving 4 copies of ${\mathbb R}^{8192}$.
We wish to map this data from $({\mathbb R}^{8192})^4$ into a space
where the state of the engines can be assessed using an appropriate
detector. It is at this stage that we invoke 
dimension reduction technology, albeit in an elementary way.
In fact, we
assume that the health of each engine can be estimated
based on the magnitude of the 28 relevant spectral lines (7 for each
engine). Thus, health space is ${\mathbb R}^{28}$. We implement a simple
detector,
that operates on ${\mathbb R}^{28}$ and is capable of distinguishing normal
operation, basic engine faults, and catastrophic engine failure. For example, basic faults, such as a 
missing gear tooth, are registered in terms of large spectral values; while 
a catastrophic engine failure would be sensed as missing all 
spectral data from that engine. For more on detectors, and detection and classification
strategies, see \cite{vant2002}.

%%%%%%%%%%%%%%%%%%%%%
%%%%%%%%%%%%%%%%%%%%%

\subsection{The data base as a sensing scenario}
\label{sec:datasensscen}

From a mathematical point of view, what we constructed 
in Subsection \ref{sec:data} is a sensing scenario,
as defined in Subsection \ref{sec:sepsensscen}. In fact, the sensors $s_j$
attached to each engine  produce spectral lines by repeatedly computing DFTs on the vibration data. Thus,
at fixed discrete times $k$, we obtain vectors $v_{j,k} \in \CC^M$
containing frequency information for each sensor $s_j, j = 1, \ldots, N$. Here, $M = 8192$ and 
$\{1, \ldots, M\}$ is the frequency domain of the DFT.    
Using the data base, we compute the vibration noise coming from each engine, and 
apply a mixing matrix to compute the sensor outputs. The initial engine vibrations produce frequencies, 
$f \in \{1, \ldots, M\}$, 
with varying amplitudes, ${\widehat \alpha}_k(f)$. These values are generated by 
the engines, independent of the sensors, and  hence the ${\widehat \alpha}_k$ are independent
of $j$. The mixing matrix determines the relative weights that are used to combine the engine vibrations, thus
determining the values of ${\widehat \gamma}_j(f)$ for each spectral line $f$. This mixing is independent of
time, and thus does not depend on $k$. Therefore, for each $s_j$, we have 
\[
     v_{j,k}(f) ={\widehat \gamma}_j(f){\widehat \alpha}_k(f), \quad f \in \{1, \ldots, M\}.
\]
This means that our sensing scenario is {\it pre-separable} as defined in Definition \ref{defn:sep}.

Further, we define an elementary health space mapping,
\[
       H: {\mathbb R}^{8192} \longrightarrow {\mathbb R}^{28},
\]
by selecting only the coordinates of the 28 relevant spectral components described above.
${\mathbb R}^{28}$ will be the associated health space, and this $H$ will be the 
health space mapping. It is a projection mapping, and is equivalent to PCA 
under fairly benign assumptions.
% see 
%Appendix \ref{sec:dimred}, where some
%dimension reduction technology is referenced for the case of reactive sensing implementation 
%in more complex environments.

We note that $H$ meets the criteria of Proposition \ref{prop:sep2}. Specifically, $H$ is 
the identity mapping on the subspace ${\mathbb R}^{28}$ of relevant spectral frequencies, and it is
the 0-mapping everywhere else. Since we have shown our scenario is pre-separable,
Proposition \ref{prop:sep2} allows us to conclude that we
have a {\it separable} sensing scenario with health space mapping $H$.

Using the definitions in Subsection \ref{sec:bmfm}, we can now define basis and frame mappings for 
the scenario. From a practical point of view, this is 
a necessary task, since the sensors are each generating 
a copy of ${\mathbb R}^{8192}$, whereas the health space mapping $H$ only operates on one copy. 
In any case, we require that the basis mapping $B$ and the frame mapping $F$ both map
$({\mathbb R}^{8192})^4$ to
health space:
\[
      B,F: ({\mathbb R}^{8192})^4 \longrightarrow {\mathbb R}^{28}.
\]
The basis mapping $B$ will map the magnitude of each of the 7 spectral lines
for $E_j$ recorded by $s_j$ to one $7$-dimensional subspace of ${\mathbb R}^{28}$. $B$ will
ignore the rest of the reported data. We shall define the frame mapping $F$ as the 
sum of the magnitudes of each of the spectral lines as
recorded by all the sensors $s_j$. Thus, $F$ is the magnitude sum frame mapping defined 
in Definition \ref{defn:BFmappings}.

\begin{rem}[Additional sensors]
We note here that there may be additional sensors attached to the
airplane, which, for example, might monitor fuselage vibrations. These have not been included in the current
data base and analysis, but one can imagine applying reactive sensing theory to these sensors as well.
In fact, one envisages tuning such sensors 
to sense the shape of the broadband noise envelope, and not only processing the sharp spectral lines
associated with the engines.
\end{rem}

%%%%%%%%%%%%%%%%%%%%%%%%%%%%%%%%%
%\begin{figure}[htbp]
%\centering
%\includegraphics[height=.45\paperheight]{FandB_Processors_nologo.jpg}
%\caption{Testing scenario for the DFT at time $k$}
%\label{fig:TestingScenario}
%\end{figure}

%%%%%%%%%%%%%%%%%%%%%%%%%%%%%%%%%%%
%%%%%%%%%%%%%%%%%%%%%%%%%%%%%%%%%%%%

\subsection{DFT - turbine simulation}
\label{sec:sim}

We now use the data base described in Subsections \ref{sec:data} and
\ref{sec:datasensscen} to illustrate a practical sensing scenario. We
start with the 4 sensors $s_j$, which will each hear clearly its engine
$E_j$, and we assume it will also hear the other engines
$E_{\ell}, {\ell}\neq j$, at a lower volume, 10dB or more down
along with a significant level of noise in the system.
The sensors produce spectral output by computing a DFT at each time $k$.
Here we use an 8192 point DFT. As described in Subsection
\ref{sec:datasensscen}, the sensed outputs of the spectral
components from a given sensor at a given time ($v_{j,k}$)
is the product of the volume of the components produced by the
engines ($\alpha_k$) with the sensor's ability to hear the
components ($\gamma_j$).

The assumption that a given sensor can hear all of the engines
amounts to describing the covering $ T_j = S$ for each $j=1, \ldots, 4$,
where S is the frequency spectrum produced by all of the engines.
Here, ${\rm card}\,(S) = M = 8192$.
Note that a sensor $s_j$ can be thought of as bearing primary
responsibility for the spectral lines produced by
the engine $E_j$ to which it is attached
(once again see Subsection \ref{sec:sepsensscen}
about primary responsibility).Further, since every sensor can hear
every engine, albeit at a possibly decreased volume, we can assert
that the scenario is harmonious as in
Definition \ref{defn:disjharm}.

The magnitudes of the spectral lines give an element of
${\mathbb R}^{8192}$ for each sensor. The basis and frame
mappings both combine the sensor output to produce a single element
of health space, ${\mathbb R}^{28}$, which
can then be sent to the detector.
The detector processes the magnitudes of the 28 significant spectral
lines to determine the fault condition of the airplane. We note that
this is consistent with the detection strategy described in
Subsection \ref{sec:detection}, part {\it a}. Specifically, we are
assuming that the 28 relevant spectral lines are known, along with the
values for the magnitudes of these lines in a working airplane. We can
therefore apply the detector to the output of the
basis and frame mappings to obtain detection and classification results.

%We now implement the data base and  illuminate the scenario described in Section \ref{sec:data}
%in terms of the testing scenario depicted in Figure \ref{fig:TestingScenario}.
%%%%%%%%%%%%%%%%
\begin{figure}[htbp]
\centering
\includegraphics[height=0.2\paperheight]{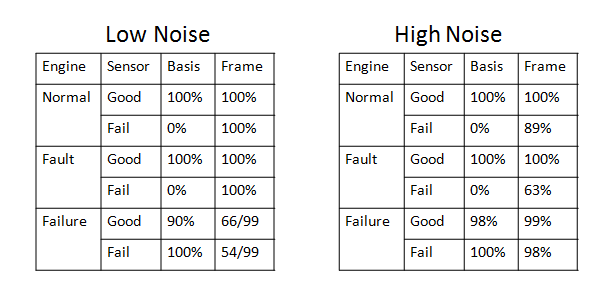}
\caption{Results}
\label{fig:results}
\end{figure}
%%%%%%%%%%%%%%%%%%%

%%%%%%%%%%%%%%%%%%%%
%%%%%%%%%%%%%%%%%%%

\subsection{Results}
\label{sec:results}

%NEXT PARAGRAPH IS NEW AND THE SECOND PARAGRAPH IS REWRITTEN SLIGHTLY MRD052218

We now have everything in place to apply the theorems and corollaries of Subsection \ref{sec:bt}.
Thus, we can look at the numerical data in Figure \ref{fig:results} to see the relation between
these simulated results and the theory. These results were generated by taking samples of data from the
database generated in Subsection \ref{sec:datasensscen}. Each sample was processed with an 8192 
point DFT. A total of 1024 samples for each of the three different engine states was chosen at random.
The three states were the following: 
normal operation of engine $E_1$, gear fault in engine $E_1$, and complete failure of engine $E_1$. 

Figure \ref{fig:results}  shows the results of applying the process described in 
Subsection \ref{sec:sim} to the data base generated in 
Subsections \ref{sec:data} and \ref{sec:datasensscen}. We consider data from the data base
for 12 different conditions: the 3 different engine conditions for $E_1$ described above
(normal operation, gear fault, and complete engine failure) and 2 different sensor conditions 
for $s_1$ ($s_1$ in good operating condition, and $s_1$ failure) under the 2 conditions of low background
noise and high background noise. For each of these combinations of conditions we report the percentage 
of {\it correct detections}. For the case of complete engine failure in the low background noise environment
we report 2 numbers for the frame detector: the first (66 and 54) are strictly engine failure reports, while the 
second (99 and 99) are
combined engine failure and engine fault reports. 

%ADD THIS SENTENCE TO THE END OF THE PRECEEDING PARAGRAPH? MRD052218:

The inclusion of additional sensor results in the low noise 
background tests has caused the frame based approach to confuse fault conditions with failure conditions. Often, 
multiple fault conditions would be reported. 

{\bf Good (operational) sensor data from Figure \ref{fig:results}.}
Theorems \ref{thm:basismapping} and \ref{thm:framemapping} guarantee that the image
of certain large sets will form a basis or a frame, respectively, in health space. In either case, 
we have the ability to span
${\mathbb R}^{28}$ and, therefore, our detector should do a good job of classifying the engine
condition. Figure \ref{fig:results} shows that when all sensors are operational (Good), normal operation of the 
engine is correctly identified by both the basis and the frame 
mappings in both low and high noise environments. This is reflected by the $100\%$ 
listed in all 4 possible places of the first row of Figure \ref{fig:results}.
Similarly, both basis and frame processing 
for operational (Good) sensors correctly identify fault conditions. This is
reflected by the $100\%$ 
listed in all 4 possible places of the third row of Figure \ref{fig:results}.
The situation is less uniform at identifying engine failure when all sensors are operational (Good).
This is the content of the percentages listed in the 4 possible places of the fifth row of Figure \ref{fig:results}.
When the ambient noise in the system is low, basis processing had a $90\%$
success rate at identifying engine failure. On the other hand, in the same low noise case, 
the frame mapping sometimes 
reported an engine fault, or multiple engine faults, when the engine had actually failed. 
This is reflected by the $66\%$ that is listed. The reason for this is not surprising
given the additional noise contributed by the combination of sensors, and
the sensitivity of frames in integrating noise into its analysis. 
The $99\%$ that is listed is the percentage of detecting a fault or engine failure.
%Clearly, one has to use
%the inherent redundancy in frame analysis to push these noises from its output.
%and illustrates an area for further
%study and improvement.
In the high noise environment, basis and frame processing
have an excellent success rate, $98\%$ and $99\%$, respectively,
at identifying engine failure when all sensors are operational (Good).

{\bf Failed sensor data from Figure \ref{fig:results}.} The situation is different when a sensor fails. 
Corollary \ref{cor:basismapping} implies that in the case of a 
basis mapping, there will not necessarily be a corresponding basis in health space; whereas 
Corollary \ref{cor:framemapping} implies that in the case of a frame mapping, there will be
a corresponding frame in health space. The 
results in Figure \ref{fig:results} show that when a sensor fails (Fail), 
 the basis mapping 
always indicates an engine failure. This is reflected by the two $0\%$ listings in the second
row of Figure \ref{fig:results}. Of course, sensor failure does not imply engine failure.
On the other hand, when a sensor fails,
the frame mapping still correctly 
identifies a normal working engine. This is reflected by the $100\%$  and $89\%$ listings in
the second row of Figure \ref{fig:results}.
Further, when a sensor fails, the frame mapping can still distinguish a
normal working engine from a fault or an engine failure, while the basis mapping cannot.
This is seen in the fourth line of Figure \ref{fig:results}, where the basis mapping reports $0\%$ 
fault detection while the frame mapping reports $100\%$ in the low noise case and $63\%$
in the high noise case.  Of course, the basis mapping will report an engine failure in all cases when
the sensor fails, and in the case of an actual engine failure, it will be correct. This is seen in the two 
$100\%$ entries in the last row. As noted above, the frame mapping has some trouble distinguishing 
between a fault and an engine failure in the low noise case (the $54\%$ in the bottom row), but does
recognize that there is a problem (combined fault and engine failure reports given by the $99\%$ in the 
bottom row). In the high noise case the frame mapping reports engine failure 
correctly $98\%$ of the time. 

%REWRITE PREVIOUS SENTENCE AND EXPLAIN DATA.

%IS THAT BETTER?  ALSO, IS THE FOLLOWING REMARK STILL NEEDED? 
%AND ARE THE COMMENTS STILL RELEVANT? MRD052218

%\begin{rem}
%\label{rem:sensfail}
Thus, when a sensor
fails, the frame mapping can still give some data about the engine. The
basis mapping, however, cannot distinguish between a sensor failure and catastrophic
engine failure. 
In high noise environments, we note that the performance of the frame
mapping degrades due to noise, but it is still useful, while the basis mapping is not. 
%\end{rem}

%WHERE DO WE DESCRIBE SPECIFIC  DETECTORS? IS IT TRUE THAT
%The percentages in Figure \ref{fig:results} are based on repeating the experiment $2^{10}$ times?
%IF SO WE HAVE TO INSERT THIS. WHERE? PROBABLY IN THE 2ND PARAGRAPH
%OF THIS RESULTS SUBSECTION. BE QUANTITATIVE ABOUT LOW AND HIGH NOISE IN
%FIGURE \ref{fig:results}. ARE WE TALKING ABOUT NOISE IN THE WHOLE  SYSTEM?

%%%%%%%%%%%%%%%%
\begin{figure}[htbp]
\centering
\includegraphics
%[height=0.2\paperheight]{airplane_marked}
{}
\caption{Detection and false alarms}
\label{fig:det_fa}
\end{figure}

%%%%%%%%%%%%%%%%%%%
%%%%%%%%%%%%%%%%%%%%

\subsection{SNR considerations}
\label{sec:snr}

%\begin{example}[Value of basis and frame mappings in terms of SNR]

The analysis in Subsection \ref{sec:results} illustrates some of the features of the frame and basis approaches 
at a fixed SNR. If we vary the SNR 
we can gain some insights into the different specific behaviors of each approach. 
Figure \ref{fig:det_fa} shows what 
happens to the detector's ability to determine if there has been a catastrophic 
failure as the SNR varies from -20 dB to 
0 dB.

In each of the four cases in Figure
\ref{fig:det_fa} we have assumed all of the engines are working normally. 
Since the input 
data is for a properly working engine, 
all failure and fault detections are false alarms as the result of noise. We 
only consider statistics 
for normal engine operation and 
catastrophic failure. We note that fault detections will count as neither. Further, 
since all failure and fault detections are the result of noise,
we can quantify the performance of each approach as a function 
of SNR, that is, we can plot the probability of correct detection and the probability of false alarm as 
functions of SNR in both the cases for properly working sensors and for sensor $s_1$ failure. 
We note that for these results, we have changed the mixing matrix,  increasing the levels at 
which sensor $s_j$ will hear engine $E_{\ell}$, 
for ${\ell} \neq  j$, from 10dB down to 5dB. As before we are only looking at 
results for engine $E_1$. 
Figure \ref{fig:det_fa}a shows the correct detections when all sensors are working. 
When the SNR is high, both the frame 
and basis approaches give perfect detection, and, as the SNR declines, both show a 
decline in correct detections. 
When the 
SNR falls below -18dB we see that the basis approach has a higher probability 
than the frame approach of 
detecting that the engine is performing 
correctly. This is expected since by looking at other sensors, the frame approach 
will be forced to contend with additional noise.  
It is interesting and somewhat counter-intuitive that the frame approach has a higher probability 
of correct detection 
between -18 and -12 dB SNR. It turns out that as the noise increases, the basis approach 
has a greater probability of detecting a fault due to a short term noise spike while the frame approach will average out large noise spikes. However, at low SNR the basis approach detects fewer failures than the frame approach. 
This can be seen in Figure \ref{fig:det_fa}b which shows the false alarms for both approaches. 
Here it is clear that the elevated noise levels cause the frame approach to report more false alarms.  

When sensor $s_1$ fails, however, the graphs look significantly different. Figures \ref{fig:det_fa}c 
and  \ref{fig:det_fa}d show the correct detections and false alarms when sensor $s_1$ 
has failed. Again the data studied is for normal operation, that is, a properly working engine. 
The frame approach performs well when the SNR is high, giving perfect detection with no false alarms. 
As the SNR declines we see that for the frame approach, the correct detections decline and 
the false alarms increase somewhat 
faster than when sensor $s_1$ is working. This is not surprising since effective signal strength 
is substantially lower without $s_1$. The basis approach, however, fails completely as it constantly 
detects a catastrophic failure regardless of the input data or SNR. This is where the ability of the frame 
approach to utilize the overdetermined nature of frames comes into play and gives a substantial 
performance improvement.

%%%%%%%%%%%%%%%%%%%%%%%%%%%%
%%%%%%%%%%%%%%%%%%%%%%%%%%%%
%%%%%%%%%%%%%%%%%%%%%%%%%%%%%%

\section{Epilogue and future}
\label{sec:epilogue}
%There are natural problems to pursue.

{\it Reactive sensing} is an evaluative process to determine the behavior
of an object when primary sensor sources of information about the object become
unavailable, so that any information must be obtained from the intelligent use
of available secondary sensor sources. With regard to a host of applications, e.g.,
Section \ref{sec:ex}, loosely connected in terms of machine health, our approach
was in terms of physical and mathematical modeling, which in fact interleave 
and are equivalent.

Our initial, main observation with regard to {\it physical modeling}
was noting the centrality of 
combining and relating the {\it volume} of  a scene and the {\it sensitivity} of sensors
evaluating that volume, e.g., see Definition \ref{defn:sep} and Example \ref{ex:dft}
This, in turn, led to the idea of factoring sensor data, and formulating useful
notions for this physical modeling, viz., radiative, dominant, and harmonious
sensing scenarios, defined in Definitions \ref{defn:raddom} and \ref{defn:disjharm} and 
analyzed throughout Section \ref{sec:mathmodel}.
This factoring of sensor data, coupled with the need to implement our
physical modeling led to the development of
multiplicative frames, that is the centerpiece of our {\it mathematical modeling}.
The theory for this development required the content of Section \ref{sec:frames}.

Reactive sensing, integrating our physical and mathematical modeling,
fits naturally into the context of spatial {\it super-resolution}. In fact, as noted in Subsection \ref{sec:back},
the secondary 
sensor sources can be 
considered analogous to
the role of obtaining a high resolution (HR) image from observed multiple low-resolution (LR) images. 
In classical super-resolution, the multiple LR images represent different ``snapshots" of the same scene, 
and can be combined
to give the desire de facto HR image. In our case, the LR images correspond to secondary
sensors and the HR image corresponds to the primary object, that could be disabled or 
whose primary sensor
is not functioning.

Our theory of reactive sensing is given in Section \ref{sec:rst}. First, we establish the necessity of 
frames, as opposed to bases, in order to model, both mathematically and effectively, 
the role of secondary sensor sources, see Subsection
\ref{sec:bmfm}. Then, in Subsection \ref{sec:bt}, we prove the fundamental theorems, that are
the basis of our proposed implementation. 

Health space $\CC^n$ (Definition \ref{defn:sep}) and dimension
reduction are part and parcel of the same idea in reactive sensing with regard to this implementation.
In Subsections \ref{sec:results} and \ref{sec:snr}, the theory of Section \ref{sec:rst} 
is applied to a DFT turbine simulation.

%Section \ref{sec:datasim} is relatively elementary.
The experimental results in Section \ref{sec:datasim} show the efficacy,
with some natural qualifications, of frame based reactive sensing 
theory in detecting engine conditions correctly even in the event of sensor failure. 
The following are our immediate reactive sensing tasks.

\begin{itemize}

\item 
Apply reactive sensing theory to a
spectrum of real data sets, see, e.g., Section \ref{sec:ex}, and 
formulate realistic, useful reactive sensing implementations
for such sets. 
%This is the reason we have included our brief Appendix, since 
This will require more sophisticated dimension reduction technology
than PCA, see, e.g., \cite{BenCza2015}.

\item Evaluate more deeply the effect of SNR on reactive sensing for
scenarios with and without sensor failure. As our simulations have shown, this is a
challenging task.

\item Develop multiplicative frames as a full-fledged mathematical theory;
and especially analyze its relation with frame multiplication theory
and group frames, see \cite{AndBenDon2019}.

\end{itemize}

%%%%%%%%%%%%%%%%%%%%%%%%%%%%
%%%%%%%%%%%%%%%%%%%%%%%%%%%%
%%%%%%%%%%%%%%%%%%%%%%%%%%%
\section{Acknowledgements}

The authors gratefully acknowledge the support of MURI-ARO Grant W911NF-09-1-0383. The
first named author also gratefully acknowledges the support of 
DTRA Grant HDTRA1-13-1-0015, and 
ARO Grants W911NF-15-1-0112,  
W911NF-16-1-0008, and W911NF-17-1-0014. This litany of good fortune involved other projects,
but also reflects the elephantine gestation period it took for the authors' original
sensor factorization idea to crystalize.
Both authors have also benefitted from insightful observations,
technical discussions, interest, and support by the 
following:
Drs. David Colella and David Clites (The MITRE Corporation);
Kevin Hencke (Systems Engineering Group, Inc.
of the Telephonics Corporation, and formerly of the Norbert Wiener Center (NWC));
Tej Phool (MimoCloud);
Professors Richard Baraniuk (ECE, Rice
University), 
Alexander Barg (ECE, University of Maryland), Rama Chellapa (ECE, University of Maryland), 
and Wojciech
Czaja (Mathematics, University of Maryland);
Drs. Alfredo Nava-Tudela, Matthew
Begu{\'e} (formerly of the NWC), and Celia Rees Evans. Drs. Nava-Tudela and Evans
are Scientific Development Officers of the NWC. Finally, we are especially
appreciative of the continued confidence of Drs. Liyi Dai, Joseph Myers, and 
Jay Wilkins Jr.

%%%%%%%%%%%%%%%%%%%%%%%%%%%%%%%%%%%%%%%%%%%%%%%%%%%%%%%%%%

\bibliographystyle{abbrv}
\bibliography{2018-10-06JBbib}

\end{document}